\numberwithin{equation}{section}
\numberwithin{figure}{section}
\let\footnote=\endnote
\theoremstyle{plain}
\newtheorem{thm}{\protect\theoremname}
\theoremstyle{definition}
\newtheorem{defn}[thm]{\protect\definitionname}
\theoremstyle{remark}
\newtheorem{claim}[thm]{\protect\claimname}
\theoremstyle{remark}
\newtheorem{rem}[thm]{\protect\remarkname}
\theoremstyle{plain}
\newtheorem{prop}[thm]{\protect\propositionname}
\theoremstyle{plain}
\newtheorem{lem}[thm]{\protect\lemmaname}
\providecommand{\claimname}{Claim}
\providecommand{\definitionname}{Definition}
\providecommand{\lemmaname}{Lemma}
\providecommand{\propositionname}{Proposition}
\providecommand{\remarkname}{Remark}
\providecommand{\theoremname}{Theorem}
\begin{document}
\title{Smart Proofs via Smart Contracts: \\
Succinct and Informative Mathematical Derivations \\
via Decentralized Markets }
\author{Sylvain Carré$^{\spadesuit}$$^{*}$}
\author{Franck Gabriel$^{\dagger}$$^{*}$}
\author{Clément Hongler$^{\dagger}$$^{*}$}
\author{Gustavo Lacerda}
\author{Gloria Capano}
\date{12 October 2021}
\thanks{$^{*}$Equal Contribution}
\thanks{$^{\spadesuit}$Université Paris-Dauphine, Department of Economics}
\thanks{$^{\dagger}$École Polytechnique Fédérale de Lausanne, Institute of
Mathematics, Chair of Statistical Field Theory}
\begin{abstract}
Modern mathematics is built on the idea that a proof should be translatable
into a formal proof, whose validity is an objective question, decidable
by a computer. In practice, however, proofs are informal, succinct,
and omit numerous uninteresting details: their goal is to share insight
among a community of agents. An agent considers a proof valid if they
trust that it could (in principle) be expanded into a machine-verifiable
proof. A proof\textquoteright s validity can thus become a subjective
matter, possibly leading to a debate; if agents\textquoteright{} incentives
are not aligned, it may be hard to reach a consensus. Hence, while
the concept of valid proof is well-defined in principle, the process
to establish a proof\textquoteright s validity is itself a complex
multi-agent problem. 

In this paper, we introduce the SPRIG (Smart Proofs via Recursive
Information Gathering) protocol, which allows agents to propose and
verify succinct and informative proofs in a decentralized fashion;
the trust is established by agents being able to request more details
at steps where they feel there could be problems; debates, if they
arise, need to isolate specific details of proofs; if they persist,
they must go down to machine-level details, where they can be settled
automatically. A structure of fees, bounties, and stakes is set to
incentivize the agents to act in good faith, i.e. to not publish problematic
proofs and to not ask for trivial details. 

We propose a game-theoretic discussion of SPRIG, illustrating how
agents with different types of information interact, leading to a
verification tree with an appropriate level of detail, and to the
invalidation of problematic proofs, and we discuss resilience against
various attacks. We then provide an in-depth treatment of a simplified
model, characterize its equilibria and analytically compute the agents\textquoteright{}
level of trust. 

The SPRIG protocol is designed so that it can run fully autonomously
as a smart contract on a decentralized blockchain platform, without
a need for a central trusted institution. This allows agents to participate
anonymously in the verification debate, being incentivized to contribute
with their information. The smart contract mediates all the interactions
between the agents, and settles debates on the validity of proofs,
and guarantees that bounties and stakes are paid as specified by the
protocol. 

SPRIG also allows for a number of other applications, in particular
the issuance of bounties for solving open problems, and the creation
of derivatives markets, enabling agents to inject more information
pertaining to mathematical proofs.

In addition to presenting the key novelties of the SPRIG protocol,
this whitepaper is also designed as a brief survey. In order to reach
a broader audience, we recall a variety of historical and technical
details.
\end{abstract}

\maketitle

\section{\label{sec:introduction}Introduction}

\subsection{\label{subsec:mathematical-derivation}Mathematical Proofs}

Mathematical derivation, also sometimes called logical reasoning,
rigorous derivation, formal rational reasoning, or mathematical proof
is a process that allows one to derive mathematical statements from
other mathematical statements. By relying on a collection of statements
accepted to be fundamentally true, called axioms, this mechanism allows
one to derive new mathematical truths, called proven statements. Depending
on the context, such proven statements are also called propositions,
theorems (when they are deemed interesting), or lemmas (when they
are ancillary in the derivation of theorems); the derivation leading
to a statement (starting from another statement, assumed or already
established to be true) is called its proof. 

This form of reasoning is at the heart of rational thinking (in mathematics,
all the sciences, and way beyond), crucially leading to:
\begin{itemize}
\item One's trust in the truth of statements derived.
\item One's insight into the reasons why such statements hold true.
\end{itemize}
These two aspects of the question are discussed in Sections \ref{subsec:proofs-for-trust}
and \ref{subsec:proofs-for-explanation} below. 

\subsubsection{\label{subsec:proofs-for-trust}Proofs as a Means of Trust}

In many ways, mathematically proven statements achieve the highest
possible level of certainty one can have. The validity of an established
theorem (say, for instance, Euclid's theorem on the existence of an
infinite number of primes) does not  fluctuate with the evolution
of knowledge. Conversely, for statements for which no known proof
exists, the trust in their validity only grows progressively with
empirical or heuristic evidence support, and it never quite reaches
that of mathematically proven statements. 

Once the trust in a statement is established via a proof, the statement
can be used as a basis for establishing trust in new statements: the
proofs of all the statements obtained this way can (if needed) be
`unrolled' down to the axioms. Hence, by propagation, various agents
are able to build together a set of trusted statements relying upon
each other (a `tower of knowledge') without necessarily knowing all
the details of all the proofs. This is the way modern mathematics
is built. 

More recently, due to the development of computer technologies, proofs
have become fundamental not only for the construction of mathematics
and science, but they have also become objects manipulated by e.g.
cryptography or program verification systems. Most digital interactions
are now mediated by cryptographic primitives, which aim in particular
at establishing trust in confidentiality and authenticity: for instance,
a party can prove their identity by providing a digital signature
(more precisely, the mathematically proven statement is: either the
party knows a secret encryption key, or they are extremely lucky,
or a certain widely believed algorithmic hardness assumption is in
fact wrong). Similarly, for critical systems, there is often a need
for a proof that a piece of code meets some specification, such as
termination or type safety.

Of course, in any case, the trust in a mathematically proven statement
relies on the trust that the underlying proof is indeed correct, i.e.
that:
\begin{itemize}
\item The derivation rules are clearly defined, computable and consistent. 
\item They are applied uniformly throughout the proof of the statement,
as verified by computers or other mathematicians.
\item The premises and axioms underlying the reasoning can be trusted.
\end{itemize}
These issues and the underlying challenges are briefly discussed in
Section \ref{subsec:nature-of-mathematical-derivations} below.

\subsubsection{\label{subsec:proofs-for-explanation}Proofs as a Means of Explanation
and Insight }

Arguably more important to most mathematicians' minds than the idea
that a mathematical derivation answers the question `how do we know
this statement is true?', is the resulting insight that it provides
into the nature of the underlying problem (see e.g. \cite{thurston}). 

For instance, there are famous statements such as Goldbach's conjecture
that are already widely believed to be true (as supported by various
heuristics and numerical verifications), despite being unproven. Finding
a proof of Goldbach's conjecture would be considered a breakthrough
not so much because it would tell us that it is true (which would
hardly surprise anyone), but because it would tell us why, and because
it may give some new deep insight into the nature of prime numbers. 

The idea that proofs give insight is arguably a central driving force
behind mathematical teaching and exposition: it motivates the great
effort that goes into the presentation of theorems' proofs. Similarly,
the work towards finding new, simpler, more intuitive, or just different
proofs of existing theorems is greatly valued: multiple proofs of
the same statement can bring insights on various issues, a complementary
value \cite{aigner-ziegler}, or enhance one's intuition of a result. 

Conversely, certain proofs appear to bring little insight because
of their complexity (for instance, computer-generated proofs such
as the one of the four-color theorem \cite{appel-haken,thurston}).
While the trust in such proofs is very high, they are considered somewhat
unsatisfactory by many mathematicians, as they cannot be comprehended
as well as more `elegant' proofs. 

Often, insightful proofs appear to be centered around a limited number
of new ideas. In fact, this seems to be how educated agents should
convince each other: convey credence about statements through the
transmission of a limited amount of relevant information.

A balance between this idea of proof and that of Section \ref{subsec:proofs-for-trust},
i.e. between `a short collection of insightful statements' and `the
list of all the statements needed to establish perfect trust', is
in principle possible, though somewhat delicate, as discussed in Section
\ref{subsec:nature-of-mathematical-derivations} below. 

The SPRIG (Smart Proofs via Recursive Information Gathering) protocol
allows, for a system of agents with various levels of interest and
information, to reach this balance. 

\subsection{\label{subsec:nature-of-mathematical-derivations}Nature of Mathematical
Derivations}

In this subsection, we describe the modern view of the idea of mathematical
derivation, as it emerged at the beginning of the 20th century, that
now serves as the basis of all contemporary mathematics. This view
also lies at the heart of all computer-based proof systems (Section
\ref{subsec:recent-developments-in-computer-based-proofs}) and has,
in recent years, faced a number of practical challenges (Section \ref{subsec:challenges-in-modern-mathematical-derivations}
).

\subsubsection{Hilbert's Program and Logicism}

The clarification of the foundations of mathematics, as advocated
by Hilbert's program, progressed greatly in the early 20th century.
One key step was the idea of logicism \cite{korner}: that mathematical
statements should be written in a formal language (or unambiguously
translatable into one), and that a mathematical derivation ought to
consist of a sequence of well-defined manipulations of such statements.
This is naturally connected to the then-emerging field of computer
science: mathematical derivations, written in the appropriate language,
ought to be verifiable by a computer program. 

These new foundations led to two key developments: 
\begin{itemize}
\item On the one hand, the emergence of the study of valid formal mathematical
derivations as central objects, as a field in itself (namely proof
theory), primarily associated with mathematical logic and theoretical
computer science (see Section \ref{subsec:logic-view-of-mathematical-proofs}
below);
\item On the other hand, the emergence of modern mathematics, with more
rigorous and standardized definitions, theorems, and proofs, implicitly
relying on the new foundations (see Section \ref{subsec:modern-mathematical-derivations-in-practice}
below). 
\end{itemize}
Interestingly, despite immense progress in computer technology, these
two developments have only seen little interaction. Arguably, this
is due to practical (rather than fundamental) reasons, and recent
challenges and developments suggest that this has led to a somewhat
unfortunate situation (see Section \ref{subsec:recent-developments-in-computer-based-proofs}
below). The objective of the present paper is indeed to present a
new way to make this interaction practical and fruitful. 

\subsubsection{\label{subsec:logic-view-of-mathematical-proofs}Logic View of Mathematical
Derivations}
\begin{quote}
\textquotedbl The development of mathematics toward greater precision
has led, as is well known, to the formalization of large tracts of
it, so that one can prove any theorem using nothing but a few mechanical
rules\textquotedbl . -- K. Gödel.
\end{quote}
The controversies of the late 19th century led to convergence on the
foundational theory of mathematics, by which the disagreements could
be resolved. Ideas of Frege, Hilbert, Gödel, Turing, and others, led
to a definition of formal proof, connected with the notion of computer
(itself defined in terms of Turing machines; see e.g. \cite{widgerson}
for a modern account). 
\begin{defn}
A formal proof or formal derivation, or machine-level proof, or proof-object
is a finite sequence of sentences in a formal language, each of which
is an axiom, an assumption, or follows from the preceding sentences
in the sequence, by a rule of inference. The validity of the application
of the rules of inference can be checked by a computer. 
\end{defn}

In practice, most of modern mathematics, by convention, relies on
a standard set of axioms (based on the Zermelo-Fraenkel set theory,
with some version of the Axiom of Choice), and on higher-order logic.
A number of computer proof systems implement such a framework (see
Section \ref{subsec:automated-proofs} below). 

Besides the important clarifications that they bring, the strength
of formal proofs is that the verification of their validity is completely
mechanical. As a result, they can be checked reliably by computers.
In principle, computers can also then be used to try to produce proofs,
as was already suggested by Gödel in his Lost Letter to Von Neumann
\cite{lipton}. The use of a formal language also allows in principle
various areas of mathematics to communicate with each other, by allowing
them to inter-operate unambiguously.

Unfortunately, formal proofs are extremely long in practice \cite{wiedijk}
and hard to produce, even with modern computer proof assistants (see
Section \ref{subsec:automated-proofs} below); and most importantly,
they are often quite different from the way mathematicians think of
proofs (a formal proof may bring little insight to a mathematician).
Still, the stronger rigor of such proofs has influenced the shaping
of modern mathematical derivations as mathematicians use them today,
as described in Section \ref{subsec:modern-mathematical-derivations-in-practice}
below: they must stay `in the back of mathematicians' minds', as they
write and communicate their proofs, something that the SPRIG protocol
allows us to formalize. 

\subsubsection{\label{subsec:modern-mathematical-derivations-in-practice}Modern
Mathematical Derivations in Practice}

The core of mathematical activity has seen little change over the
last century, and the focus of mathematics has shifted away from formal
verification and foundational issues, in favor of introducing new
objects, discovering new ideas, and solving interesting problems.
At the same time, mathematics still emphasizes strict rigor: for instance,
heuristic arguments or numerical simulations, however convincing,
are not accepted as being parts of mathematical derivations and proofs,
while, for instance in theoretical physics derivations, they are often
deemed sufficient. 

The following defines what it means for a mathematician to know the
proof of a statement:
\begin{claim}
\label{claim:knowing-how-to-prove-a-statement}A mathematician knows
how to prove a statement rigorously, if they have the confidence in
the following: given access to a corpus of references, they would
be able, if pressed and given enough time, to give details at an arbitrarily
high level in the proof of each statement, down to a computer-checkable
formal level if needed.
\end{claim}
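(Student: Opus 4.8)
The plan is to treat this claim not as a theorem admitting a formal derivation, but as a conceptual characterization to be justified by tying it back to the definition of formal proof recalled above. The target is to argue that the operational criterion on the right-hand side --- confidence that, given references, enough time, and sufficient pressure, one could supply details at arbitrarily high resolution, down to the machine-checkable level --- faithfully captures what practicing mathematicians mean when they say they know how to prove a statement. I would frame the justification as establishing a correspondence in two directions, and then isolate the genuinely delicate ingredient, which is the subjective notion of \emph{confidence}.

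First I would establish the forward direction: if a mathematician possesses, in the ordinary working sense, a proof of the statement, then the operational criterion holds. Here I would invoke the premise, standard since Hilbert's program, that any rigorous informal proof is a compression of some underlying formal proof object. Knowing the proof in the working sense means holding a mental representation of this compression; at any step where a reader demands more detail, the mathematician decompresses that step into sub-steps, each itself a compressed proof of an intermediate statement. The point to make precise is that this refinement process is well-founded: every demand for detail replaces a step by strictly simpler sub-steps, so the recursion must terminate --- by the very definition of formal proof --- in a finite sequence of machine-checkable inferences ending at the axioms and rules of inference.

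Next I would address the converse direction, that the criterion is not merely necessary but sufficient: an agent who can always meet any demand for further detail should indeed be credited with knowing the proof. The argument is that the capacity to answer every local challenge, recursively and down to the formal level, is exactly what would be required to assemble the full formal proof object if one were forced to. The informal proof together with this latent capacity to expand it is therefore informationally equivalent to possessing the formal proof, even though the complete object is never written out. This is precisely the balance, alluded to above, between `a short collection of insightful statements' and `the list of all the statements needed to establish perfect trust'.

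The hard part, and the reason the statement is phrased in terms of \emph{confidence} rather than as a theorem, is that no agent can in practice carry out the full decompression, so the criterion can never be verified directly; it can only be asserted with some subjective degree of belief. The delicate step is thus to argue that justified confidence in the expandability of a proof is the operationally meaningful surrogate for actual expandability --- that trust, rather than completed verification, is what mathematical practice in fact rests upon. This is exactly the gap that the SPRIG protocol is designed to exploit and regulate: by letting agents request details at suspect steps, and forcing debates down toward the machine level only when necessary, it converts the individually unverifiable criterion of this claim into a multi-agent process whose outcomes can be adjudicated.
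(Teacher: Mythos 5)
The paper offers no proof of this statement at all: it is introduced with ``The following defines what it means for a mathematician to know the proof of a statement'' and functions as a working definition, justified only by the surrounding discussion of formal proofs, structured proofs, and the trust--insight balance. Your proposal correctly recognizes this definitional status and your gloss --- informal proofs as compressions of formal proof objects, expandability on demand as the operational surrogate for possession of the formal proof, and SPRIG as the mechanism that regulates the residual subjective confidence --- faithfully tracks the paper's own framing, so it is consistent with the paper rather than divergent from it.
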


The working definition of `a proof' that a modern mathematical text
uses can be phrased as follows:
\begin{claim}
\label{claim:written-proof-definition}A written proof consists of
\begin{itemize}
\item A Proof Sketch $\mathbf{P}=\mathbf{D},\mathbf{S}_{1},\ldots,\mathbf{S}_{k}$,
consisting of a collection of definitions and references $\mathbf{D}$
and a list of statements (lemmas, propositions, theorems, remarks)
$\mathbf{S}_{1},\ldots,\mathbf{S}_{k}$ using symbols in $\mathbf{D}$,
where each $\mathbf{S}_{j}$ is allowed to assume that $\mathbf{S}_{i}$
holds true for $i<j$.
\item A text in free format $\mathbf{F}$ (including proof arguments, drawings,
informal explanations, etc.).
\end{itemize}
such that it is claimed that the proof is complete and valid in the
eyes of mathematicians (of the given audience), in the following sense:
\end{claim}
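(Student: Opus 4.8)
The plan is to complete the statement by specifying the recursive sense in which the pair $(\mathbf{P},\mathbf{F})$ constitutes a complete and valid proof, and then to argue that this criterion faithfully captures Claim \ref{claim:knowing-how-to-prove-a-statement}. Since the object being characterized is a working definition rather than a deductive consequence, the task is to pin down the condition standing behind the colon and to check that it is both well-posed and consistent with mathematical practice.

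First I would make the criterion recursive. For each statement $\mathbf{S}_{j}$, the free-format text $\mathbf{F}$ should contain an argument deriving $\mathbf{S}_{j}$ from the definitions and references $\mathbf{D}$ together with the preceding statements $\mathbf{S}_{1},\ldots,\mathbf{S}_{j-1}$, in such a way that every mathematician of the given audience has the confidence described in Claim \ref{claim:knowing-how-to-prove-a-statement}: namely, that if pressed, each step of this argument could itself be expanded into a finer written proof, whose own statements are again justified in $\mathbf{F}$ or are accepted as requiring no further detail. I would then define validity as the existence of a well-founded tree of such expansions, whose root is the target statement, whose children at each node are the intermediate statements that node's argument invokes, and every branch of which terminates either at an axiom or definition in $\mathbf{D}$, or at a step a computer can check directly.

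Second, I would verify that this tree is well-defined. The dependency order $i<j$ forbids circular reasoning, so the recursion on the assumed statements is acyclic; insisting in addition that each branch be finite forces every line of reasoning to bottom out at machine-level, computer-checkable formal proofs. This is exactly what ties the informal notion back to the formal one: a written proof is complete and valid precisely when a finite, acyclic expansion tree exists in principle, even though $\mathbf{F}$ itself displays only its top few layers.

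The main obstacle will be the audience-dependence and inherent subjectivity of the termination condition. Whether a given step requires no further detail is not an absolute fact but a judgement that varies across mathematicians, so the criterion being defined is relative to the stated audience rather than canonical; different agents may disagree about where a branch is allowed to stop, and hence about whether the expansion tree exists at all. Reconciling this subjective stopping rule with the objective, machine-checkable ground at the leaves is the genuinely delicate point, and it is precisely the tension that the SPRIG protocol is designed to resolve by letting agents request expansion at contested nodes and pushing disputed branches down to the automatically decidable formal level.
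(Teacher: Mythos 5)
Your completion is essentially the paper's own: the condition standing behind the colon is exactly Claim \ref{claim:complete-and-valid}, namely that a mathematician of the given audience, using the text $\mathbf{F}$ and standard mathematical knowledge, knows how to prove each $\mathbf{S}_{j}$ assuming $\mathbf{S}_{1},\ldots,\mathbf{S}_{j-1}$, and knows how to prove $\mathbf{S}$ from $\mathbf{S}_{1},\ldots,\mathbf{S}_{k}$, where ``knows how to prove'' is precisely the confidence-to-expand-down-to-machine-level notion of Claim \ref{claim:knowing-how-to-prove-a-statement}; your recursive criterion says this, and your well-founded expansion tree is the paper's own structured-proof format, spelled out later in Section \ref{subsec:structured-proofs} (root the target statement, children the invoked intermediate statements, leaves machine-verifiable, acyclicity guaranteed by the $i<j$ dependency order). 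The one genuine difference is where validity lives: you define it as the \emph{objective existence} of a finite expansion tree ``in principle'', whereas the paper deliberately keeps the criterion subjective --- complete and valid \emph{in the eye of a mathematician}, i.e. a credence that the expansion could be produced, not a claim that the tree exists as a mathematical fact. Your objective formulation is cleaner but quietly erases the very subjectivity the paper is setting up (and which motivates SPRIG); you recover the paper's intent in your final paragraph by relativizing the stopping rule to the audience and identifying the contested-node expansion as SPRIG's job, so the two formulations agree in substance. A minor structural note: the paper states the derivation of $\mathbf{S}$ from the $\mathbf{S}_{j}$'s as a separate second clause, while you absorb it into the tree root; that is equivalent and fine.
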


\begin{claim}
\label{claim:complete-and-valid}A written proof $\left(\mathbf{P},\mathbf{F}\right)$
of a statement $\mathbf{S}$, with $\mathbf{P}=\mathbf{D},\mathbf{S}_{1},\ldots,\mathbf{S}_{k}$
is considered complete and valid in the eye of a mathematician if,
by using using the text from $\mathbf{F}$ and standard mathematical
knowledge if needed, they know how to prove: 
\begin{itemize}
\item for each $j=1,\ldots,k$ the statement $\mathbf{S}_{j}$ assuming
(if needed) $\mathbf{S}_{1},\ldots,\mathbf{S}_{j-1}$;
\item the statement $\mathbf{S}$ from the statements $\mathbf{S}_{j}$,
where $j=1,\ldots,k$. 
\end{itemize}
\end{claim}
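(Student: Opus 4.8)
The plan is to recognize that Claim~\ref{claim:complete-and-valid} is a \emph{working definition} of validity rather than an independent assertion, so the task is not a deductive derivation but a demonstration that this notion is well-posed and that it reduces, in a consistent manner, to the machine-level notion of formal proof introduced above. Accordingly, I would organize the argument in three stages: first, showing that the recursive structure of the definition is well-founded; second, reducing each local verification obligation to Claim~\ref{claim:knowing-how-to-prove-a-statement}; and third, establishing that the local obligations compose to yield an in-principle formalizable derivation of $\mathbf{S}$.

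First I would verify well-foundedness. By Claim~\ref{claim:written-proof-definition}, the proof sketch $\mathbf{P}=\mathbf{D},\mathbf{S}_1,\ldots,\mathbf{S}_k$ lists its statements in an order in which $\mathbf{S}_j$ may invoke only $\mathbf{S}_1,\ldots,\mathbf{S}_{j-1}$. This strictly increasing dependency order forbids circular reasoning: no statement is used, even indirectly, in its own justification. Hence the verification obligations can be discharged in the order $j=1,\ldots,k$, each resting only on obligations already discharged, and the process terminates after finitely many steps since $k$ is finite. Next I would reduce each obligation to the standard of Claim~\ref{claim:knowing-how-to-prove-a-statement}: for fixed $j$, the condition that the mathematician, using $\mathbf{F}$ and standard knowledge, knows how to prove $\mathbf{S}_j$ assuming $\mathbf{S}_1,\ldots,\mathbf{S}_{j-1}$ means precisely that they are confident the argument could, if pressed, be expanded down to the computer-checkable formal level; the same reduction applies to the final obligation, the derivation of $\mathbf{S}$ from $\mathbf{S}_1,\ldots,\mathbf{S}_k$.

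The hard part, and the only place where genuine content enters, is the third stage: arguing that these locally formalizable steps \emph{compose} into a single formalizable derivation of $\mathbf{S}$ from the axioms. This is where the informal relation `knows how to prove' must be shown to be transitive in the appropriate sense, so that concatenating the formal expansion of each $\mathbf{S}_j$ (with its assumed premises discharged by the earlier expansions) and then appending the formal expansion of $\mathbf{S}$ from the $\mathbf{S}_j$ yields a finite sequence of sentences meeting the definition of a formal proof given above. I would make this precise by an induction along the dependency order, substituting each assumed premise $\mathbf{S}_i$ with $i<j$ by its already-constructed sub-derivation, so that the concatenation contains no undischarged assumptions beyond the axioms and the definitions in $\mathbf{D}$. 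The subtlety to watch for is that this composition is guaranteed only \emph{in principle}: the definition captures the mathematician's justified confidence that such an expansion exists, not an actually exhibited object, which is precisely the gap that the SPRIG protocol is later designed to manage.
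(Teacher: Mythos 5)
You correctly identify that Claim \ref{claim:complete-and-valid} is a working definition rather than a theorem: the paper offers no proof of it, and none is expected, so the only substantive task is the well-posedness check you carry out. Your three stages (well-foundedness via the increasing order of the $\mathbf{S}_{j}$, reduction of each local obligation to the standard of Claim \ref{claim:knowing-how-to-prove-a-statement}, and composition by substituting sub-derivations along the dependency order) match the paper's own surrounding discussion---in particular the remark immediately following the claim, which observes that $\mathbf{S}_{1},\ldots,\mathbf{S}_{k},\mathbf{S}$ form a directed acyclic graph admitting a topological order, and the ``unrolling down to the axioms'' picture of Section \ref{subsec:proofs-for-trust}. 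Your closing caveat, that the composed formal derivation exists only in principle as justified confidence rather than as an exhibited object, is exactly the paper's framing and is the gap the SPRIG protocol is then designed to manage.
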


\begin{rem}
Another way to phrase the structure of the proof sketch $\mathbf{P}$
is to say that the statements $\mathbf{S}_{1},\ldots,\mathbf{S}_{k},\mathbf{S}$
form a directed acyclic graph of dependence, with root $\mathbf{S}$
(where a statement points to the statements it assumes). The order
in which the parts $\mathbf{P}$ are presented in a paper may not
follow the order here, but the vertices of any directed acyclic graph
can be ordered so the vertex $i\to j$ implies $i>j$. 
\end{rem}

\begin{rem}
In mathematical papers, definitions-statements may appear (for instance,
defining Riemann's $\zeta$ function may require a proof of convergence);
see Section \ref{subsec:claim-of-proof-format} for a discussion of
how such definitions-statements can be recast in the format of proof
sketches above. 
\end{rem}

\begin{rem}
Proofs by contradictions can be written in the proof sketch format
as above (see Section \ref{subsec:claim-of-proof-format}).
\end{rem}

The free part $\mathbf{F}$ of a proof is what is sometimes called
a Social Proof \cite{buss}, and the proof sketch part $\mathbf{P}$
should be directly translatable into a collection of formal statements,
sometimes called a Formal Proof Sketch \cite{wiedijk}. In this article,
we will consider that the proof sketches are always formal. It should
be noted that in mathematical articles, these two parts are sometimes
not clearly separated; also, in some cases, the entire proof is a
social proof. 

\subsubsection{\label{subsec:amount-of-details-in-a-proof}Amount of Detail in Proofs}

As explained in Section \ref{subsec:modern-mathematical-derivations-in-practice}
above, `being convinced by a proof given as a mathematical text' means:
knowing enough to be confident that one would be able to produce all
the details, if needed, while at the same time knowing that this most
likely will not be needed. Indeed, it would be so energy-consuming
and uninformative that there would be no point in doing it. In the
language of structured proofs above, every mathematician must build
their own structured proof at a level of detail that they deem satisfying. 

Being a mathematician hence crucially requires a great deal of self-discipline,
in order not to delude oneself into thinking that one knows how to
prove a statement. How to be confident in one's ability to perform
a task (providing machine language proof of new results), that one
will (in all likelihood) never perform?

Moreover, what constitutes a complete proof (in the sense of Section
\ref{subsec:modern-mathematical-derivations-in-practice}) becomes
as a result somewhat subjective, depending on the reader's standards:
the amount of detail required from a student at an exam will typically
be very different from the level of detail in a research paper. 

For research papers, it will depend on the subfield and the journal
and editorial standards: a debate between the authors and referees
can arise, in which the editor is the arbiter.

In any case, determining the relevant amount of detail to provide
in a mathematical paper is a difficult task, which requires a delicate
balance between the need to guarantee the validity of the reasoning,
to limit the pre-requisites and the work on the reader's side, to
stay within page limits, to avoid unnecessary clutter, and to keep
only the essential arguments. 

Within these constraints, there is a lot of room for subjective choices
and writing modern mathematical proofs is largely an art, as discussed
in e.g. \cite{lamport-i,lamport-ii}. The increasing complexity of
proofs involved in contemporary mathematics has led to a number of
challenges, as discussed in Section \ref{subsec:challenges-in-modern-mathematical-derivations}
below. At the same time, the development of computer-based proof systems
offers great promises to help tackle such challenges.

The goal of SPRIG protocol is to unify these two visions of formal
proofs, as verified by computers, and of informal proofs as done by
mathematicians, to leverage the advantages of both (trust and insight,
respectively): SPRIG will allow the agents to inject various levels
of information to reveal a subtree of the proof tree as in Section
\ref{subsec:claim-of-proof-format} below. 

\subsubsection{\label{subsec:structured-proofs}Structured Proofs}

An interactive way of viewing proofs as in Claim \ref{claim:written-proof-definition}
is the following: instead of asking to be able to write down a proof
of the formal statements in the machine-level language directly, one
can think of being able to answer requests for formal details for
each of the statements (assuming perhaps an audience that is more
and more curious into the details); and in our answer, if any requests
for more details arises, one should again be able to provide them.
This way, one should eventually be able to reach (if needed) the machine
level after a reasonable number of steps, using a reasonable amount
of space, keeping an informative structure (and abstracting away the
free part $\mathbf{F}$ of the proof). 

This leads to the following definition of structured proof (written
in a formal language), upon which the proof format of our protocol
is based (see Section \ref{subsec:claim-of-proof-format}). A structured
proof of a statement $\mathbf{S}_{*}$ of level $L\geq1$ consists
of a tree with the following structure:
\begin{itemize}
\item The root (`top-level') is the statement $\mathbf{S}_{*}:\mathbf{A}_{*}\implies\mathbf{C}_{*}$
itself, where $\mathbf{A}_{*}$ includes axioms and accepted statements
used to derive the conclusion $\mathbf{C}_{*}$.
\item For each non-leaf (`high-level') statement $\mathbf{S}:\mathbf{A}\implies\mathbf{C}$,
its children $\left(\mathbf{S}_{j}\right)_{j=1,\ldots,k}$ are statements
$\mathbf{A}_{j}\implies\mathbf{C}_{j}$, where $\mathbf{A}_{j}$ is
of the form
\[
\mathbf{A}_{j}=\mathbf{A}\cup\left\{ \mathbf{C}_{i}\text{ for }i\in\mathcal{I}_{j}\right\} \qquad\text{for some }\mathcal{I}_{j}\subset\left\{ 0,1,\ldots,j-1\right\} ,
\]
and where $\mathbf{C}_{k}=\mathbf{C}$. 
\item For each leaf (`low-level') statement $\mathbf{S}:\mathbf{A}\implies\mathbf{C}$,
a machine-verifiable proof is provided. 
\item The tree height (distance between the root and leaves) is at most
$L\geq1$. 
\end{itemize}
\begin{rem}
The idea of structured proof in our paper is very close to structured
proofs suggested by Lamport \cite{lamport-i,lamport-ii}; the difference
is that we make the level more explicit. Ideally, a proof tree should
be well balanced (not too deep, and at the same time with a moderately
large degree, with fairly short statements), and highest levels should
be the most interesting to experienced mathematicians.
\end{rem}

The SPRIG protocol is based on the assumption that valid known mathematical
derivations can be structured with well-balanced trees (in principle,
as the complete tree is as large as a machine-level proof), and that
the existence (or non-existence) of a complete structured proof tree
can be determined with high confidence by a mathematician only knowing
a small subset of the tree (which may depend on the level of information
of the mathematician). 

The verification of a proof in a debate (between a teacher and a student,
or a reviewer and an author \ref{subsec:challenges-in-modern-mathematical-derivations}
below) works largely with the idea of well-balanced tree: a teacher
may ask a student to produce the highest  levels of the tree, to reach
the confidence that the student would know how to provide the rest
of the tree (if given enough time). 

As discussed in Section \ref{subsec:amount-of-details-in-a-proof}
above, determining the relevant amount of details to provide in a
published proof is delicate and somewhat subjective. This paper aims
at explicitly taking this subjectivity into account, by considering
a system of agents with various levels of information and confidence
in their ability to fill in the details of a proof, and proposing
a protocol by which such agents can exchange information.

\subsection{\label{subsec:recent-developments-in-computer-based-proofs}Recent
Developments in Computer-Based Proofs}

The second half of the 20th century saw the explosion, both theoretical
and practical, of computer science. As discussed in Section \ref{subsec:logic-view-of-mathematical-proofs},
the modern notion of formal proof leads to the idea that such proofs
are machine-verifiable, and as a result, we use formal proofs and
machine-level proofs as synonyms. This has led to a desire to see
a formalization of mathematics in computer-checkable terms (see e.g.
the QED Manifesto \cite{anonymous}). At this point, this wish remains
largely unfulfilled, and most modern mathematics has not benefited
from the progress in computer-based proofs. 

In this subsection, we discuss key features of such systems and the
associated challenges. 

\subsubsection{\label{subsec:automated-proofs}Computer-Assisted Proofs}

The idea that computers could (and perhaps should) verify the validity
of proofs goes back at least to Gödel's Lost Letter to Von Neumann
\cite{lipton}. It is extremely natural: in principle, any modern
mathematical derivation can be translated into a sequence of formulae,
which are progressively derived by applying specific rules (which
we will call logic system); specifying the formulae and the applied
derivation rules thus constitutes a computer-checkable proof, sometimes
called proof object (see Section \ref{subsec:computer-proof-systems}
below). 

Computer-assisted derivations have yielded a number of successes in
mathematics, and have been instrumental for the proofs of celebrated
conjectures, which involve dealing with a large number of cases separately:
\begin{itemize}
\item Famously, computer-assisted proofs were instrumental to the first
proof of the four-color theorem \cite{appel-haken,appel-haken-koch};
a fully machine-verified proof was given later \cite{gonthier}. 
\item The Kepler conjecture about sphere packings was established by a machine-checked
proof \cite{hales}
\end{itemize}
In addition to enabling proofs that are too hard for humans to write
and check, computer-assisted proofs are also important in the field
of software verification, where they allow one to guarantee that functions
of a program will behave according to specification.

As a result of their appeal (as trusted elements of knowledge, as
objects that computers can sometimes produce better than humans),
a desire to formalize mathematics has grown over the years. SPRIG
is based on the idea that to convey succinct, informative, and trustable
proofs, entire proofs are not necessary: only a subset that is relevant
to the agents exchanging information is needed. As discussed in Section
\ref{subsec:computer-proof-systems}, many proof systems exist; the
rest of this paper is based, for concreteness, on one of them, which
is particularly readable by mathematicians. 

\subsubsection{\label{subsec:computer-proof-systems}Computer Proof Systems}

A number of computer-based systems such as Mizar, TLA+, Isabelle/Isar,
Coq, Metamath, HOL, or Lean enable and facilitate the writing of formal
proofs. These systems rely on some basic low-level language: proofs
written in this language are called proof objects, and they are what
the computer ultimately checks the validity of. At the same time,
the users of such systems usually work with a higher-level language,
called user-level proof, which yields proofs that are usually much
shorter (yet still very long compared to proofs used by mathematicians
\cite{wiedijk-ii}). 

This article is largely agnostic on the specific choice of computer-system
but could be implemented easily in so-called declarative systems such
as Mizar, Isar or Lean.

A user-level proof consists of a sequence of statements (which include
equivalent of high-level mathematical proofs, with definitions, proof
steps, justifications, sub-statements, cases, etc.) written in a formal
language. The proof is called complete when the system is able to
validate each of the justifications for the steps. While somewhat
tedious to write and to read, formal proof sketches, which are valid
proofs in which a number of steps have been removed, are particularly
easy to understand by mathematicians, as pointed out in \cite{wiedijk}
(see also Section \ref{subsec:modern-mathematical-derivations-in-practice}
above). 

In languages such as Isar, reading and writing simple statements or
definitions (as opposed to writing complete justifications) is relatively
easy. As a result, a mathematician can be expected to be able to determine
if a simple statement deemed to have a short proof does indeed have
one, without having to try to write it; this idea is at the heart
of SPRIG. 

\subsubsection{\label{subsec:interactive-proofs}Interactive Proofs}

We conclude this subsection by discussing the development of a completely
different take on proofs induced by the algorithmic reduction of proofs,
motivated in particular by cryptographic applications: that of interactive
proof protocols. An interactive proof protocol allows a prover to
demonstrate their knowledge of a proof to a verifier, through their
ability to answer the verifier's questions.

The cornerstone of interactive proof checking relies on computational
complexity theory and NP-completeness (see e.g. \cite{widgerson}
for a modern account): for any statement $\mathbf{S}$ in a formal
logic system $\Lambda$, the statement `$\mathbf{S}$ has a proof
of length $\leq L$ in $\Lambda$' can be reduced to the statement
`$\mathbf{G}$ is 3-colorable' for a certain graph $\mathbf{G}$ computable
in polynomial time in terms of $\mathbf{S},L$, and $\Lambda$; proofs
of length $\leq L$ of $\mathbf{S}$ in $\Lambda$ then are in one-to-one
correspondence with 3-colorings of $\mathbf{G}$ (i.e. assignments
of colors to the vertices of $\mathbf{G}$ such that any pair of adjacent
vertices has different colors). Note that we focus on 3-colorings
for concreteness, but any NP-complete problem would do the job as
well. Through the prism of computational complexity, a prover can
demonstrate her knowledge of a proof of length $\leq L$ of a statement
$\mathbf{S}$ by demonstrating her ability to color the corresponding
graph $\mathbf{G}$.

This view of proofs allows in particular for a number of interesting
applications (see e.g. \cite{widgerson}):
\begin{itemize}
\item Probabilistically checkable proofs: the prover may be able to map
her 3-coloring problem to a 3-coloring problem with an amplified gap,
and to use it to convince a skeptic of her knowledge of a proof (of
length $\leq L$) via a limited number of interactions (independent
on the proof length); in essence, if the prover bluffs in her answers
to questions, she will be caught with high probability. 
\item Zero-knowledge interactive proofs: by relying on cryptographic primitives,
the prover may be able to demonstrate her knowledge of a proof (of
length $\leq L$) without divulgating anything about the proof itself.
\end{itemize}
The view of proofs underlying the field of interactive proofs is at
odds with that of theorem proving in mathematics: the proofs in that
world yield little, if any, insight to mathematicians (as discussed
in Section \ref{subsec:proofs-for-explanation} and \ref{subsec:logic-view-of-mathematical-proofs}
above) into why proven theorems are true (and for instance, in the
case of zero-knowledge proofs, the goal is to give zero information
about the proof). 

Still, the idea of establishing trust via a few interactions is very
appealing for proof verification in mathematics. In light of this
question, SPRIG aims at allowing agents to communicate both trust
\textit{and} insight to each other, via a limited number of interactions. 

\subsection{\label{subsec:challenges-in-modern-mathematical-derivations}Challenges
in Modern Mathematical Derivations}

In mathematical practice, the amount of detail needed to assess a
proof's validity is usually decided by a peer-review refereeing process,
whose goal is also to determine how interesting the results and insights
are. Usually, this happens within the context of a publication by
a journal, in which a small number of independent experts assesses
the validity of a proof (i.e. whether enough details are provided
to transmit them the confidence that the proof is correct, in the
sense of Claim \ref{claim:knowing-how-to-prove-a-statement} above).
In the case of higher-profile results, validation also comes from
the larger community, where all experts of the field may discuss the
results, identify weaknesses, and exchange comments with the authors.

In this context of `high-level' proofs, the last decades have seen
a number of developments which created additional challenges, in particular:
\begin{itemize}
\item The inflation in the complexity of (published) mathematical proofs
makes their validation more difficult (see Section \ref{subsec:complexity-of-proof-validation}
below).
\item The boundary conditions of the process (see Section \ref{subsec:boundary-conditions-of-reviewing}). 
\item The alignment of various external incentives with the ones of the
validation process (see Section \ref{subsec:alignment-of-incentives}). 
\end{itemize}
In principle, as suggested in Section \ref{subsec:nature-of-mathematical-derivations}
above, all of these challenges are of a purely practical nature: given
enough competent, reliable and properly incentivized experts, these
challenges would not exist; or, alternatively, if all proofs were
easy to write down in a format verifiable by a computer (as discussed
in Section \ref{subsec:recent-developments-in-computer-based-proofs}),
there would be no need for expert verification. As discussed in the
following subsections, however, these challenges are very real today;
it is the aim of SPRIG protocol allowing to help tackle them. 

\subsubsection{\label{subsec:complexity-of-proof-validation}Complexity of Proof
Validation}

Checking the validity of mathematical derivations is a time-intensive
task, which naturally depends on the length of the result as published,
and on the time it takes a referee to check the details (i.e. explicitly
or implicitly filling in the blanks to convince oneself of its validity).
Over the last 100 years, the complexity of published mathematical
proofs has grown significantly:
\begin{itemize}
\item The average length of mathematical proofs has grown: for instance,
the average length of a paper in Annals of Mathematics in 1950 was
less than 17 pages, but in 2020 it was more than 58. 
\item The papers, in turn, usually rely on larger and larger bodies of work,
and proofs are now rarely self-contained.
\item Some proofs are split into many mathematical articles: for instance,
the classification of the finite simple groups consists of tens of
thousands of pages in several hundred journal articles, published
over a 50-year period. 
\end{itemize}
As a result of this inflation in complexity, the process of validating
a proof has increased in difficulty. Some documented, high-profile,
examples are:
\begin{itemize}
\item The Jacobian Conjecture: it has seen a large number of claims of proof
in the 20th century, which have survived for a number of years, before
being invalidated, and standing as an open problem \cite{wikipedia-jacobian}.
\item Poincaré's Conjecture: after a number of incorrect claimed proofs
were proposed throughout the 20th century, a collection of papers
was published by Perelman in 2002--2003, which led to a validation
by the Clay Institute in 2006 following many debates, including the
publication of a number of papers, some of which filled in details,
of which some claimed to be the first complete proof of the conjecture
\cite{nasar-gruber,szpiro}.
\item Hilbert's 16-th Problem: currently an open problem, for which many
attempted solutions have been proposed, some of which took decades
to be invalidated \cite{ilyashenko}.
\item The ABC Conjecture: a solution has been proposed, which is considered
wrong or incomplete by a significant number of experts, but at the
same time considered correct by a significant number of experts \cite{castelvecchi}.
\item The classification of finite simple groups was announced as completed
in 1983. Yet a number of gaps have been found over the years, which
were filled over the following decades \cite{solomon}. 
\end{itemize}
In a number of high-profile cases, the underlying debates have taken
years to settle. In principle, such debates should not last: it should
be enough to provide a computer-verifiable proof \ref{subsec:recent-developments-in-computer-based-proofs};
in practice, the sheer length of the relevant proofs in machine-level
language makes such a task daunting. As discussed in Section \ref{subsec:boundary-conditions-of-reviewing},
this poses a number of problems in terms of the boundary conditions
of the process. 

SPRIG aims at enabling various agents (including computers) with diverse
areas of expertise to collaborate in the reviewing process and in
the writing of the proofs' details.

\subsubsection{\label{subsec:boundary-conditions-of-reviewing}Boundary Conditions
of the Reviewing Process}

The reviewing process, as performed by a journal or a community as
a whole, involves a number of delicate boundary conditions, which
are usually not formalized:
\begin{itemize}
\item The reviewing process involves matching authors and expert reviewers.
Picking experts may be a difficult task for a journal editor; in the
case of public debates, for the community to decide whom to listen
to requires the build-up of a consensus. 
\item The interaction protocol between authors and reviewers is not formalized.
Should the authors or reviewers not act in good faith or have drastically
views of what a complete proof means, the process will stall:
\begin{itemize}
\item In principle, there is nothing that prevents reviewers from nitpicking
or claiming they don't understand some parts, and hence of deeming
a correct proof incomplete: in some sense, a proof is indeed incomplete
until it is completely written down in a machine-verifiable format. 
\item Dually, an author whose proof is too vague or incomplete (or possibly
void) may keep adding irrelevant details that do not address the heart
of the issue, or claim there is nothing important to add and that
the reviewers are nitpicking on trivialities.
\item Since going down to the machine level is not a feasible option, for
a journal, the editor ends up being the ultimate arbiter; when the
whole community discusses the issue, a consensus forms (or doesn't
form). 
\end{itemize}
\item The dual role of reviewers: they are expected to emit at the same
time, a judgement on the validity and the interest of the result. 
\item Ultimately, there is no specification as to which of the `wrong until
proven correct' (the machine-level proof standard) or `correct until
proven wrong' principles prevail in case of disagreement, and in which
timeframe, i.e. where the burden of proof lies.
\end{itemize}
While the above are theoretical weaknesses of the protocol of the
reviewing process, they are not necessarily problematic in practice
if the various agents work constructively towards aligned goals (such
as uncovering mathematical truths, acting in good faith, etc.). But
this is no longer the case as soon as conflicts of interest exist:
see Section \ref{subsec:alignment-of-incentives}.

In light of the boundary conditions problem, SPRIG allows one to rely
on computer-based proof verification algorithms (as in Section \ref{subsec:recent-developments-in-computer-based-proofs})
as the ultimate arbiters, and to enforce explicit and transparent
time constraints.

\subsubsection{\label{subsec:alignment-of-incentives} Alignment of Incentives}

The functioning of the reviewing process involves a number of agents,
whose identities may or may not be known; for high-profile proofs,
the whole community may end up being involved. As discussed in Section
\ref{subsec:boundary-conditions-of-reviewing}, the boundary conditions
can in principle be the source of problems; this can in particular
be the case if there is misalignment between the goal of thorough
and quick validation and the agents' objectives. 

Arguably, a large part of the incentives underlying the reviewing
process is implicit, rather than explicit (namely: desire to discover
the truth, to be intellectually honest, to participate in the good
functioning of the community, to be respected as an expert, etc.).
However, the agents' strategies may be also influenced by the presence
of various external incentives whose alignment with the reviewing
goals is unclear (e.g. funding, jobs, prizes, recognition). 

In terms of explicit incentives, a number of problems may arise:
\begin{itemize}
\item The reviewing process is rarely explicitly incentivized (in the case
of journals, reviewers are usually anonymous and not compensated),
in particular related to their ability to spot e.g. mistakes; and
there is an asymmetry of incentives: there are usually only negative
consequences for not finding errors and no significant downside to
rejecting a valid proof.
\item For high-profile problems, there is a problematic asymmetry: numerous
amateurs may see a lot of upside in submitting (mostly incorrect)
proofs of famous conjectures, while at the same time fewer experts
are available to spend their precious time on finding errors in these
proofs (with no upside). 
\item Authors may be incentivized to publish vague, incomplete proofs to
claim precedence over other authors. 
\item Independent experts are hard to find in very specialized fields, and
the incentives to disclose conflicts of interest are limited. If the
reviewers are competing with the authors (or conversely are interested
in seeing them succeed) they may stall the reviewing process (or conversely
be too lenient), as discussed in Section \ref{subsec:boundary-conditions-of-reviewing}
above.
\item In the case where proofs involve security issues (such as in cryptographic
contexts), there may be additional problems, with experts having possibly
incentives to keep discovered mistakes to themselves, as exploiting
them may be worth money. 
\end{itemize}
The above problems are illustrated by a number of high-profile examples
\cite{nasar-gruber,castelvecchi}, and even when aware of the existence
of alignment problems, it is hard for external agents to identify
in which instances of the above problems a situation falls \cite{castelvecchi,nasar-gruber}. 

In light of the alignment problem, the goal of SPRIG is to mediate
multi-agent interactions with explicit incentives, designed in such
a way as to align the agents' objectives with the ones of the reviewing
process. 

\subsection{\label{subsec:markets-information-and-games}Markets, Information,
and Games }

As discussed in Section \ref{subsec:challenges-in-modern-mathematical-derivations},
the validation process of proofs as performed by mathematicians involves
a number of agents, with different levels of information, interacting
in a variety of manners. These include publishing proofs, detecting
issues in papers, asking for more details, and providing them. 

Similarly, SPRIG invites repeated interactions between members of
the mathematical community with potentially variable levels of information
and degree of involvement. 

In both the current validation process and the SPRIG protocol, understanding
the set of involved agents and their interactions as an economic system
is of paramount importance. Indeed, the raw data of the protocol outcome
(say: number of refereeing rounds, final status: accepted/rejected)
is in general insufficient to determine precisely what credence the
community should have in the validity/invalidity of a proof. 

Understanding the motives, incentives, and beliefs of the relevant
agents allows one to assess what information we can actually extract
from a protocol outcome. The simplest example is perhaps the case
of an accepted paper, the author of which sits in the editorial board
of the publishing review. All other things being equal, and because
of an obvious incentive problem, it seems rational to (at least slightly)
decrease the credence in the validity of the published results. 

Section \ref{subsec:alignment-of-incentives} lists a number of other
incentive issues. For completeness, we now briefly review the core
concepts of the modern microeconomics toolbox and the main economic
theories pertaining to the discussion above. A deeper game-theoretic
treatment of the validation process is provided in Section \ref{sec:a-simplified-equilibrium-analysis}. 

\subsubsection{\label{subsec:agents-bayesian-views-markets}Agents, Bayesian Views,
and Markets}

The raison d'être of incentives problems is that each agent follows
their own agenda, being driven by specific motives or preferences.
These preferences can be represented by a utility function, a notion
that traces back at least to Bentham and J. S. Mill and has become
commonplace since the emergence of neoclassical economics. Von Neumann
and Morgenstern \cite{von-neumann-morgensten} established conditions
on agents\textquoteright{} preferences such that those can be ordered
by an expected utility calculation, providing the economics profession
with a key tool for dealing with decision-making under uncertainty.
However, other concepts were needed in order to think and make predictions
about the way economic agents (inter)act. 

In particular, a proper microeconomic treatment of incentive problems
and their consequences was virtually impossible before the advent
of two intellectual revolutions. 
\begin{itemize}
\item The first one, largely initiated by Nash \cite{nash}, is game theory:
it provided economics with a much-needed tool to model situations
where uncertainty arises from one's imperfect knowledge, not about
the state of Nature, but rather from other agents\textquoteright{}
actions. 
\item The second one, initiated by Muth \cite{muth} and later supported,
consolidated, and popularized by Lucas, is rational expectations:
rational agents not only maximize their own utility but have the same
knowledge as the economic modeler and are able to correctly compute
the model\textquoteright s outcome. This requires making assumptions
about other agents' behavior, but at the same time, allows them to
predict this behavior (given that agents will take utility-maximizing
decisions). Rationality requires that the predictions coincide with
the assumptions. Key to the rational anticipation process is the ability
to correctly compute expectations; in particular, rational agents
are Bayesian updaters. 
\end{itemize}
Several contexts, including the issues we investigate in the current
paper, called for an extension of these tools to the case of asymmetric
information: situations in which agents operate under different information
sets and can extract some information from others' actions. Akerlof
\cite{akerlof}, Spence \cite{spence}, and the various works of Stiglitz
and his co-authors in the seventies closed this gap. Contributions
such as \cite{cho-kreps} and \cite{fudenberg-tirole} provided refinements
of the equilibrium concept for strategic interactions under asymmetric
information. In this literature, agents have a `type', i.e. a characteristic
that is not directly observable but partially or fully inferred given
a history of actions. As we shall see, in SPRIG, this type is the
subjective probability that a proposed proof can be unrolled up to
machine language level (itself a function of variables such as personal
skill, amount of work, and carefulness which are not fully observable
by outsiders). 

Townsend's model \cite{townsend} with costly state verification initiated
a large literature on optimal contracting under asymmetric information.
In SPRIG, the goal is, indeed, to estimate the status (i.e. valid/invalid)
of a proof. But the very design of our validation process implies
that verification (while potentially costly in terms of time and intellectual
energy) might in fact be beneficial to the verifiers, who can collect
bounties. 

The system formed by SPRIG and its users can be seen as a market for
proofs, although not quite in the sense of a stock market. However,
just as the (semi-strong) efficient market hypothesis \cite{fama}
stipulates that the utility-maximizing behavior of agents will lead
stock prices to reflect any available public information, we expect
that with properly chosen parameters, SPRIG will aggregate information
and eventually disclose the actual status of a proof. 

The dark markets reviewed by Duffie \cite{duffie}, while different
from our market in a variety of regards, also share similarities with
SPRIG. These markets are dealer networks in which connected agents
conduct bilateral negotiations to trade financial assets (there is
no `market price'). Each transaction reveals part of the private information
that dealers have, and therefore one can expect information to `percolate'
through the network. 

While scoring rules can be used to aggregate the credences of various
agents on statements (see e.g. \cite{hanson-i,hanson-ii}), SPRIG
features two additional key characteristics. First, it provides a
built-in termination date at which the validity of the proof/question
is decided and thus ``bets'' can be settled. Second, the dynamic
verification process generates explicit information about the strengths
and weaknesses of a claim. In fact, SPRIG can be viewed as a multi-round
security game (see e.g. \cite{blocki-et-al}), in which agents `debate'
the validity of claims (\cite{irving-christiano-amodei}), with automatically
set boundary conditions.

\subsubsection{\label{subsec:economic-markets-and-mathematical-truth}Economic Markets
and Mathematical Truth }

The idea of agents with a Bayesian view on the truth of mathematical
questions dates back at least to the works of Solomonoff \cite{solomonoff}.
Recent works on systems of such agents interacting through a market
\cite{garrabrant-benson-tilsen-critch-soares-taylor-logical-induction}
have shed light on how such systems may be viewed as (decentralized)
algorithms that estimate and refine probabilities of truths for mathematical
statements. \cite{garrabrant-benson-tilsen-critch-soares-taylor-logical-induction}'s
algorithm, a \textit{logical inductor,} dynamically assigns probabilities
to mathematical statements and the belief system thus produced is
shown to be consistent asymptotically. This consistency as well as
other desirable properties of their algorithm derives from a \textit{logical
induction criterion, }which is essentially a no-arbitrage condition
on a market defined as follows. Each mathematical statement $\phi$
is associated with a derivative that pays \$1 if is true and zero
otherwise; the market price of this derivative can then be seen as
the current belief about the truth of $\phi$. A trader can observe
the history of prices up to time $t-1$, make some computations of
their own, and post market orders at time $t$, adjusting their portfolio
of derivatives written on statements $\phi_{1},...,\phi_{n(t)}.$
Their trading strategy is adapted, i.e. a function of past prices.
Importantly, this ``past'' includes time $t$: naturally, the demand
function depends on the price that will eventually be set. A \textit{market
maker }(a subroutine of the logical inductor) then sets prices in
such a way that the demand of the trader is (approximately) zero for
all derivatives. By construction then, a ``fair price'' obtains,
which captures the probability of the statements underlying the derivatives.

While appealing in many aspects, these ideas remain unfortunately
extremely theoretical. In the words of \cite{garrabrant-benson-tilsen-critch-soares-taylor-logical-induction},
`logical inductors are not intended for practical use.'
\begin{itemize}
\item The required computation times and spaces are unreasonably large.
\item There is an important distinction between being true and being provable:
the latter is the focus of mathematics research, and accumulating
evidence for the veracity of result may not result in any progress
towards proving it (for instance, verifying Goldbach's conjecture
up to a large $N$ may increase one's trust in the truth of the statement,
but not bring any insight into how to prove it).
\item There is no focus on the amount of insight associated with the agents'
discoveries.
\item There is no obvious way to incorporate agents seeking information
about specific statements, i.e. to shift the attention of the market
towards a set of problems currently of interest to these agents. 
\end{itemize}
SPRIG leverages on the view that the combination of Bayesian updating
and individual profit-seeking behaviour leads the market to reveal
information about fundamentals (in our context: the validity of mathematical
claims). However, rather than aiming at constructing an ``exhaustive
encyclopedia'' of mathematical propositions, our framework incentivizes
agents to inject (or induce injection of) information about \textit{specific
}statements, which are relevant for the community either because they
are mathematically interesting or because they correspond to critical
points in a proof. Furthermore, our focus is on the effective provability
of statements rather than on the credence that they are, in an abstract
way, true. That is, SPRIG not only invites its users to focus on important
mathematical statements, but also induces them to discuss/prove/refute
those in a way that provides intuition and insights about \textit{why
}they are true or false.

\subsection{\label{subsec:blockchain}Blockchain and Related Technologies}

The last decade saw the emergence of fully decentralized computing
systems, in particular in the context of public databases made of
public, immutable records, called blockchains. These systems, running
on a decentralized network of computers (typically connected to the
internet), have grown out of the desire to build transparent, trustless
consensus systems, relying on no central authority, or specific machine,
following a secure, time-stamped, and easily auditable behavior. 

While initially restricted to the context of the storage of digital
assets into accounts (such as for Bitcoin), blockchains have grown
in terms of applications and features. The introduction of smart contracts,
allowing the blockchain to perform complex operations and transactions
conditioned on taking various inputs, has opened new possibilities.
A particularly prominent development is the creation of inexpensive,
open, efficient, and trusted general-purpose markets. 

SPRIG provides a way to construct such markets aimed at decentralized,
public proof verification: its very design makes it perfectly suitable
for running on the blockchain. 

In this subsection, we review a number of key principles and mechanisms
of blockchain technologies, upon which SPRIG relies. 

\subsubsection{\label{subsec:distributed-systems-and-blockchain-basics}Distributed
Systems }

A blockchain is a certain type of distributed system. A distributed
system instance consists of execution instances of programs (often
called clients) running on a network of computers (often called nodes),
which communicate via a specified protocol. The protocol prescribes
the communications that the nodes should emit and receive. Early examples
of distributed systems across the internet include peer-to-peer file-sharing
networks, in which a node has a number of files available for other
nodes to request. As a whole, a distributed system can be viewed as
an execution instance of a program: a peer-to-peer file network can,
for instance, be viewed as a single database, emerging from the various
nodes.

Unlike regular program instances, distributed systems cannot be viewed
as running on any particular node, while emerging from the nodes;
this makes distributed systems more tolerant to localized failures
in the system. 

A distributed system is called (fully) decentralized when there is
no principal node coordinating the network. A fully decentralized
system instance is hence a form of consensus, emerging from the execution
of the clients on the nodes: the choice of the nodes to adhere to
the protocol defining it, by running client programs that respect
the protocol, is what gives life to the instance. 
\begin{rem}
In some regards, the mathematical activity (on planet Earth) can be
viewed as a decentralized system: mathematicians are agents who choose
to adhere to a protocol of logic rules, and whose work should be accepted
by other agents as long as it follows the protocol, and there is no
central authority entitled to decide what is correct mathematics or
who should be able to publish mathematics. Of course, an important
difference is that the protocol's rules are not completely specified
in practice and that most nodes are not computers, but humans. 
\end{rem}

Distributed systems have grown in importance over the last decades,
due in large part to the development of the internet. As discussed
in Section \ref{subsec:blockchain-and-cryptocurrency-basics} below,
a special of distributed systems have risen to prominence in the last
decade: blockchains. SPRIG is designed to run on such systems. 

\subsubsection{\label{subsec:blockchain-and-cryptocurrency-basics}Blockchain and
Cryptocurrency Basics}

A blockchain is a specific type of distributed system, where the underlying
database is made of a chain of immutable pieces of data called blocks,
which grows over time (new blocks are appended as the instance evolves).
This feature of blockchain allows for a consensus about time-stamped
data to develop.

For instance, Bitcoin is a blockchain instance consisting of blocks
describing transactions between accounts (represented by cryptographic
public keys), where each node possesses an entire copy of the blockchain;
about every 10 minutes a new block is added, which contains the transactions
that have been validated in that time interval. The Bitcoin software
is designed in such a way that the Bitcoin blockchain can be viewed
as a public ledger of amounts of currency units (called bitcoins)
owned by each account, and where validated transactions move bitcoins
between accounts. More generally, blockchains can be used to implement
crypto-currencies, and allow users to trade virtual assets, called
tokens. In such systems, accounts are also represented by a cryptographic
public key, and transactions from an account submitted to the network
are accepted if they are signed by the private key associated with
the account and the funds are still available.

A key feature of blockchains such as Bitcoin is that they are based
on a publicly available protocol (typically with an open-source reference
client implementation); as a result, the laws governing the system
are transparent (the `code is law' motto is sometimes used to describe
such systems), and it informs the adherence of various nodes and stakeholders
to the system. 

The implementation of most blockchain systems usually relies on the
internet's infrastructure, and on cryptographic primitives to ensure
both the integrity of the data and the authenticity of the transactions.
The guarantee of data integrity provided by blockchains is at the
heart of such protocols and of the interactions of the agents, who
can thus use blockchains as a medium for information exchange and
a trusted hub. As a result, a number of blockchains (such as Bitcoin,
Ethereum, etc.) have emerged as focal points (or Schelling points)
for a growing population of users \cite{breitman}. Blockchains thus
play the role of trusted platforms for agents interested in exchanging
information and digital assets. 

The fact that the functions of blockchains are executed by code running
on nodes has led to many extensions beyond the original application
of token ledgers. In particular, the advent of smart contracts, discussed
in Section \ref{subsec:smart-contracts}, has opened many new possibilities,
including the system proposed in this paper.

\subsubsection{\label{subsec:smart-contracts}Smart Contracts}

Smart contracts emerged naturally from the desire to leverage the
power of blockchains as decentralized computing platforms to implement
programs to perform automated transactions: for e.g. a cryptocurrency,
one would like to be able to run a program that automatically moves
some asset from an account to another, at a time when certain conditions
are met. Such programs run on the blockchain (i.e. are executed by
the nodes of the blockchain) and update its state (by contributing
new blocks); once running on the blockchain, such a program can be
viewed as a contract, guaranteeing the execution of certain transactions
if pre-specified conditions are met, hence the name smart contract. 

The behavior of a smart contract is specified by its code, together
with inputs from the blockchain: for instance, a smart contract may
be the recipient of a transaction from another agent on the blockchain,
or it may act according to a signed information source (for instance,
a trusted information feed from the physical world, known as an `oracle').
As a simple example, one can imagine a smart contract implementing
a chess competition with automated rewards distribution: players submit
their (cryptographically signed) moves to the blockchain, and the
outcome is either determined by one party resigning, both parties
agreeing to a draw, or by reaching a position computed as terminal
by the smart contract.

A number of blockchains have come up with developed infrastructures
for smart contracts, including Ethereum, Tezos, Algorand, Avalanche,
... Each of these platforms allows for the writing of smart contracts
in fairly rich (sometimes Turing-complete, as for Ethereum) high-level
languages, and their execution on the blockchain against a fee (depending
on the complexity of the operations, and payable in the blockchain's
cryptocurrency token). 

The smart contract infrastructure has enabled the construction of
numerous decentralized platforms, in particular in decentralized finance
(as discussed in Section \ref{subsec:decentralized-markets} below):
exchanges, betting markets (relying on information feeds), automated
market makers, stable coins, etc. can now be run as smart contracts.
Such platforms allow for applications that previously needed to rely
on the good behavior of expensive (and corruptible) trusted third-parties
to enforce the execution of the contracts. Their transparency allows
for a detailed risk analysis (see e.g. \cite{angeris-chitra,angeris-kao-chiang-noyes-chitra}).

Smart contracts platforms can be used to build trusted interactions
and consensus, to establish transparent, reliable, and efficient institutions.
SPRIG is designed to run on smart contracts (without relying on external
oracles), allowing it to aim for such goals, and to be a building
block for further decentralized applications (see Section \ref{subsec:derivatives-markets}
below). 

\subsubsection{\label{subsec:decentralized-markets}Decentralized Markets }

We now briefly discuss advances in the applications of smart contracts
to decentralized markets, which have experienced a surge of interest
in recent years; SPRIG can be viewed as a form of decentralized market
for mathematical derivations. 

One of the first interesting applications of smart contracts is that
of decentralized betting markets (e.g. \cite{augur}). In the simplest
form of decentralized betting smart contracts, two parties decide
to bet at given odds on the outcomes of some future event. To do so,
they create a smart contract to which they send their bets (i.e. the
contract acts as an escrow) and that can look up a pre-specified,
commonly trusted data feed, aka the `oracle'; when the event has happened,
the smart contract determines the outcome from the data feed and sends
the wagered funds to the winner.

For certain betting markets, no external oracle is even needed since
the relevant event occurs directly on the blockchain. The power of
the blockchain to move assets based on the results of computations
has attracted some attention in the mathematical community. Indeed,
in principle, checking a proof (in a machine-verifiable format) can
be performed by a smart contract (provided that the platform's language
is expressive enough, and enough computing resources are available):
in particular, the projects Qeditas \cite{white} and Mathcoin \cite{su}
are based on such ideas. Both aim at constructing a ledger of agreed-upon
mathematical statements where the prospect of financial rewards induces
agents to contribute their knowledge.
\begin{description}
\item [{Mathcoin}] In the Mathcoin project proposed by \cite{su}, the
ledger is constructed using a bottom-up approach: agents successively
append statements that are logical consequences of the previous ones,
starting from the Zermelo-Fraenkel axioms. These statements are appended
if the agents provide a valid proof at the machine level. Connected
to this growing ledger, a market allows the agents to bet on yet unproven
propositions. A user in possession of a result potentially relevant
to the proof of such a proposition can buy a derivative paying conditional
on the validity of the proposition, then post their result on the
ledger. They should subsequently benefit from an appreciation of the
price of the derivative. Hence agents are incentivized to contribute
their knowledge. However, the Mathcoin protocol and SPRIG differ in
several important respects. First, the former produces a ledger of
machine-level claims only, which is likely to be impractical for the
scientific community as a whole; the machine-language requirement
also presumably implies that growing the ledger will be a slow and
cumbersome task. By contrast, we use machine-language expansion only
as a boundary condition and expect SPRIG to produce concise, human-tailored
proofs. Second, its pricing function exposes Mathcoin to an attack
where agents are incentivized to post trivial claims. The associated
token is initially priced at $0.5$. The claimer can then immediately
post a proof and collect $1$, as their claim was proven. This attack
pollutes the blockchain and more importantly drains the public fund,
which is intended to reward agents who successfully bet on substantive
propositions, effectively rendering the system unusable. While \cite{su}
mentions this attack, no satisfactory fix is provided.
\item [{Qeditas}] The Qeditas system of \cite{white} also suggests the
construction of a ledger of propositions. There, agents are incentivized
to append a result (written in machine language) either to collect
bounties from a foundation or an individual interested in the result
or because they expect other agents to need it to prove something
else in the future and hence to buy its `rights'. As for Mathcoin,
the bottom-up approach combined with the requirement for complete
machine-level proofs implies usability and practicability issues.
\end{description}
To sum up, while projects such as Qeditas or Mathcoin are promising
endeavours, their functioning seems at odds with the way mathematicians
work, as discussed in Sections \ref{subsec:logic-view-of-mathematical-proofs}
and \ref{subsec:modern-mathematical-derivations-in-practice}. SPRIG
creates a decentralized market for mathematical derivations, which
allows one to avoid evaluating unneeded regions of the proof, while
still relying on the ability of smart contracts to arbitrate, in case
of disagreement, mathematical truths. 

\subsection{Outline\label{subsec:outline}}

As discussed in the previous subsections, mathematical proofs aim
at eliciting trust into the validity of statements and transmitting
insight into their justifications. While trust relies on the confidence
that a machine-verifiable proof could be produced if needed, machine-verifiable
proofs are difficult to produce and convey little insight on a per-line
basis. As a result, proofs are made at a high, informal level in practice,
and are much more concise than their machine-level counterparts, omitting
many details. The production and verification of such high-level proofs
thus represent a challenge: various agents with various levels of
information may disagree on what constitutes a complete and valid
proof. As a result, while a proof's validity is an objective question
in principle, in practice it becomes a complex multi-agent problem,
where incentive alignment problems may arise. 

In this paper, we introduce the SPRIG protocol, which aims at enabling
proof submission and verification in a decentralized manner, allowing
agents to participate with their various degrees of information, and
to be incentivized to behave honestly. It is designed to run on a
blockchain, allowing a smart contract to handle the distribution of
stakes and bounties, and to serve as an arbiter of debates without
relying on any trusted institution.

More precisely, the structure of the following sections is as follows.
\begin{itemize}
\item In Section (\ref{sec:the-protocol}), the SPRIG protocol is presented.
\begin{itemize}
\item In Section \ref{subsec:prologue}, the ideas leading to SPRIG are
introduced, starting with a simple game between a claimer and a skeptic
(Section \ref{subsec:claimer-and-skeptic-debate}), and introducing
variants one by one (Sections \ref{subsec:incentives-claimer-stakes-skeptic-bounties},
\ref{subsec:many-claimers-and-skeptics}, and \ref{subsec:question-as-root-of-the-process}).
\item In Section \ref{subsec:claim-of-proof-format}, the basic version
of SPRIG is described in detail: it is based on a hierchical proof
format, called Claim of Proof Format (Section \ref{subsec:claim-of-proof-format}),
and a recursive structure of nested claims and questions (Sections
\ref{subsec:top-down-informal-view} and \ref{subsec:formal-description}).
\item In Section \ref{subsec:illustrations-of-the-protocol}, the SPRIG
protocol is illustrated, through interactions mediated by it, in a
number of cases.
\item In Section \ref{subsec:variants-and-extensions}, a number of variants
of the basic version of SPRIG are presented, and their merits are
discussed. 
\item In Section \ref{subsec:blockchain-implementation}, various aspects
pertaining to the blockchain implementation of SPRIG are discussed. 
\end{itemize}
\item In Section \ref{sec:informal-game-theoretic-discussion}, a game-theoretic
perspective on SPRIG is introduced, presenting informally the effects
of the incentives structure on the agents' interactions, and the protocol's
resilience to attacks.
\begin{itemize}
\item In Section \ref{subsec:strategic-interactions-and-protocol-outcome},
the strategic interaction between claimers and skeptics, and the results
on the proofs constructed in this interaction are discussed. 
\item In Section \ref{subsec:robustness-properties}, the robustness properties
of SPRIG against various types of attackers are discussed. 
\end{itemize}
\item In Section \ref{sec:a-simplified-equilibrium-analysis}, an in-depth
quantitative analysis of a simplified model of SPRIG is presented. 
\begin{itemize}
\item In Section \ref{subsec:model-setup}, the simplified model is introduced,
which consists of a two-player game of depth $2$.
\item In Section \ref{subsec:model-solution}, the solution of the model
is presented, with a description of the minima.
\item In Section \ref{subsec:results}, key questions about the reliability
of SPRIG are answered in terms of the model's solution.
\item In Section \ref{subsec:discussion-of-the-models-analysis}, the dynamics
and robustness of SPRIG are discussed in light of the analysis of
the simplified model.
\end{itemize}
\item In Section \ref{sec:applications-and-outlooks}, a number of applications
of SPRIG and outlook for future research are discussed: 
\begin{itemize}
\item In Sections \ref{subsec:theorem-verification}, \ref{subsec:bounty-for-open-problem},
\ref{subsec:security-proof-certification}, a number of direct applications
of SPRIG to concrete verification situations are outlined: for theorem
proof verification, for the creation of mathematical challenges with
bounties, and for the elicitations of decentralized security audits.
\item In Sections \ref{subsec:automated-theorem-proving} and \ref{subsec:derivatives-markets},
possible uses of SPRIG as a platform for new applications are discussed,
in particular for the development of automated theorem proving and
derivatives markets, allowing agents to inject various types of information.
\item In Section \ref{subsec:beyond-mathematical-reasoning}, a number of
other applications, relying on external oracles, are proposed. 
\end{itemize}
\end{itemize}

\section{\label{sec:the-protocol}The SPRIG Protocol}

In this section, we describe the protocol at the heart of the present
paper, which allows one to construct and incentivize a debate between
claimers (provers) and skeptics to determine the validity of a high-level,
declarative mathematical derivation: SPRIG, short for Smart Proofs
via Recursive Information Gathering. 
\begin{itemize}
\item In Section \ref{subsec:prologue}, the key ideas of the protocol are
progressively introduced. 
\item In Section \ref{subsec:claim-of-proof-format}, the claim of proof
format upon which SPRIG is based is introduced.
\item In Sections \ref{subsec:top-down-informal-view}, \ref{subsec:formal-description}
and \ref{subsec:illustrations-of-the-protocol}, SPRIG is presented
in detail: first, via an informal top-down view, then via a formal
bottom-up definition, and finally via illustrative examples.
\item In Section \ref{subsec:variants-and-extensions}, a number of natural
variants and extensions of the basic SPRIG protocol are proposed.
\item In Section \ref{subsec:blockchain-implementation}, a number of aspects
of the blockchain implementation of SPRIG are discussed. 
\end{itemize}

\subsection{\label{subsec:prologue}Prologue}

In this prologue, we proceed step by step to introduce SPRIG: we start
by introducing a proof-checking protocol with two agents, a claimer
and a skeptic, debating the provability of a statement, and using
machine-level verification as the ultimate arbiter. 

\subsubsection{\label{subsec:claimer-and-skeptic-debate}Claimer and Skeptic Debate}

We first introduce the Claimer-Skeptic debate as a simple process
between two agents, called Claimer (pronoun: she) and Skeptic (pronoun:
he):
\begin{itemize}
\item Claimer claims to have proven a theorem in the sense of modern mathematical
proofs (\ref{subsec:modern-mathematical-derivations-in-practice}):
she has a high-level proof sketch of the theorem (a collection of
statements claimed to break down the difficulty of the theorem into
smaller pieces), and feels confident she could fill in the details,
down to machine level if needed (i.e. she can provide a sequence of
statements which follow from each other by the application of elementary
rules of logic); at the same time, she cannot or does not want to
provide all the details down to the machine level, because the proof
would be impractically long.
\item Skeptic sees what Claimer shows him. For any statement shown by Claimer,
Skeptic has beliefs about the probability that Claimer could indeed,
if pressed, provide the details.
\end{itemize}
For the game to allow for a verification of the proof, Claimer and
Skeptic use the following protocol:
\begin{itemize}
\item Skeptic may ask for more detail on any proof statement shown by  Claimer
that is of higher level than machine-level detail. 
\item Skeptic may invalidate the proof by revealing a mistake in the machine-level
proof details.
\item Claimer cannot indefinitely propose high-level proofs: if pressed
to give details down a certain number of levels (say 9), she must
reach machine-level details, or her proof is considered invalid.
\item Claimer has explicit bounds on the size of the proof sketches and
of the machine-level details.
\item After Claimer has published a proof sketch, Skeptic has a limited
(fixed in advance) amount of time to request details, and after Skeptic
has asked for details, Claimer has a limited (fixed in advance) amount
of time to provide details on a proof statement.
\end{itemize}
Claimer and Skeptic play a role that is somewhat akin to that of an
author and a reviewer; in practice, Skeptic can just be Claimer's
critical thinking, which probes for possible weaknesses of Claimer's
proof, to assess Claimer's confidence in her proof being indeed complete. 

The assessment made by this protocol is whether Claimer can provide
details in the proofs quickly enough: there is indeed a limit in terms
of how much into the details she can go (to avoid a case where Claimer
would just state tautologies, instead of giving proofs). 

A central weakness of the above protocol arises if there is no alignment
between Claimer and Skeptic: Skeptic may start bombing Claimer with
useless questions, or conversely not ask any question at all. In order
to prevent this, incentives can be put in place, as explained in Section
\ref{subsec:incentives-claimer-stakes-skeptic-bounties} below.

\subsubsection{\label{subsec:incentives-claimer-stakes-skeptic-bounties}Incentives:
Claimer's Stakes and Skeptic's Bounties}

In order to avoid the alignment problem in the debate between Claimer
and Skeptic introduced in Section \ref{subsec:claimer-and-skeptic-debate},
incentives can be added:
\begin{itemize}
\item Claimer must put a stake with the claim: this stake encourages Skeptic
to ask questions.
\item Skeptic must pay a bounty to ask a question: this prevents Skeptic
from bombing Claimer with useless questions.
\item In case Claimer can answer a question from Skeptic, she gets Skeptic's
bounty: this compensates her for the work required to answer.
\item In case Claimer cannot correctly answer a question from Skeptic, Skeptic
gets Claimer's stake. 
\end{itemize}
Setting the parameters correctly will incentivize Claimer and Skeptic
to do their work: Skeptic will only ask questions about the places
where he feels there is a a reasonably good chance Claimer cannot
fill in the details; conversely, he will not ask about obvious points
in the proof. This selective revealing of the proof may be useful
to an external observer as well: the details that are revealed are
only the interesting, non-trivial ones. At each stage, one side may
disagree with the other; the first one to stop debating loses, unless
we reach the machine level, where it is Claimer's burden to prove
her statement in machine-level language (otherwise she loses). 

In this sense, the machine level serves as the ultimate arbiter of
who is right; interestingly, if the incentives are set correctly,
a debate between a rational claimer and a rational skeptic would probably
end before reaching the machine level (as in a game between chess
masters where a checkmate position is almost never reached: the losing
side will resign beforehand).

\subsubsection{\label{subsec:many-claimers-and-skeptics}Many Claimers and Skeptics }

In Sections \ref{subsec:claimer-and-skeptic-debate} and \ref{subsec:incentives-claimer-stakes-skeptic-bounties},
we had a debate between only two agents (Claimer and Skeptic). With
good incentives, the roles of Claimer and Skeptic can in fact be completely
decentralized: we will have a market of agents, claimers and skeptics
(where an agent can play both roles at various levels). The skeptics
can ask details about any published proof sketch (by paying an upfront
bounty, being the first to ask, and doing so within time limits);
conversely, the claimers can propose proof sketches to any unanswered
question (by paying an upfront stake and doing so within time limits).
The claimer's stakes are actually split in two: an `up' stake and
`down' stake: if the claim of proof ends up being invalidated, the
`up' stake goes to the question that the claim of proof was trying
to answer, and the `down' stake to the question that first invalidated
the claim. 

This structure allows various agents to perform in various capacities:
for instance, agents with a good high-level vision can propose high-level
proof sketches, while agents who are more comfortable with low-level
details will provide proofs of sub-sub-claims, etc. Again, if the
bounties and stakes and times are set well, each agent will inject
their own information into the system by either proposing claim of
proofs and questions reflecting their beliefs, in a fully decentralized
way. 

\subsubsection{\label{subsec:question-as-root-of-the-process}Question as Root of
the Process}

A small variant of the protocol can be introduced in the case where
`we start with Skeptic': Skeptic may start by putting a bounty (for
instance, because he is interested in sponsoring research about a
question he cares about), to which claimers may propose proof sketches
(paying an `up' and `down' stake upfront). 

With this scheme, claimers should be able to submit several proofs
for a statement, while compensating skeptics who may debunk wrong
proofs. The rest of the process is completely symmetrical. 

This concludes the prelude part of the SPRIG protocol description.
In Section \ref{subsec:protocol-description}, a precise formalization
of the SPRIG is detailed. In Section \ref{subsec:variants-and-extensions},
a number of variants and extensions are presented. In Section \ref{subsec:blockchain-implementation},
questions associated with the blockchain implementation are discussed. 

\subsection{\label{subsec:protocol-description}SPRIG Protocol Description}

Building upon the insights of the previous sections, we now formalize
the Smart Proof by Recursive Information Gathering (SPRIG) protocol.
At the root of SPRIG is either a question or a claim of proof; the
protocol then builds a tree starting from the root, with questions
following claims of proof and vice versa. All the questions and claims
of proof consist of statements written in a formal mathematical language,
leaving no room for ambiguity (see Section \ref{subsec:claim-of-proof-format}). 

\subsubsection{\label{subsec:claim-of-proof-format}Claim of Proof Format }

SPRIG is based on the communication of unambiguous mathematical statements,
written in a formal proof language. The description we give here is
agnostic of the specific formal system; our format description can
be implemented using a declarative proof language such as Mizar, Isar,
or Lean. 

The format we describe is based on hierarchical proofs. A complete
machine-verifiable proof of a statement is a proof that can be structured
as a tree with nodes corresponding to statements, and with leaves
corresponding to machine-verifiable statements. SPRIG allows agents
to query and provide a subtree of the complete proof tree that is
large enough to reach a consensus about whether or not the tree can
be completed into a complete tree (with given size and time constraints),
as discussed in Sections \ref{subsec:top-down-informal-view} and
\ref{subsec:formal-description} below.

Recalling the definition of Section \ref{subsec:structured-proofs},
and setting aside the question of definitions for a moment (this will
be discussed in Definition below \ref{def:claim-of-proof-format}
below), the structured proof format of level $L\geq1$ is that of
a tree with the following structure: 
\begin{itemize}
\item The root (`top-level') is the statement $\mathbf{S}_{*}:\mathbf{A}_{*}\implies\mathbf{C}_{*}$
itself, where $\mathbf{A}_{*}$ includes axioms and accepted statements
used to derive the conclusion $\mathbf{C}_{*}$.
\item For each non-leaf (`high-level') statement $\mathbf{S}:\mathbf{A}\implies\mathbf{C}$,
its children $\left(\mathbf{S}_{j}\right)_{j=1,\ldots,k}$ are statements
$\mathbf{A}_{j}\implies\mathbf{C}_{j}$, where $\mathbf{A}_{j}$ is
of the form 
\[
\mathbf{A}_{j}=\mathbf{A}\cup\left\{ \mathbf{C}_{i}\text{ for }i\in\mathcal{I}_{j}\right\} \qquad\text{for some }\mathcal{I}_{j}\subset\left\{ 1,\ldots,j-1\right\} ,
\]
 where $\mathbf{C}_{k}=\mathbf{C}$. 
\item For each leaf (`low-level') statement $\mathbf{S}:\mathbf{A}\implies\mathbf{C}$,
a machine-verifiable proof is provided. 
\item The tree height (distance between the root and leaves) is at most
$L\geq1$. 
\end{itemize}
\begin{rem}
In our framework, an assumption $\mathbf{A}$ may include the introduction
of notation (e.g. `let $x$ be such that ...'); some mechanism for
the resolution of overloaded symbols is naturally needed (but not
discussed here, being an implementation detail). 
\end{rem}

\begin{rem}
Theorems are often explicitly of the form $\mathbf{T}:\alpha\implies\gamma$
(e.g. we could have $\alpha$ corresponding to `$f$ is a holomorphic
function on $\mathbb{C}$' and $\gamma$ corresponding to `$f$ has
a convergent power series expansion on $\mathbb{C}$'). In such a
case, we could write the statement with $\gamma=\mathbf{C}_{*}$,
and $\mathbf{A}_{*}$ would include $\alpha$, as well as the list
of axioms and assumed results used to derive $\gamma$. 
\end{rem}

\begin{rem}
Various formats of proof fit in this framework, including proofs by
contradictions, etc. See Section \ref{sec:appendix-claim-of-proof-examples}
in the Appendix for examples. 
\end{rem}

To make the writing of statements effective, definitions are needed,
which allow one to reserve notation to refer to properties of objects. 
\begin{defn}
\label{def:collection-of-definitions}A collection of definitions
$\mathbf{D}$ introduces symbols specifying properties of objects,
written in formal language, and specifies references from which other
definitions can be imported. 
\end{defn}

\begin{rem}
For instance, a definition could be `is-group(G, op)' which would
imply that $G$ is indeed a set, op is indeed a function $G\times G\to G$
and that the various properties of the op operation are satisfied. 
\end{rem}

\begin{rem}
In the format as we specify it, definitions need to be syntactically
correct, but not necessarily consistent; they are to be thought of
as mere shortcuts enabling more concise and clearer statement formulations. 
\end{rem}

\begin{rem}
In mathematics, definitions of objects such as Riemann's zeta function
$\zeta$ by a series such as $\sum_{n=1}^{\infty}n^{-s}$ on $\mathbb{H}_{1}:=\left\{ s\in\mathbb{C}:\Re\mathfrak{e}\left(s\right)>1\right\} $
involve a lemma (saying the series converges on $\mathbb{H}_{1}$);
in our framework, we would instead define a property `is-zeta-on-H1'
for a function $f:\mathbb{H}_{1}\to\mathbb{C}$ which would mean that
the series $\sum_{n=1}^{\infty}n^{-s}$ converges for any $s\in\mathbb{H}_{1}$
and its value is $f\left(s\right)$; a statement (needed to e.g. prove
the prime number theorem) would then assert that there exists a unique
function $\mathbb{H}_{1}\to\mathbb{C}$ that satisfies the `is-zeta-on-H1'
property; a lemma such as $\zeta\left(s\right)=\prod_{p}\frac{1}{1-p^{-s}}$
would then read \textquotedbl fix a function $\zeta:\mathbb{H}_{1}\to\mathbb{C}$;
assume that $\zeta$ satisfies the `is-zeta-on-H1' property; then
for any $s\in\mathbb{H}_{1}$, we have $\zeta\left(s\right)=\prod_{p}\frac{1}{1-p^{-s}}$\textquotedbl .
A specific language may include shorthands to make the alleviate the
notation, of course. 
\end{rem}

Adding definitions to the proof construction, we obtain the following
format for statements:
\begin{defn}
\label{def:claim-of-proof-format}The Claim of Proof Format (see Figure
\ref{fig:claim-of-proof}) consists of statements, high-level claims
of proof, and machine-level claims of proof, associated with a fixed
logic system $\Lambda$ (as in Section \ref{subsec:computer-proof-systems}): 
\begin{itemize}
\item A statement $\mathbf{S}$ consists of a context $\Gamma$ of definitions
and an implication $\mathbf{A}\implies\mathbf{C}$. 
\item A claim of proof $P_{L}$ of level $L\geq1$ of a statement $\mathbf{S}$
with context $\Gamma$ and implication $\mathbf{A}\implies\mathbf{C}$
consists of a chain of reasoning $\mathbf{D},\mathbf{S}_{1},\ldots,\mathbf{S}_{k}$,
where 
\begin{itemize}
\item $\mathbf{D}$ is a collection of definitions (as in Definition \ref{def:collection-of-definitions}).
\item $\left(\mathbf{S}_{j}\right)_{j=1,\ldots,k}$ are statements with
contexts $\Gamma\cup\mathbf{D}$ and where $\mathbf{S}_{j}$ is the
implication $\mathbf{A}_{j}\implies\mathbf{C}_{j}$, such that
\begin{itemize}
\item $\mathbf{A}_{j}$ is of the form $\mathbf{A}\cup\left\{ \mathbf{C}_{i}\text{ for }i\in\mathcal{I}_{j}\right\} $
for some $\mathcal{I}_{j}\subset\left\{ 1,\ldots,j-1\right\} $.
\item $\mathbf{C}_{k}=\mathbf{C}$. 
\end{itemize}
\item It is claimed that the statements $\mathbf{S}_{1},\ldots,\mathbf{S}_{k}$
have claims of proof of level $\leq L-1$. 
\end{itemize}
\item A claim of proof $P_{0}$ of level $0$ of a statement $\mathbf{S}$
is a sequence of statements which can be validated by a computer,
which follow the rules of the logic system $\Lambda$, and which allow
one to deduce $\mathbf{S}$. 
\end{itemize}
\end{defn}

\begin{rem}
A high-level claim of proof comes with the following implicit claim:
for each $j=1,\ldots,k$, deriving $\mathbf{S}_{j}$ is significantly
easier than deriving $\mathbf{S}_{*}$. 
\end{rem}

The length of a claim of proof is measured by its aggregated length
measure $\mu$ that is an increasing function of the length of its
chain of reasoning (measured in number of symbols in the language
in which it is expressed, possibly with a weight associated with different
symbols); for instance, a natural choice for $\mu$ is simply the
total length of the statements measured in number of symbols. 
\begin{rem}
\begin{figure}
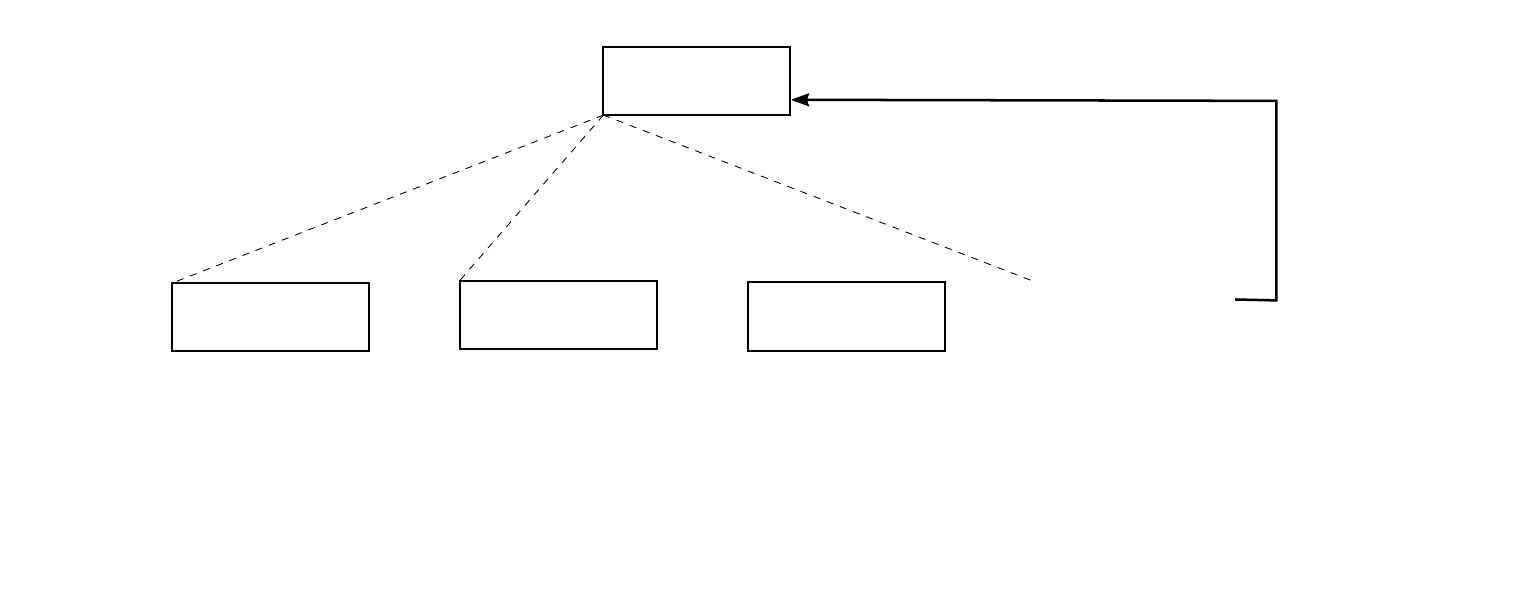

\caption{\label{fig:claim-of-proof}Claim of Proof Format. The boxes' left
sides correspond to assumptions, while the boxes' right sides correspond
to conclusions. The light dashed lines represent possible assumption
dependencies. }
\end{figure}
\end{rem}

\subsubsection{\label{subsec:top-down-informal-view}Top-Down Informal View of SPRIG}

Generalizing the examples, we now give an informal description of
SPRIG, in the order in which the interactions between agents take
place. 

Given a context $\Gamma$ and an aggregated length measure $\mu$
the validation mechanism goes as follows:
\begin{enumerate}
\item Given a parameter $L\geq1$, the root of the process consists of 
\begin{enumerate}
\item either a claim $C_{L}$ of level $L$ (a statement $\mathbf{S}$ with
context $\Gamma$, together with a claim of proof $\mathbf{P}_{L}$
of level $L$, as in Definition \ref{subsec:claim-of-proof-format})
with a pre-specified stake $\sigma_{L}=\sigma_{L}^{\downarrow}$;
\item or a question $Q_{L}$ (i.e. a statement $\mathbf{S}$ with context
$\Gamma$) with status `unanswered' and a pre-specified bounty $\beta_{L}$. 
\end{enumerate}
The root specifies maximum proof lengths $\lambda_{L-1},\ldots,\lambda_{0}$,
stakes $\sigma_{L-1}^{\uparrow},\sigma_{L-1}^{\downarrow},\ldots,\sigma^{\uparrow},\sigma_{1}^{\downarrow},\sigma_{0}^{\uparrow},\sigma_{0}$
and bounties $\beta_{L-1},\ldots,\beta_{0}$ to be used at each of
the lower levels. 
\item For $\ell\geq0$, a claimer might attempt to answer a question $Q_{\ell}=\left(\mathbf{S}\right)$
by producing a claim $C=\left(\mathbf{S},\mathbf{P}\right)$ of level
$\ell'\in\left\{ \ell,0\right\} $, i.e. by providing:
\begin{itemize}
\item A claim of proof $\mathbf{P}$ of level $\ell'$ of the statement
$\mathbf{S}$.
\item If $\ell'=\ell$: 
\begin{itemize}
\item The claim of proof $\mathbf{P}$ must be of total length at most $\mu\left(P\right)\leq\lambda_{\ell}$.
\item The claimer must lock a stake pair $\sigma_{\ell}^{\uparrow},\sigma_{\ell}^{\downarrow}$
(with $\sigma_{\ell}^{\uparrow}=0$ if $\ell=L$). 
\end{itemize}
\item If $\ell'=0$: 
\begin{itemize}
\item The claim of proof $\mathbf{P}$ must be of length at most $\lambda_{0}$, 
\item The claimer must lock a stake $\sigma_{\ell}^{\uparrow}$ and pay
a computation cost $c_{0}$. 
\item In this case, if the claim of proof is (automatically) validated,
it gets the status `validated', otherwise, it gets the status `invalidated'. 
\end{itemize}
\item In all cases:
\begin{itemize}
\item All claims of proof addressing $Q_{\ell}$ must be proposed within
a response time $\tau_{\ell}$ of $Q_{\ell}$'s publication.
\item If a claim of proof addressing $Q_{\ell}$ gets the status `validated',
then this claim is said to be answering $Q_{\ell}$ and $Q_{\ell}$
gets the status `answered'; if no such claim exists, then $Q_{\ell}$
gets the status `unanswered'. 
\end{itemize}
\end{itemize}
\item For $\ell\geq1$, a skeptic might dispute a level-$\ell$ claim $C_{\ell}=\left(\mathbf{S},\mathbf{P}_{\ell}\right)$
by asking a question $Q_{\ell-1}=\left(\mathbf{S}\right)$ where $\mathbf{S}$
is one of the statements appearing in the claim of proof $\mathbf{P}_{\ell}$.
\begin{enumerate}
\item The skeptic must lock a bounty $\beta_{\ell}$ associated with the
question. 
\item All questions about $C_{\ell}$ must be asked within the verification
time $\theta_{\ell}$ of $C_{\ell}$'s publication.
\item If a question originating from the claim gets the status `unanswered',
then this question is said to be a defeating question, and the claim
gets the status `invalidated'; if no such question exists, the claim
gets the status `validated'.
\end{enumerate}
\end{enumerate}
The incentivization mechanism for the proposal is based on bounties
$\left(\beta_{\ell}\right)_{\ell}$ and stakes $\left(\sigma_{\ell}^{\uparrow},\sigma_{\ell}^{\downarrow}\right)_{\ell}$
as follows:
\begin{enumerate}
\item If a claim $C_{\ell}$ addressing a question $Q_{\ell}$ is the first
one to get the status `validated', then $C_{\ell}$ receives the bounty
$\beta_{\ell}$ from $Q_{\ell}$.
\item If a claim $C_{\ell}$ addressing a question $Q_{\ell}$ gets the
status `invalidated', then $Q_{\ell}$ receives the stake $\sigma_{\ell}^{\uparrow}$
from $C_{\ell}$.
\item If a question $Q_{\ell}$ disputing a claim $C_{\ell+1}$ is the first
one to get the status `unanswered', then $Q_{\ell}$ receives the
stake $\sigma_{\ell+1}^{\downarrow}$ from $C_{\ell}$.
\end{enumerate}
In a nutshell, there are debates between claimers (agents providing
claims of proof for statements) and skeptics (agents asking questions
about proofs), where each side debates while having some `skin in
the game': claimers and skeptics must pay upfront to play, and they
will be paid back if their point is valid (i.e. the claim is validated,
or the question remains unanswered) and possibly further compensated
(for a question, if it is the first to defeat the claim it originates
from; for a claim, if it the first to answer the question it originates
from). Winning occurs when one of the sides concedes, and in case
no side wants to concede, after at most $D$ steps, one reaches the
point where only machine-level proofs are accepted; hence the ultimate
judge is an algorithm that runs the checking of the machine-level
proof. 

Claimers and skeptics have dual roles. Let us simply point out the
following differences:
\begin{itemize}
\item The skeptics only have a limited number of possible moves (limited
by the number of steps in the claims of proof that have been published),
while the provers have a virtually infinite number of possible moves
(they can provide any purported claim of proof).
\item While invalidated claims must share their stake to the question they
address (if it exists) and of the first defeating question, the answered
questions must only pay their bounties to the first validated claim
of proof closing them. 
\end{itemize}
\begin{rem}
The protocol interaction does not necessarily stop immediately after
the root status has been set. For instance, a claim may be invalidated
by a first unanswered question, but the status of questions that were
raised after that first question may still be undecided; the protocol
interaction must run until all questions and claims get a status. 
\end{rem}

\subsubsection{\label{subsec:formal-description}Formal Description}

We now give the formal description of the SPRIG protocol introduced
in Section \ref{subsec:top-down-informal-view}. A context $\Gamma$
is fixed by the root, as well as a high-level proof aggregated length
measure $\mu$. 

We work with claims $C_{\ell}$ and questions $Q_{\ell}$ of levels
$\ell=0,1,\ldots$. We denote by $\mathcal{C}_{\ell},\mathcal{Q}_{\ell}$
the corresponding types (and we write e.g. $C_{\ell}\in\mathcal{C}_{\ell}$
to indicate that $C_{\ell}$ is a claim of level $\mathcal{C}_{\ell}$).

For simplicity, the protocol assumes that questions and claims are
submitted in continuous time and cannot appear simultaneously. Similarly,
we assume that the claims and questions are published at the moment
of their creation. See Section \ref{subsec:blockchain-implementation}
for a discussion of this issue in the context of smart contracts. 

\paragraph*{Type $\mathcal{C}_{\ell}$ for $\ell\protect\geq1$}
\begin{itemize}
\item Data:
\begin{itemize}
\item An origin question $Q_{\ell}\in\mathcal{Q}_{\ell}\cup\left\{ \mathrm{none}\right\} $
(we say that the claim originates from $Q_{\ell}$).
\item A mathematical statement $\mathbf{S}$:
\begin{itemize}
\item The statement $\mathbf{S}$ of the origin $Q_{\ell}$ if $Q_{\ell}\neq\mathrm{none}$. 
\item An independent mathematical statement if $Q_{\ell}=\mathrm{none}$.
\end{itemize}
\item A claim of proof $\mathbf{P}_{\ell}=\mathbf{D},\mathbf{S}_{1},\ldots,\mathbf{S}_{k}$
of level $\ell$ of $\mathbf{S}$ of aggregated length $\mu\left(\mathbf{P}_{\ell}\right)\leq\lambda_{\ell}$, 
\item Parameters $\pi_{\mathcal{C}_{\ell}}$: max-length $\lambda_{\ell}$,
stake pair $\left(\sigma_{\ell}^{\uparrow},\sigma_{\ell}^{\downarrow}\right)$
(with $\sigma_{d}^{\uparrow}=0$ if $Q_{\ell}=\mathrm{none}$), verification
time $\theta_{\ell}$, $\mathcal{Q}_{\ell-1}$ parameters $\pi_{\mathcal{C}_{\ell-1}}$
if $\ell\geq1$ (bounty $\beta_{\ell-1}$, response time $\tau_{d-1}$,
$\pi_{\ell-1}$ parameters). 
\end{itemize}
\item Necessary creation of initial funds: $\sigma_{\ell}^{\uparrow}+\sigma_{\ell}^{\downarrow}$.
\item Status outcome:
\begin{itemize}
\item The claim gets the status `invalidated' if there exists a defeating
question, i.e. a question $Q_{\ell-1}\in\mathcal{Q}_{\ell-1}$ that 
\begin{enumerate}
\item originates from the claim and disputes from one of statements $\mathbf{S}_{1},\cdots,\mathbf{S}_{k}$
of its claim of proof;
\item respects the parameters $\pi_{\mathcal{Q}_{\ell-1}}$ (i.e. whose
parameters are $\pi_{\mathcal{Q}_{\ell-1}}$ );
\item has the appropriate creation funds;
\item appears less than $\theta_{\ell}$ units of time after the publication
of the claim;
\item gets the status `unanswered'.
\end{enumerate}
\item Otherwise (if no defeating question exists): the claim has status
`validated'.
\end{itemize}
\item Stakes/bounties outcomes:
\begin{itemize}
\item If the claim gets the status `validated': 
\begin{itemize}
\item the stakes $\sigma_{\ell}^{\uparrow},\sigma_{\ell}^{\downarrow}$
are reimbursed to the claim owner; 
\item if $Q_{\ell}\neq\mathrm{none}$, and it is the first descendent of
$Q_{\ell}$ to get the `validated' status, the bounty $\beta_{\ell}$
of $Q_{\ell}$ is paid to the claim. 
\end{itemize}
\item If the claim gets the status `invalidated': 
\begin{itemize}
\item the stake $\sigma_{\ell}^{\uparrow}$ is paid to the $\mathcal{Q}_{\ell}$
origin, if there is one;
\item the stake $\sigma_{d}^{\downarrow}$ is paid to the first defeating
question owner. 
\end{itemize}
\end{itemize}
\end{itemize}

\paragraph*{Type $\mathcal{Q}_{\ell}$ for $\ell\protect\geq0$. }
\begin{itemize}
\item Parameters $\pi_{\mathcal{Q}_{\ell}}$: bounty $\beta_{\ell}$, response
time $\tau_{\ell}$, $\mathcal{C}_{\ell}$ parameters $\pi_{\mathcal{C}_{\ell}}$
(max-length $\lambda_{\ell}$, stake pair $\left(\sigma_{\ell}^{\uparrow},\sigma_{\ell}^{\downarrow}\right)$,
verification time $\theta_{\ell}$, $\mathcal{Q}_{\ell-1}$ parameters
$\pi_{\mathcal{Q}_{\ell-1}}$ if $\ell\ge1$). 
\item Necessary creation of initial funds: $\beta_{\ell}$.
\item Data:
\begin{itemize}
\item An origin claim $C_{\ell+1}\in\mathcal{C}_{\ell+1}\cup\left\{ \mathrm{none}\right\} $,
we say that the question originates from $C_{\ell+1}$. 
\item A mathematical statement $\mathbf{S}$ with context $\Gamma$: 
\begin{itemize}
\item One of the statements $\mathbf{S}$ in the claim of proof $P_{\ell+1}=\mathbf{D},\mathbf{S}_{1},\ldots,\mathbf{S}_{k}$
of $C_{\ell+1}$ if $C_{\ell+1}\neq\mathrm{none}$; in such a case,
we say that the question disputes $\mathbf{S}$.
\item An independent mathematical statement if $C_{\ell+1}=\mathrm{none}$. 
\end{itemize}
\end{itemize}
\item Outcome: 
\begin{itemize}
\item The stake $\sigma_{\ell}^{\uparrow}$ is paid to the question owner
by any claim $C_{\ell}\in\mathcal{C}_{\ell}$ that
\begin{enumerate}
\item originates from the question;
\item respects the $\pi_{\mathcal{C}_{\ell}}$ parameters (i.e. whose parameters
are $\pi_{\mathcal{C}_{\ell}}$);
\item has the appropriate creation funds;
\item appears less than $\tau_{\ell}$ unit of time after the publication
of the question;
\item gets the status `invalidated'.
\end{enumerate}
\item If there is a claim $C\in\mathcal{C}_{\ell}\cup\mathcal{C}_{0}$ that 
\begin{enumerate}
\item originates from the question;
\item respects the $\pi_{\mathcal{C}_{\ell}}$ parameters;
\item has the appropriate creation funds;
\item appears less than $\tau_{\ell}$ unit of time after the publication
of the question
\item gets the status `validated'
\end{enumerate}
then the question gets the status `answered'.
\end{itemize}
Otherwise (i.e. no such claim has appeared) the question is marked
as `unanswered'.
\item Stakes/bounties outcomes: 
\begin{enumerate}
\item If the question gets the status `answered': 
\begin{itemize}
\item The owner of the first validated claim $C\in\mathcal{C}_{\ell}\cup\mathcal{C}_{0}$
originating from the question gets the bounty $\beta_{\ell}$; 
\item The next such claims get nothing (but lose nothing).
\end{itemize}
\item If the question gets the status `unanswered': 
\begin{itemize}
\item The bounty $\beta_{d}$ is reimbursed to the question owner. 
\item If the question is the first question originating from $C_{\ell+1}$
to get the `unanswered' status, the stake $\sigma_{\ell+1}^{\downarrow}$
is paid by the claim to the question owner; 
\item The next such questions get nothing (but lose nothing). 
\end{itemize}
\end{enumerate}
\end{itemize}

\paragraph*{Type $\mathcal{C}_{0}$}
\begin{itemize}
\item Data:
\begin{itemize}
\item An origin question $Q_{\ell}\in\mathcal{Q}_{\ell}$ (we say that the
claim originates from $Q_{\ell}$) for $\ell\geq0$.
\item The statement $\mathbf{S}$ of the origin $Q_{\ell}$.
\item A machine-verifiable claim of proof $\mathbf{P}_{0}$ of length $\leq\lambda_{0}$. 
\item Parameters: max-length $\lambda_{0}$, stake $\sigma_{0}^{\uparrow}$,
computation cost $c_{0}$. 
\end{itemize}
\item Necessary creation initial funds: $\sigma_{0}^{\uparrow}+c_{0}$.
\item Status outcome:
\begin{itemize}
\item If $\mathbf{P}_{0}$ is validated by the computer verification system,
the claim gets the status `validated'.
\item Otherwise, the claim gets the status `invalidated'. 
\end{itemize}
\item Stakes/bounties outcomes:
\begin{itemize}
\item If the claim gets the status `validated': the stake $\sigma_{0}^{\uparrow}$
is reimbursed to the claim owner. 
\item If the claim gets the status `invalidated': the stake $\sigma_{0}^{\uparrow}$
is paid to the origin.
\item The computation cost $c_{0}$ is burnt.
\end{itemize}
\end{itemize}
\begin{rem}
As each claim and question contains the parameters that the claim
and questions originating from it must respect, the parameters of
the entire interaction defined by the SPRIG protocol are specified
by the parameters of the root. 
\end{rem}

\subsection{\label{subsec:illustrations-of-the-protocol}Illustrations Of SPRIG}

In this subsection, we present illustrations of SPRIG-based interactions. 
\begin{itemize}
\item Claims of proof are depicted as horizontal segments, with dashed segments
representing machine-level proofs. 
\item Questions on parts of the proofs are represented as vertical segments. 
\item Small teal segments on the claims of proof/questions segments represent
the end of the allotted response time. 
\item The validation status of a claim of proof at the end of the interaction
is represented at the right end of the line (validated:$\checkmark$,
invalidated: $\times$)). Claims of proof written in machine-level
language are marked with a green diamond. 
\item Questions are represented as vertical lines (emanating from a statement
in a claim of proof), and their eventual status is represented at
the bottom end of the line (answered: $\checkmark$, unanswered: $\times$). 
\end{itemize}
Four basic examples are provided (Figures \ref{fig:basic-validated-claim}--\ref{fig:basic-unanswered-question}),
which are subparts of two examples of SPRIG  runs, one with a claim
of proof as root (Figure \ref{fig:claim-root-complete}) and another
one with a question as root (Figure \ref{fig:question-root-complete}).

\begin{figure}
\begingroup%
  \makeatletter%
  \providecommand\color[2][]{%
    \errmessage{(Inkscape) Color is used for the text in Inkscape, but the package 'color.sty' is not loaded}%
    \renewcommand\color[2][]{}%
  }%
  \providecommand\transparent[1]{%
    \errmessage{(Inkscape) Transparency is used (non-zero) for the text in Inkscape, but the package 'transparent.sty' is not loaded}%
    \renewcommand\transparent[1]{}%
  }%
  \providecommand\rotatebox[2]{#2}%
  \newcommand*\fsize{\dimexpr\f@size pt\relax}%
  \newcommand*\lineheight[1]{\fontsize{\fsize}{#1\fsize}\selectfont}%
  \ifx\svgwidth\undefined%
    \setlength{\unitlength}{174.33494164bp}%
    \ifx\svgscale\undefined%
      \relax%
    \else%
      \setlength{\unitlength}{\unitlength * \real{\svgscale}}%
    \fi%
  \else%
    \setlength{\unitlength}{\svgwidth}%
  \fi%
  \global\let\svgwidth\undefined%
  \global\let\svgscale\undefined%
  \makeatother%
  \begin{picture}(1,0.62789462)%
    \lineheight{1}%
    \setlength\tabcolsep{0pt}%
    \put(0,0){\includegraphics[width=\unitlength,page=1]{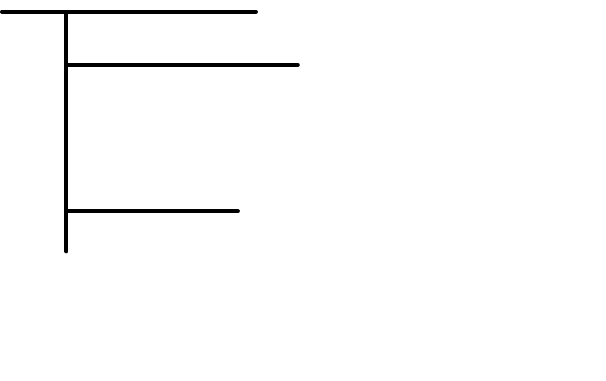}}%
    \put(0.50287829,0.57289569){\makebox(0,0)[lt]{\lineheight{1.25}\smash{\begin{tabular}[t]{l}\\\textcolor{red}{\ding{55}}\end{tabular}}}}%
    \put(0.09559243,0.00592374){\makebox(0,0)[lt]{\lineheight{1.25}\smash{\begin{tabular}[t]{l}\textcolor{blue}{\ding{51}}\end{tabular}}}}%
    \put(0.50181293,0.26545133){\makebox(0,0)[lt]{\lineheight{1.25}\smash{\begin{tabular}[t]{l}\textcolor{red}{\ding{51}}\end{tabular}}}}%
    \put(0.59490274,0.59532395){\makebox(0,0)[lt]{\lineheight{1.25}\smash{\begin{tabular}[t]{l}\textcolor{red}{\ding{51}}\end{tabular}}}}%
    \put(0,0){\includegraphics[width=\unitlength,page=2]{figure-basic-claim-validated.pdf}}%
    \put(0.34446799,0.41417321){\makebox(0,0)[lt]{\lineheight{1.25}\smash{\begin{tabular}[t]{l}\\\textcolor{blue}{\ding{55}}\end{tabular}}}}%
    \put(0.43050929,0.41417321){\makebox(0,0)[lt]{\lineheight{1.25}\smash{\begin{tabular}[t]{l}\\\textcolor{blue}{\ding{55}}\end{tabular}}}}%
    \put(0,0){\includegraphics[width=\unitlength,page=3]{figure-basic-claim-validated.pdf}}%
  \end{picture}%
\endgroup%

\caption{\label{fig:basic-validated-claim}A basic validated claim of proof
(top horizontal segment): one question was raised (vertical segment
on the left). In answer to this question, a first claim of proof was
proposed (middle horizontal segment) and invalidated by two unanswered
questions (two short vertical segments), but then a second claim of
proof was proposed, which was validated as no question was raised
about it. }
\end{figure}

\begin{figure}
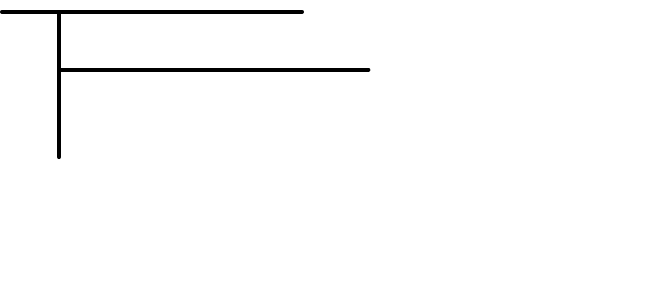

\caption{\label{fig:basic-invalidated-proof}A basic invalidated claim of proof
(top horizontal segment): one question was raised (vertical segment
on the left); as an answer to this question, a claim of proof was
proposed (second horizontal segment from the top); the claim of proof
was itself questioned, and an additional claim of proof was proposed
as an answer (second horizontal segment from the bottom); two questions
were raised about that additional claim of proof (the two bottom-most
vertical segments), the first of which was unanswered, and the second
of which was answered by a validated claim of proof (bottom-most horizontal
segment). As a result, the claim was invalidated.}
\end{figure}

\begin{figure}
\begingroup%
  \makeatletter%
  \providecommand\color[2][]{%
    \errmessage{(Inkscape) Color is used for the text in Inkscape, but the package 'color.sty' is not loaded}%
    \renewcommand\color[2][]{}%
  }%
  \providecommand\transparent[1]{%
    \errmessage{(Inkscape) Transparency is used (non-zero) for the text in Inkscape, but the package 'transparent.sty' is not loaded}%
    \renewcommand\transparent[1]{}%
  }%
  \providecommand\rotatebox[2]{#2}%
  \newcommand*\fsize{\dimexpr\f@size pt\relax}%
  \newcommand*\lineheight[1]{\fontsize{\fsize}{#1\fsize}\selectfont}%
  \ifx\svgwidth\undefined%
    \setlength{\unitlength}{149.78031751bp}%
    \ifx\svgscale\undefined%
      \relax%
    \else%
      \setlength{\unitlength}{\unitlength * \real{\svgscale}}%
    \fi%
  \else%
    \setlength{\unitlength}{\svgwidth}%
  \fi%
  \global\let\svgwidth\undefined%
  \global\let\svgscale\undefined%
  \makeatother%
  \begin{picture}(1,0.89150116)%
    \lineheight{1}%
    \setlength\tabcolsep{0pt}%
    \put(0,0){\includegraphics[width=\unitlength,page=1]{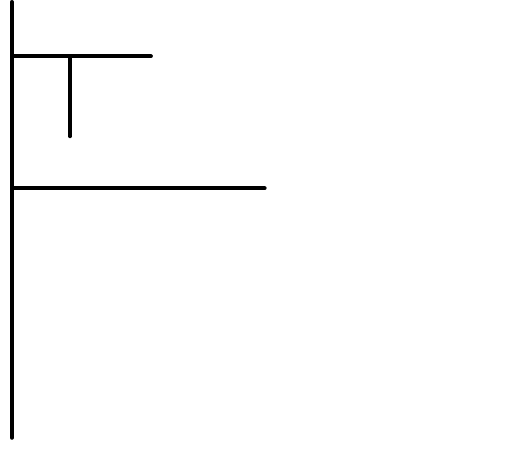}}%
    \put(0.0102622,0.00689487){\makebox(0,0)[lt]{\lineheight{1.25}\smash{\begin{tabular}[t]{l}\textcolor{blue}{\ding{51}}\end{tabular}}}}%
    \put(0.47029659,0.43161511){\makebox(0,0)[lt]{\lineheight{1.25}\smash{\begin{tabular}[t]{l}\textcolor{red}{\ding{51}}\end{tabular}}}}%
    \put(0,0){\includegraphics[width=\unitlength,page=2]{figure-basic-question-answered.pdf}}%
    \put(0.52849208,0.51173238){\makebox(0,0)[lt]{\lineheight{1.25}\smash{\begin{tabular}[t]{l}\textcolor{red}{\ding{51}}\end{tabular}}}}%
    \put(0.29571022,0.76209915){\makebox(0,0)[lt]{\lineheight{1.25}\smash{\begin{tabular}[t]{l}\textcolor{red}{\ding{55}}\end{tabular}}}}%
    \put(0.11355715,0.57426956){\makebox(0,0)[lt]{\lineheight{1.25}\smash{\begin{tabular}[t]{l}\textcolor{blue}{\ding{55}}\end{tabular}}}}%
    \put(0.21181408,0.19559223){\makebox(0,0)[lt]{\lineheight{1.25}\smash{\begin{tabular}[t]{l}\textcolor{blue}{\ding{51}}\end{tabular}}}}%
    \put(0,0){\includegraphics[width=\unitlength,page=3]{figure-basic-question-answered.pdf}}%
  \end{picture}%
\endgroup%
\caption{\label{fig:basic-answered-question}A basic answered question (left-most
vertical segment): a first claim of proof was proposed (top-most horizontal
segment), which was then invalidated by an unanswered question, but
then a second claim of proof was proposed which was validated after
a question was raised (bottom-most vertical segment), and that question
was answered by a claim (bottom-most horizontal segment) which itself
was not questioned.}
\end{figure}

\begin{figure}
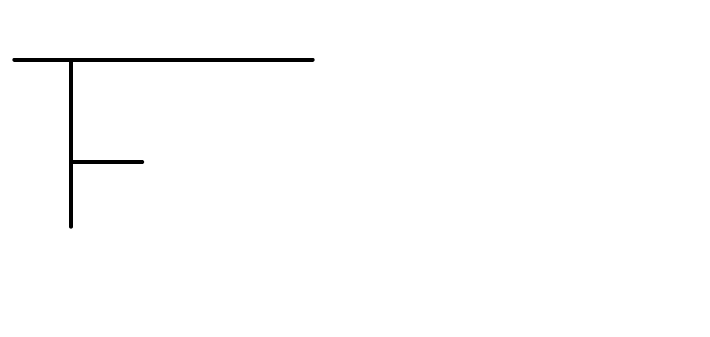

\caption{\label{fig:basic-unanswered-question}A basic unanswered question
(left-most vertical segment). A claim of proof was proposed (top-most
horizontal segment), which resisted a first question (second left-most
vertical segment), as that question was answered by an unquestioned
claim of proof, but which did not resist the second question, as the
only claim of proof answering it (right-most horizontal segment) was
invalidated by an unanswered question.}
\end{figure}

\begin{figure}
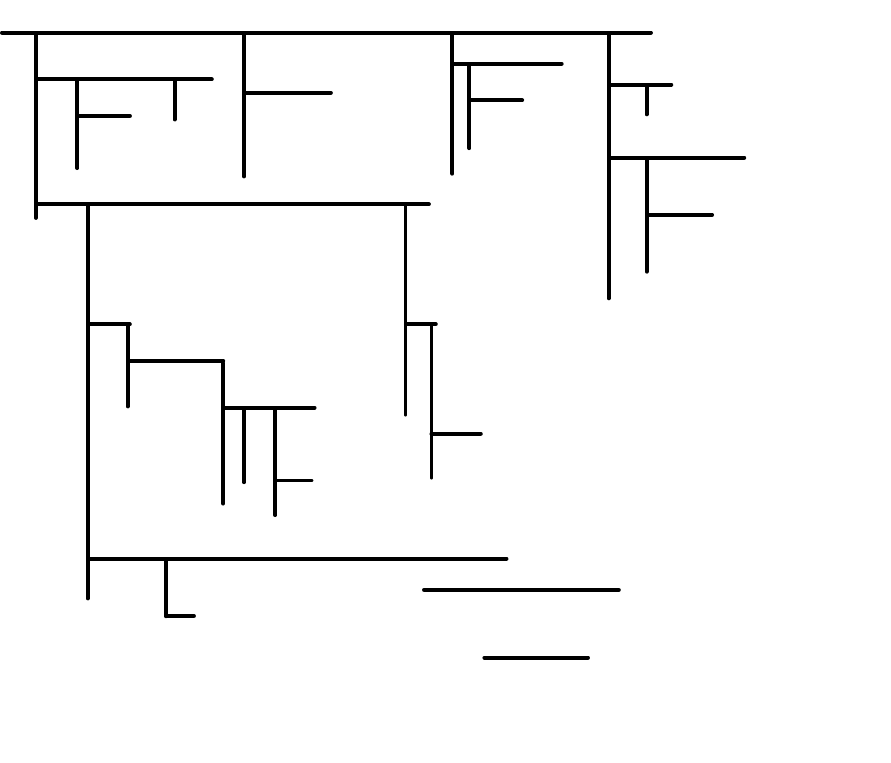

\caption{\label{fig:claim-root-complete}In this case, the claim was eventually
validated. Time goes horizontally in the lifetime of claims and vertically
in the lifetime of questions, but it is not represented at scale.\protect \\
In this figure, four questions were asked and successfully answered;
for the first question, a first claim of proof was proposed, which
was then invalidated, followed by one which was later validated; the
same is true for the fourth question that was asked. The second and
third questions were answered by claims of proof which were later
validated. The second claim of proof proposed for the first question
was itself only validated after the first question about it saw two
claims of proof (a first one, which was invalidated after going down
3 more levels), and a second one, which was validated after questions
were asked and answered at the machine level.}
\end{figure}

\begin{figure}
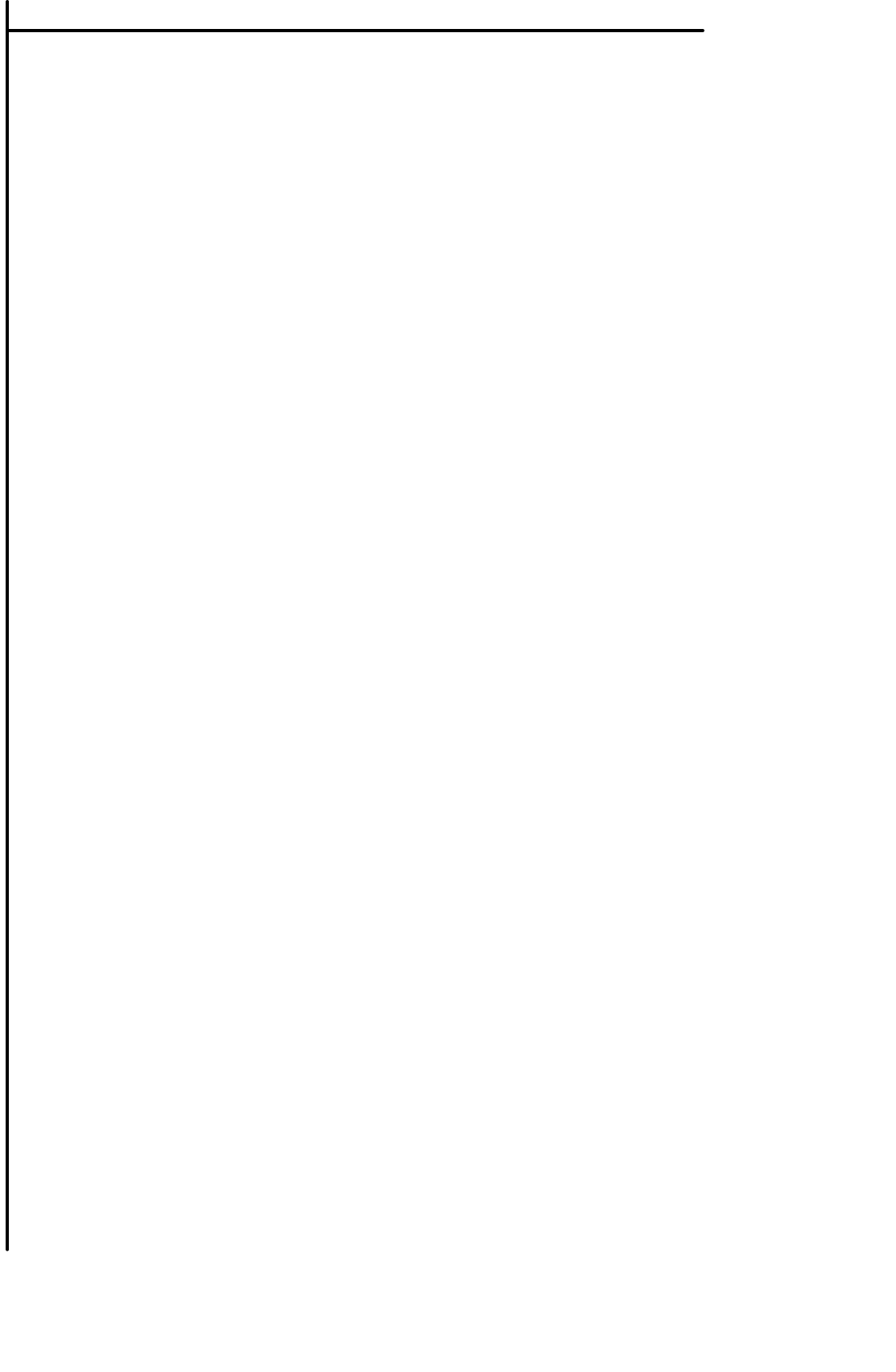

\caption{\label{fig:question-root-complete}Illustration of the SPRIG protocol
when the root is a question. The same notation convention is used
as in Figure \ref{fig:claim-root-complete}. In this case, the question
was answered by the third claim of proof (and the interaction was
ended as a result); a fourth claim of proof was submitted, but its
status was not yet decided at the time the interaction has ended;
all question marks indicate statuses that are not yet assigned.}
\end{figure}

\subsection{\label{subsec:variants-and-extensions}Variants and Extensions}

In this section, we present a number of variants and extensions of
the SPRIG protocol, which can be enabled to optimize for various goals
under certain environments. Many more variants and extensions can
in principle be considered, but we focus here on the ones that appear
to be the most naturally motivated and that live directly on the protocol
itself. A number of further interesting extensions can be then built
upon protocol instances, in particular, decentralized markets for
derivatives can rely on using protocol instances as oracles, as discussed
in Section \ref{subsec:derivatives-markets}. While these variants
and extensions appear to be promising, their detailed analysis is
significantly more complex and goes beyond the scope of this article.

\subsubsection{\label{subsec:time-varying-stakes-and-bounties}Time-Varying Stakes
and Bounties}

Intuitively (and as discussed in Sections \ref{sec:informal-game-theoretic-discussion}
and \ref{sec:a-simplified-equilibrium-analysis} below), the trust
in the fact that a claim of proof is correct depends on how favorable
the incentives are to those asking questions: if a skeptic has little
to gain, and too much to risk in asking questions (in terms of explicit
incentives), he may not ask a question about a claim, unless he is
very confident that the question cannot be answered (and he may not
want invest time and energy to find questions about claims). 

An agent publishing a claim of proof as a means to validate it may
wish to establish a high level of trust in it (by offering a high
stake, to be paid to a skeptic successfully challenging her claim),
but may herself not be very confident in its ultimate validity: there
may be a fairly obvious mistake (for instance of notation), and she
would not want to pay a high price for an obvious mistake. Similarly,
an organization may want to incentivize the solution to a given open
problem, but may not want to pay too much for it, if the question
turns out to be obvious. 

A solution to this is to rely on time-dependent stakes and bounties,
in a way similar to Dutch auctions: start with conditions that are
very favorable to the defending side (the side at the root of the
interaction), and make them more and more favorable for the challenging
side. If there is an obvious challenge (i.e. a question of whether
the root is a claim of proof, if the root is a question), challenging
agents will still be incentivized to pose it as soon as possible (rather
than to wait to increase their reward), as they are in competition
with other challenging agents.

At the same time, the owner of a claim may want to get back some of
the liquidity that she locked into the smart contract after a while,
while still incentivizing the search for mistakes in the proof; in
this case, having the stakes decrease over time could prove useful. 

In any case, the bounty and stake parameters at all levels of SPRIG
can be replaced by time-evolving functions, which must be specified
when the protocol instance is created. 

\subsubsection{\label{subsec:generalized-bounties-and-stakes} Generalized Bounties
and Stakes}

In the basic version of SPRIG, a bounty $\beta$ must be locked to
ask a question, which will be paid to the first validated claim of
proof answering it; dually, a stake pair $\left(\sigma^{\uparrow},\sigma^{\downarrow}\right)$
must be locked to propose a claim of proof, where, in case of invalidation,
the part $\sigma^{\uparrow}$ is paid to the question the claim of
proof was trying to answer, and the part $\sigma^{\downarrow}$ is
paid to the first question that invalidated the claim of proof. A
number of variants can be introduced in terms of distributions for
stakes and bounties. 

It may seem natural to propose an `upwards' bounty $\beta^{\uparrow}$
to be paid to the claim of proof a question derives from; however,
as discussed in Section \ref{sec:informal-game-theoretic-discussion},
such upwards bounties may however open the possibilities for certain
attacks (called the `Plagiarist's attack' below). 

Another possible extension to incentivize agents to disclose more
information into challenging a claim consists in splitting the stake
$\sigma^{\downarrow}$ into $p$ shares $\sigma_{1}^{\downarrow},\ldots,\sigma_{p}^{\downarrow}$
for the first $p$ unanswered questions about the claim (this may
require forbidding to ask the same question more than once). While
a single unanswered question suffices to invalidate a claim, a stake
may incentivize agents to look more closely at a claim of proof even
after a first question has been raised. On the other hand, rewarding
multiple claims of proof that successfully answer a question appears
to be much more delicate: this opens the possibility to Plagiarist's
attacks that are difficult to counter. 

\subsubsection{\label{subsec:claim-of-proof-dependent-parameters}Claim-of-Proof-Dependent
Parameters}

In the basic version of SPRIG, a fixed stake is associated with a
claim of proof, together with a fixed amount of time to raise questions,
and a fixed limit on the claim of proof length. While this setup incentivizes
the publishing of correct claims of proof and incentivizes agents
who spot a problem in a claim of proof to question it, it does account
for the fact that claims of proof may be more or less hard to read.
A way to account for this in the protocol is to allow for the stakes/bounties
and verification times to depend on the complexity of the claims of
proof submitted: in particular, longer claims of proof (as measured
in the lengths of their claims, or the number of them) may warrant
longer verification times (or the verification could be encouraged
by requiring higher stakes); for claims of proof consisting of many
statements, one may want to reduce the bounty to ask a question. 

For such variants, the parameters set by the root owners should be
replaced by functions (also set by the root owner) of the proof complexity
and number of claims. 

\subsubsection{\label{subsec:synchronous-protocol}Synchronous SPRIG }

The original version of SPRIG is intrinsically asynchronous in nature:
each question and each claim of proof runs on an independent clock,
and new questions or claims of proof may come at an arbitrary time,
starting their own clocks. While this process incentivizes agents
to disclose information as soon as they have it, and is more efficient
at closing obvious cases sooner rather than later, a number of situations
may suggest using a synchronous variant of the protocol: after a question
is posed (say), a fixed response time is given to post claims of proof.
Claims of proof are hidden until the response time is elapsed, at
which point they are revealed. Then, a round of questioning starts,
giving a fixed amount of time to ask questions about the various claims
of proof; the questions are also hidden and revealed when the questioning
round ends. Then a third round starts, in which claims of proof are
proposed in response to the questions can be posted and revealed at
the end of the third round. 

This variant of the protocol may be useful for revealing information
at specific times (e.g. yearly contest) or to make the best uses of
a community's resources (perhaps the best critics of a theorem' claim
of proof are other agents trying to submit at the same time their
own claim of proof, and they can focus on criticizing others' proofs
some of the time, while focussing on answering questions the rest
of the time).

\subsubsection{\label{subsec:exclusive-disclosure}Exclusive Disclosure}

In the basic version of SPRIG, agents ask questions when they doubt
the validity of a step in a claim of proof. However, an intrinsic
motivation may come from a question-raiser: that they are curious,
for independent reasons, about the answer to the question. In such
a setup, it may be useful to guarantee to the first question-raiser
an exclusive access to the answers to a question for a brief amount
of time, before the answer is made public (the poster of answers would
be incentivized not to disclose their answers to other parties in
the exclusive time period, as it limits their attack surface). 

\subsubsection{\label{subsec:expedited-validation-or-invalidation}Expedited Validation
or Invalidation}

In a number of cases, it may be desirable to expedite a validation
process: a variant can be introduced that allows a claimer to reduce
the validation time in exchange for higher stakes and/or lower response
times for the lower levels. This feature may prove desirable in certain
situations, but it must be dealt with carefully in order to not introduce
flaws (leading to the validation of a proof that should not have been
validated). 

\subsubsection{\label{subsec:open-questions-and-multi-question-bounties}Open Questions
and Multi-Question Bounties}

One may want to put at the root of a SPRIG instance both a question
and its negation: for instance, the Clay Institute offers a prize
for the first proof of $P=NP$ or of $P\neq NP$. In this case, a
single bounty (1M USD) is put at the root of the two questions, and
it should go to the first validated claim of proof for either question
that gets validated (as a result there is no bounty left for the other
question; this should not be a problem as a statement and its negation
should not have both validated proofs!).

More generally, an institute may want to put a single bounty for the
first agent answering one of a list of questions; as soon as one of
the questions is answered, there will be no bounty left for answering
other questions will be cancelled. Even more generally, a limited
number $K$ of bounties could be made available for the first $K$
questions answered, after which there will be no bounties left for
answering the other questions. 

\subsubsection{\label{subsec:stake-sharing-and-bounty-sharing}Stake-Sharing and
Bounty-Sharing}

A possible downside of the system is the barrier of entry for a participating
agent, who may not have enough funds, while at the same time possessing
useful information. 

For questions, such an agent could put a partial bounty, and wait
until this bounty is completed by other agents: at that moment, the
question is formally asked, and should the question be the first to
invalidate the claim of proof, the stake will be shared by the agents
who put the bounty, at the pro-rata of their bounty share. 

For \textcolor{black}{claims of proof}, such an agent could put an
encrypted claim of proof with a partial stake, try to find other agents
who also believe in it to complete the stake (for instance, by proving
her identity and using her reputation), and decrypt it when the stake
is full (thus avoiding a plagiarist to copy her claim of proof); again,
in this case, if there is a bounty to be won, it will be shared among
the stake holders at the pro-rata of their stake (or according to
some other pre-determined rule). 

\subsection{\label{subsec:blockchain-implementation}Blockchain Implementation}

The SPRIG protocol presented in Section \ref{subsec:protocol-description}
and its variants and extensions presented in Section \ref{subsec:variants-and-extensions},
are designed so it can be implemented on a blockchain, in a fully
decentralized manner, without reliance on an external oracle. In this
subsection, we discuss a number of design questions related to the
implementation of the protocol on a blockchain infrastructure. 

\subsubsection{\label{subsec:automated-proof-settlement}Automated Proof Settlement}

As emphasized in the top-down view of SPRIG (Section \ref{subsec:top-down-informal-view}),
what settles the boundary conditions of the protocol (and hence ensures
its good functioning) is the presence of an ultimate arbiter, in the
form of a computer-based system to verify machine-level claims. Implementing
SPRIG on a blockchain thus requires the ability to perform the necessary
computations on the blockchain to ensure transparency of the result
of the computation (or to offload the computation to another blockchain,
or to find a verifiable way to ensure the relevant computations were
done off-chain). 

While a large number of powerful proof verification programs are available
(see Section \ref{subsec:computer-proof-systems}), their emphasis
is usually on helping users to write proofs. The most desirable features
for a blockchain-based proof verification system are somewhat different. 
\begin{itemize}
\item Low memory usage: ultimately, a smart contract needs to be able to
verify any step of the computation. The sequence of computations may
not need to be performed entirely on-chain, as long as it is auditable
(see e.g. \cite{eberhardt-heiss}).
\item Syntax making the writing of definitions and statements (as specified
in Section \ref{subsec:claim-of-proof-format}) should be transparent
to the agents. This may be helped by the development of open off-chain
statement translators, assisting the users in the formalization of
definitions and statements. 
\end{itemize}
On the other end, the system living on the blockchain can be very
primitive in its ability to assist users to write down proofs; should
debates ever go down to the machine level, proof assistants can in
principle be used off-chain to propose machine-level claims of proof.
Still, if incentives are set well and agents are rational, the presence
of a well-functioning proof system will only serve as a deterrent:
close enough to the machine level, rational skeptics and agents should
already agree on the existence of a machine-level proof and the side
that is wrong is incentivized to concede early.

As a result of the above design goals and considerations, the development
of a proof verification system tailored for them seems desirable.

\subsubsection{\label{subsec:timing-and-concurrency-issues}Timing and Concurrency
Issues}

The block-based structure of blockchains serves crucially as a time-stamp
mechanism to validate the transactions: the consensus on the order
of the blocks serves as the measure of the passage of time (and crucially
at determining anteriority of modifications submitted to the blockchain:
this is in particular what prevents double-spending with Bitcoin).
As a result, the natural time unit of a blockchain is the number of
blocks emitted so far. 

In the description of SPRIG, time is treated as a continuous resource,
and time is asynchronous (except in the synchronous variant discussed
in Section \ref{subsec:synchronous-protocol}). For blockchains with
a sufficiently short validation time, and for non-trivial enough problems
discussed with the protocol, it is unlikely that two questions are
asked simultaneously (i.e. in the same block); however, in such a
case, a rule should be specified. Still, it is important to keep this
granularity in mind to avoid attacks by e.g. a quick plagiarist who
could copy a claim of proof and try to push it into the same block;
the solution in such a case is simply to make claimers first commit
a signed and encrypted version of their proof at least one block before
disclosing its content.

\subsubsection{\label{subsec:stakes-bounties-lock}Stakes and Bounties Lock}

For the protocol implementation, the agents need to lock their bounties
and stakes in the smart contract for a long time. In case they need
liquidity, it is possible for them to resell (i.e. transfer ownership
of) their stake in the contract to a third party. 

For the variant with time-varying stakes and bounties \ref{subsec:time-varying-stakes-and-bounties},
a number of challenges also arise: either the staker should put the
maximum amount of capital upfront or they could be mandated to inject
additional capital as time passes (at the risk of losing their stake
if they don't do so).

\section{\label{sec:informal-game-theoretic-discussion} Informal Game Theoretic
Discussion}

\subsection{\label{subsec:strategic-interactions-and-protocol-outcome}Strategic
Interactions and Protocol Outcome}

As mentioned in Section \ref{subsec:markets-information-and-games},
understanding how agents interact through the SPRIG protocol, as well
as interpreting the validation process outcome requires taking an
economic perspective. Indeed, while SPRIG is a set of rules that,
given the decisions of various users, deterministically defines a
tree, allocates rewards, and eventually settles the status of the
claims and questions, these very decisions are in essence strategic.

A complete characterization of the strategic interaction between users
is out of reach as they depend on a variety of elusive elements. First,
we do not have access to the real world's information structure (the
information set of each user and their beliefs about others\textquoteright{}
information sets), which at any rate would be highly intricate. Second,
this information structure is endogenous since the incentive scheme
can lead agents to work and gather additional information in a way
that is hard to capture (it depends e.g. on the mathematical background
of the agent, the difficulty of the problem). Third, the incentive
scheme itself is not fully characterized by the protocol bounties
and stakes as e.g. (i) a claimer presumably enjoys an (unobservable)
intrinsic reward when their proof is accepted and (ii) there might
be external incentives, too, as rewards related to SPRIG's outcome
might conceivably also be collectable in a secondary/derivatives market.
However, we can identify for each category of agents a number of high-level
features independent of the details discussed above:
\begin{description}
\item [{Provers}] The decision to enter (i.e. start interacting) in a SPRIG
instance depends on one\textquoteright s confidence about the validity
of one\textquoteright s claim of proof, the explicit incentives (stakes
and bounties), the private incentives (intrinsic reward of having
one\textquoteright s claim of proof accepted), beliefs about the skeptics'
ability to identify a flaw in the claim and beliefs about their incentives
for attempting to do so. Given that the skeptics' incentives are also
partly shaped by expectations about the incentives of subsequent claimers,
the validation game is dynamic; the final, machine-level step provides
the boundary condition. One important aspect of this dynamic process
is that the initial claimer need not be the one to address all (or
indeed any) subsequent questions from the skeptics. If the blockchain\textquoteright s
users get a sufficiently good grasp of the claimer\textquoteright s
argument, competition fostered by the incentive scheme makes it likely
that ungrounded skeptics\textquoteright{} challenges are answered
by third parties. This should deter `spamming' by the skeptics. Hence,
the initial claim must be sufficiently clear for the baton to be passed;
but being too explicit and detailed does not seem optimal for the
claimers either. Indeed, in that case, they perform upfront a task
that would have only been needed in case of a question, so that, if
the proof is correct, there is no need to do it immediately, and if
it is incorrect, the excessive level of detail makes it easier to
detect.
\item [{Skeptics}] The decision-making process of skeptics is similar in
the sense that it responds to the same incentive scheme and relies
on the formation of beliefs over the same objects. First, and obviously,
Skeptic\textquoteright s incentives to challenge increase with their
subjective probability of the claim of proof being wrong. Second,
they decrease with the probability that any part of the initial claim
can be converted into machine language in due time if necessary. A
skeptic can deem this unlikely if they observe that the claim of proof
or parts of it are somewhat obscure, or even if they have a sense
that the claim should be correct but the proof is too convoluted to
be transformed into machine language before the deadline. But there
are other incentives for a skeptic to challenge: they might want to
obtain information that is also relevant to another ongoing validation
process; or purely out of scientific interest. 
\end{description}
(Claim of) proof shapes are endogenous in SPRIG, emerging from the
interaction between claimers and skeptics. Indeed, it appears from
the discussion above that with properly designed incentives, claimers
would benefit from writing (claims of) proofs that are concise and
elegant (and easier to convert into machine language if necessary),
without being excessively terse (e.g. because no third party would
have enough information to defend the claim if needed). Hence, beyond
providing a decentralized way to produce a consensus about mathematical
claims, SPRIG also naturally delivers balanced, `agent-tailored' proofs:
sufficiently detailed to be convincing but sufficiently concise to
give intuition and be remembered. In particular, we expect that one
would rarely, if ever, need to reach the final, machine-language step.
As soon as the convertibility to machine level is credible, no skeptic
would have an incentive to push the process to that step. (The classic
analogy is with a government guaranteeing to intervene in case of
a banking panic; if such a guarantee is credible, then the panic would
not occur, and the intervention would never be needed).

The economic approach is also key for dealing with a crucial point:
how to interpret the fact that a claim of proof has been accepted
by the protocol? Understanding incentives is necessary for answering
this question. The simplest example is one where a claim has been
accepted without any challenge: is it because all users were fully
convinced or because the incentive scheme makes it prohibitively costly
(in expectation) to ask questions? In principle, given the correct
economic model (data of all incentives and information structure),
any Bayesian observer can use the protocol's outcome to compute the
probability that the proof is known by the market participants.

In Section \ref{sec:a-simplified-equilibrium-analysis}, we investigate
some of the game-theoretic aspects presented above in a stylized setting.
In particular, we explain how to do a Bayesian estimate of the probability
that a claim of proof is correct given that it has been accepted in
a (highly) simplified version of the protocol.

\subsection{\label{subsec:robustness-properties}Robustness Properties}

A\textcolor{black}{s in many }blockchain systems, strategies in SPRIG
can, broadly speaking, be divided into `honest strategies' and `attacks'.
The former refers to actions whose motivations are aligned with the
purpose of the protocol. The latter refers to attempts to game the
system, i.e. take advantage of the incentive scheme without contributing
to the end goal of the blockchain. We expect SPRIG to be\textcolor{black}{{}
robust. First, its} trust model shares similarities with that of optimistic
rollups \cite{OptRollUp} \cite{Arbitrum18}, and of the TrueBit protocol
\cite{TrueBit19}. Second, it features a large array of\textcolor{black}{{}
}parameters which we expect to be sufficiently rich to shape incentives
that deter attacks. To guide the specific choice of the parameters,
we now list several potential attacks together with which parameters
are to be tuned to thwart them.

\subsubsection{The Carpet-Bomber}

A skeptic may decide to question all parts of a claim in the hope
of stalling the process. The idea would be to induce the claimer to
concede by lack of resources and because there are not enough\textcolor{black}{{}
third-party }claimers available to help them defend. This is similar
to a DDoS attack on the protocol. The skeptic's goal is to collect
the stake of the claim.

This attack can be thwarted by appropriately choosing the question
bounties and the time allotted for the subsequent claimers' replies.
The former should not be too small relative to the stake and the latter
should be sufficiently long.

\subsubsection{The Nitpicker}

A skeptic may decide to ask for more and more details about a claim
of proof and refuse to concede until the machine level is reached.
Such an attack is not only based on the hope that a flaw will be identified
at some point but more importantly on the skeptic's desire to delay
the acceptance of the claim as much as possible. One reason could
be that the skeptics are themselves a claimer of an identical or similar
result, which they want to be accepted first.

This attack induces the claimer to present their proof with lemmas
of similar complexity. This mitigates incentives to nitpick as it
reduces the depth needed in order to expand the proof up to machine
level. Well-balanced bounties (i.e. not too low) and deadlines that
take into account the possibility of nitpicking (i.e. the maximal
time allotted for expansion up to machine level might indeed be reached)
contribute further to thwarting nitpicking attacks.

\subsubsection{The Evasive Prover }

A claimer may decide to be evasive, i.e. to stuff his claim with a
combination of irrelevant lemmas (purposely looking intricate but
for which they actually hold a machine level-proof) and one lemma
of complexity similar to the initial theorem, such that it is not
clear to outsiders which lemma to question. The goal is to deflect
questions towards the irrelevant lemmas and hence get the Claim accepted
and collect the questions' bounties. 

This attack can be thwarted by choosing the following parameters appropriately:
the stakes, the time allotted for the subsequent skeptics' questions,
the maximal level of a proof, and the maximal length of a claim. The
first two should be sufficiently large to incentivize skeptics to
work and identify the weak link. The last two should be sufficiently
small in order to cap the number of deflection targets and force the
claimer to `show their hand' quickly enough.

\subsubsection{The Sandbagger}

This attack mirrors the Carpet-Bombing one: a claimer may decide to
answer a Question with a multitude of claims in the hope of stalling
the process. The idea would be to induce the skeptic to concede by
lack of resources and because there are not enough third-party skeptics
available to help them continue challenging. The claimer's goal is
to collect the bounty of the Question.

This attack can be thwarted by appropriately choosing the claim stakes
and the time allotted for the subsequent skeptics' questions. The
former should not be too small relative to the bounty and the latter
should be sufficiently long.

\subsubsection{The Misleader}

A claimer may decide to stuff his claim with dubious lemmas and pursue
one of the following two strategies: 
\begin{itemize}
\item they attack the dubious lemmas and provide answers themselves in order
to ``intimidate'' the skeptics (improving their general credence
in the initial claim);
\item they attack the dubious lemmas and postpone answers to the very last
moment in order to mislead the skeptics into believing that other
users are already challenging (so there is no point in joining the
fray, as the stakes no longer seem earnable).
\end{itemize}
This attack can be handled similarly to the Sandbagger attack.

\subsubsection{The Plagiarist}

A mathematically illiterate agent can have a firm belief that a claimer
is able to answer a given Question correctly. This may occur, for
instance, on occasion, when an eavesdropper obtains information that
a researcher has a proof for a Question submitted by an institution,
or, more frequently, when the Question concerns the claim of another
claimer. The agent may then attempt to appropriate the proof of the
claimer in order to collect the Question's bounty as illustrated below. 

Consider a mathematician Alice who found a correct proof of a theorem.
She posts a corresponding claim on the blockchain. Then, Bob asks
a question about the claim, targetting statement $\mathbf{S}$. At
this stage, there could be an incentive for Charlie, the mathematically
illiterate agent, to immediately reply to Bob's question with a tautological
answer: `the proof of $\mathbf{S}$ is $\mathbf{S}$'; and to stick
to this strategy when questions are asked/repeated until Alice decides
to provide an answer herself to ensure that her Claim is not rejected.
From that point onwards, Charlie replicates any of Alice's replies
and challenges her using the questions skeptics ask him. 

In doing so, people may prefer to ask the questions straight to Alice
to know if she can provide a satisfying answer; they only challenge
Charlie with the same question if she does not. This may lead to a
faster approval of Charlie's claim. 

Fortunately, Alice can defend herself: as soon as she is challenged,
she answers the question, then asks Charlie the same question if it
was not already done by another skeptic and, instantaneously, provides
the same answer. This thwarts the Plagiarist's attack since it provides
a zero-cost defense mechanism to ensure that Charlie's Claim cannot
be validated before her own Claim.

\section{\label{sec:a-simplified-equilibrium-analysis}A Simplified Equilibrium
Analysis}

In this section, we analyze a tractable sequential game that captures
several key features of the strategic interaction between claimers
and skeptics through SPRIG. The adequate equilibrium concept for such
dynamic games with incomplete information is that of Perfect Bayesian
Equilibrium (PBE) \cite{fudenberg-tirole}. Our setup consists of
a game involving two players: Claimer (pronoun: she) and \textcolor{black}{Skeptic}
(pronoun: he). In our setup, Skeptic does not observe the initial
confidence of Claimer (a correctly estimated probability that her
proof is validatable, i.e. it is possible to unroll it down to machine
level) and must therefore form beliefs about it to proceed. A PBE
is a collection of actions and beliefs such that:
\begin{enumerate}
\item Given beliefs, the action taken by any agent at any node of the game
tree maximizes her or his expected utility.
\item Beliefs at each node of the game tree are consistent with the history
of actions, i.e. computed from Bayes' rule. 
\end{enumerate}
Hence, by constructing PBEs, one recognizes that the mere fact of
initiating a process in the protocol (or, in general, of pushing it
further) has informational content: intuitively, if a claimer posts
a proof, this should reflect the fact that she is relatively confident
about her proof and it should lead to an upwards update of the outsiders'
beliefs. For simplicity, we focus on the signaling content of the
entry decision, not of the parameters (deadlines, stakes, and bounties)
chosen at initiation. One could assume there is a set of `default'
parameters suggested by the protocol and then using them conveys limited
(although non-empty) signaling content. There is no fundamental obstacle
in extending the solution of our game to the case where the choice
of parameters is endogenous; but this would lead to a dramatic increase
in complexity without altering the key messages that we want to convey
in this section.

Section \ref{eco:sec:discussion} discusses the strengths and limitations
of our simplified protocol model and highlights directions in which
it can be enriched.

\subsection{\label{subsec:model-setup}Model Setup}

We consider a highly stylized version of SPRIG. The maximal level
is two and there are only two agents: Claimer and Skeptic. Both are
risk-neutral and do not discount the future. Claimer is endowed with
a claim of proof $C$ (of some statement). 

We say that the claim is validatable if it is possible to unroll it
down to machine level and that it is accepted if either Skeptic renounces
challenging or the claim is indeed unrolled down to machine level. 

Claimer initially receives a (random) signal $P\in[0,1]$, uniform
on $[0,1]$, such that 
\begin{equation}
\mathbb{E}[X\lvert P]=P
\end{equation}
where $X=\mathbf{1}_{\left\{ C\text{ is validatable}\right\} }$ (where
$\mathbf{1}_{A}\left(x\right)=1$ if $x\in A$ and $\mathbf{1}_{A}\left(x\right)=0$
if $x\notin A$). We use the economic term signal to refer to a random
variable whose realization can be informative about the variable $X$
of interest. Here, we could define $U$ to be an independent copy
of $P$ and assume that $C$ is validatable exactly on the event $\left\{ U\leq P\right\} $.
In words, Claimer has more information than Skeptic, as she knows
an updated probability, the realization of $P$, that her claim of
proof is validatable. By contrast, Skeptic initially only has the
knowledge that $P$ is uniformly distributed over $[0,1]$. 

On top of the potential collection of bounties, Claimer derives private
benefits from having her claim of proof accepted. We denote by $B_{2},B_{1},B_{0}\ge0$
the benefits of being accepted at level $2,1,0$ respectively. 

If Claimer decides not to post $C$, the game ends immediately, and
both players receive a payoff of $0$. If she posts $C$, the protocol
specifies a stake $\sigma_{2}^{\downarrow}$ to be collected by Skeptic
in case of a successful challenge. If $C$ remains unchallenged (no
questions are asked about it within time $\theta_{2}$ after publication),
then Claimer gets $B_{2}$, and Skeptic gets $0$. If Skeptic challenges
the claim of proof within time $\theta_{2}$, staking a bounty $\beta_{1}$,
we make the assumption that Claimer gets to know the realization of
$X$ and will be able to provide a machine-level proof of her claim
if valid at level $0$ of the protocol. Given this realization, she
decides whether or not to post a claim at level $1$ within time $\tau_{1}$
after the publication of Skeptic's question. If she does not, her
final payoff if $-\sigma_{2}^{\downarrow}$ and Skeptic's is $\sigma_{2}^{\downarrow}$.
If she does, the protocol specifies two stakes $\sigma_{1}^{\uparrow},\sigma_{1}^{\downarrow}$.
Skeptic has a last chance to challenge the claim within time $\theta_{1}$
after its publication: if he does not, Claimer's payoff is $B_{1}+\beta_{1}$
and Skeptic's is $-\beta_{1}$. If he does, he stakes a bounty $\beta_{0}$
and then  Claimer posts the machine language proof if available within
time $\tau_{0}$ after publication of Skeptic's question. Claimer's
final payoff is $-\sigma_{2}^{\downarrow}-\sigma_{1}^{\uparrow}-\sigma_{1}^{\downarrow}$
if she can not provide a machine language proof, and $B_{0}+\beta_{0}+\beta_{1}$
if she can. The corresponding Skeptic's payoffs are $\sigma_{2}^{\downarrow}+\sigma_{1}^{\uparrow}+\sigma_{1}^{\downarrow}$
and $-\beta_{0}-\beta_{1}$. Since we consider a single skeptic, the
recipient of the up and down stakes $\sigma_{1}^{\uparrow},\sigma_{1}^{\downarrow}$
is the same and hence we can aggregate those in $\sigma_{1}=\sigma_{1}^{\uparrow}+\sigma_{1}^{\downarrow}$.
From now on, we also denote $\sigma_{2}=\sigma_{2}^{\downarrow}$.
The game is represented in Figure \ref{eco:fig:game_tree}. 

The time lengths $\theta_{1},\theta_{2},\tau_{0},\tau_{1}$ are key
to make sure that the status (`challenged or not') of a claim/question
is eventually settled but their value does not play a role in our
stylized model. Hence, our game is fully characterized by the parameter
set 

\[
\Theta=\left\{ B_{0},B_{1},B_{2},\sigma_{1},\sigma_{2},\beta_{0},\beta_{1}\right\} .
\]

\begin{figure}
\hspace*{-35pt}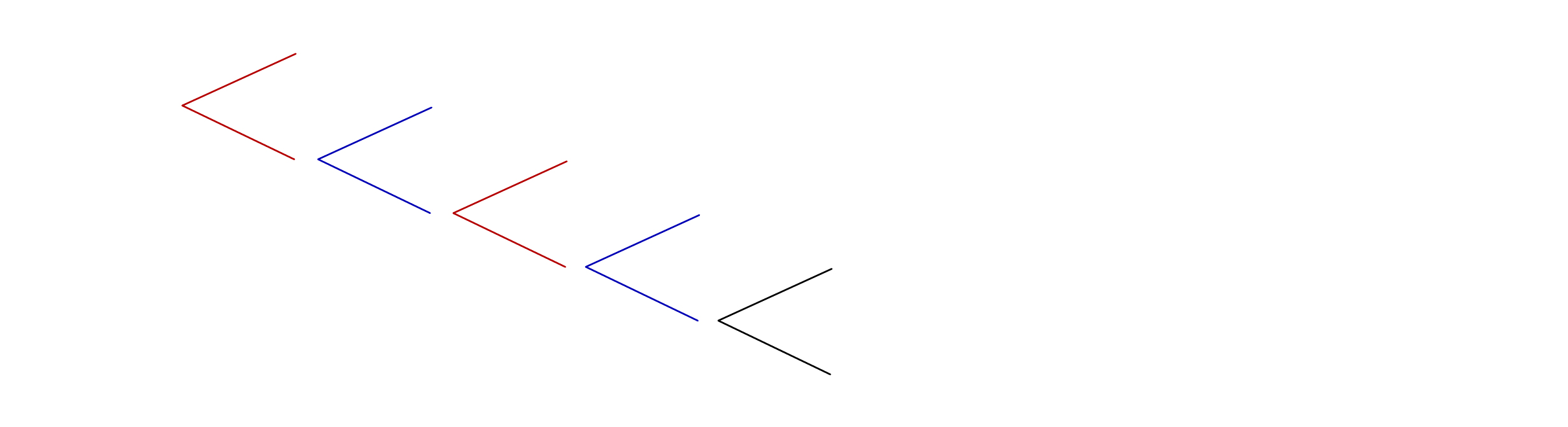
\caption{Simplified protocol game tree}
\label{eco:fig:game_tree}
\end{figure}

\subsection{\label{subsec:model-solution}Model Solution}
\begin{prop}
\label{prop:game} The simplified protocol game possesses a unique
Perfect Bayesian Equilibrium. Depending on the parameter set $\Theta$,
this PBE takes one of the three types detailed below. In all of them,
there is a threshold $\pi^{*}:=\pi^{*}(\Theta)\in[0,1)$ such that
Claimer posts if and only if $P\geq\pi^{*}$ and: 
\begin{itemize}
\item Type 1: Skeptic always challenges, Claimer replies if $X=1$ and replies
with probability $p:=p(\Theta)$ if $X=0$. Conditional on reply,
Skeptic challenges w.p. $q_{1}:=q_{1}(\Theta)$ (and Claimer successfully
terminates the process if and only if $X=1$). 
\item Type 2: Skeptic challenges the initial claim with probability $q_{2}:=q_{2}(\Theta)\in(0,1)$.
Then actions unfold as in Type 1. 
\item Type 3: $\pi^{*}=0$ and Skeptic never challenges. 
\end{itemize}
\end{prop}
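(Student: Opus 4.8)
The plan is to solve the game by backward induction combined with Bayesian consistency of (off-equilibrium) beliefs, working the subgames out from the bottom of the tree in Figure \ref{eco:fig:game_tree} as functions of the posteriors, and then closing the argument with a fixed-point condition on the belief that the initial claim is validatable. Throughout, the only nontrivial inference is Skeptic's, since Claimer observes everything relevant; so the PBE is pinned down by (i) optimal actions at each node and (ii) the two Bayes-consistent posteriors $\nu := \mathbb{E}[X \mid \text{posted}]$ at the first-challenge node and $\mu := \mathbb{E}[X \mid \text{replied}]$ at the level-$1$ node.

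First I would solve the inner (reply) subgame reached after the first challenge, where Claimer has learned $X$. When $X=1$ both continuation payoffs from replying, $B_1+\beta_1$ and $B_0+\beta_0+\beta_1$, are nonnegative and exceed the no-reply payoff $-\sigma_2$, so replying is strictly optimal; hence Claimer always replies when $X=1$. When $X=0$, replying pays $(1-q_1)(B_1+\beta_1)+q_1(-\sigma_1-\sigma_2)$ against $-\sigma_2$ for not replying, so she (weakly) replies iff Skeptic's level-$1$ challenge probability satisfies $q_1\le \bar q_1 := (B_1+\beta_1+\sigma_2)/(\sigma_1+B_1+\beta_1+\sigma_2)$. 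On Skeptic's side, challenging after a reply is optimal iff the posterior is low, precisely iff $\mu \le \mu^* := (\sigma_1+\sigma_2+\beta_1)/(\sigma_1+\sigma_2+\beta_1+\beta_0)$. Combining these best responses with Bayes' rule $\mu = \nu/(\nu+(1-\nu)p)$, where $p$ is Claimer's bluff probability when $X=0$, yields a clean dichotomy: if $\nu \ge \mu^*$ the subgame is pure ($p=1$, $q_1=0$), and if $\nu < \mu^*$ it is properly mixed with $q_1=\bar q_1$ and $p = \nu(1-\mu^*)/((1-\nu)\mu^*)\in(0,1)$.

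Next I would push up to the outer game. Evaluating Skeptic's value $V_S(\nu)$ of issuing the first challenge under the subgame solution gives $V_S(\nu) = -\beta_1 < 0$ in the pure regime $\nu \ge \mu^*$, while in the mixed regime $\nu<\mu^*$ it is affine in $\nu$ with $V_S(0)=\sigma_2$, and a short computation (using $\mu^*\kappa = 1-\mu^*$ with $\kappa=(1-\mu^*)/\mu^*$ and the identity $\bar q_1\sigma_1-(1-\bar q_1)(\sigma_2+\beta_1)=\sigma_1 B_1/D$, $D:=\sigma_1+B_1+\beta_1+\sigma_2$) shows its slope is negative, so $V_S$ is strictly decreasing in $\nu$; in particular Skeptic challenges only when $\nu<\mu^*$. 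For Claimer's posting decision, the posting value $(1-q_2)B_2+q_2[P\,W-(1-P)\sigma_2]$, with $W:=\bar q_1(B_0+\beta_0+\beta_1)+(1-\bar q_1)(B_1+\beta_1)$ the $X=1$ continuation and $-\sigma_2$ the $X=0$ continuation, is affine and strictly increasing in $P$, producing a cutoff rule $P\ge\pi^*$; solving for the cutoff shows $\pi^*(q_2)$ is nondecreasing in Skeptic's first-challenge probability $q_2$, hence so is the induced posterior $\nu(q_2)=(1+\pi^*(q_2))/2$, with $\nu(0)=\tfrac12$. The equilibrium is then the fixed point of this best-response loop: since $q_2\mapsto\nu(q_2)$ is increasing and $\nu\mapsto V_S(\nu)$ is strictly decreasing, the composite has a single crossing, giving exactly three cases — $V_S(\tfrac12)\le 0$ forces $q_2=0$ and (since Claimer then posts for all $P$) $\pi^*=0$, which is Type 3; $V_S(\nu(1))\ge 0$ forces $q_2=1$, which is Type 1; and otherwise a unique interior $q_2\in(0,1)$ solving $V_S(\nu(q_2))=0$, which is Type 2.

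The main obstacle is the uniqueness argument, and within it the strict monotonicity of $V_S$ in $\nu$: the self-correcting mechanism (more aggressive challenging raises the cutoff $\pi^*$, hence the posterior $\nu$, hence lowers Skeptic's incentive to challenge) only delivers a single crossing if the slope is genuinely negative, so I would carefully verify that $V_S(\nu)=\sigma_2+\nu(M-\sigma_2)$ has slope $M-\sigma_2<0$, i.e. $M<\sigma_2$; this reduces to showing the bracket $(B_1+\beta_1+\sigma_2)-\sigma_1 B_1/(\sigma_1+\sigma_2+\beta_1)$ is positive. The remaining delicate points are the regime switch at $\nu=\mu^*$ (where Skeptic's level-$1$ indifference lets $V_S$ take a range of values, so one must check the crossing is never lost) and the degenerate parameter boundaries (some of the $\sigma_i,\beta_i,B_i$ equal to $0$, where thresholds collapse and weak inequalities must be oriented consistently). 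Once these monotonicities and boundary cases are pinned down, existence and uniqueness of the PBE, together with its classification into the three types, follow.
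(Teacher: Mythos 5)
Your proposal is correct and, in architecture, essentially identical to the paper's Appendix proof: threshold entry via monotonicity of Claimer's posting payoff in $P$; the same dichotomy in the subgame at the first-challenge node, with your $\mu^{*}$, $\bar q_{1}$ and $p$ being exactly the paper's $\pi_{1}^{*}$ in (\ref{eq:eco:eq:condpi1s}), $q_{1}$ in (\ref{eco:eq:q1}) and $p(\pi_{e})$ in (\ref{eco:eq:p}); an affine, decreasing Skeptic continuation value; and a three-case classification. Your single-crossing fixed point in $q_{2}$ is a compact repackaging of the paper's uniqueness argument (mutual exclusivity of the types via the decrease of $\phi$, plus the infimum construction of the Type 1 threshold), and your condition $V_{S}(\tfrac12)\leq0$ matches the paper's Type 3 conditions once one reads its ``$\phi(0)$'' as $\phi$ evaluated at $\pi^{*}=0$, i.e.\ at $\pi_{e}=\tfrac12$.

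One genuine divergence deserves note, and it is in your favor. You evaluate Skeptic's value at the first-challenge node by applying the level-$1$ mix $\bar q_{1}$ in \emph{both} $X$-branches (this is exactly where your identity $\bar q_{1}\sigma_{1}-(1-\bar q_{1})(\sigma_{2}+\beta_{1})=\sigma_{1}B_{1}/D$, with $D=\sigma_{1}+B_{1}+\beta_{1}+\sigma_{2}$, enters), which is the internally consistent bookkeeping; using $(1-\nu)p=\nu(1-\mu^{*})/\mu^{*}$ it collapses to the closed form $V_{S}(\nu)=\sigma_{2}-\nu(\sigma_{2}+\beta_{1})/\mu^{*}$. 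Both of your flagged worries then evaporate: the slope is manifestly negative (your bracket condition holds since $\sigma_{1}B_{1}/(\sigma_{1}+\sigma_{2}+\beta_{1})\leq B_{1}<B_{1}+\beta_{1}+\sigma_{2}$), and $V_{S}$ is continuous at the regime switch, with $V_{S}(\mu^{*})=-\beta_{1}$ equal to the pure-regime value, so no crossing can be lost there. The paper's $\phi$ in (\ref{eco:eq:condtype1}) instead uses the mixed payoff $-\beta_{1}-q_{1}\beta_{0}$ in the $X=1$ branch but the no-challenge payoff $-\beta_{1}$ in place of $q_{1}(\sigma_{1}+\sigma_{2})-(1-q_{1})\beta_{1}$ in the $X=0$ reply branch; these conventions are inconsistent, and one checks $\phi(\pi_{e})=V_{S}(\pi_{e})-\pi_{e}q_{1}\beta_{0}$, which is why the paper's $\phi$ exhibits the spurious boundary jump $\phi(\pi_{1}^{*})=-\beta_{1}-\pi_{1}^{*}q_{1}\beta_{0}<-\beta_{1}$. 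Since $\phi$ remains affine with $\phi(0)=\sigma_{2}$ and $\phi(\pi_{1}^{*})<0$, every qualitative step of the paper's proof --- and the Proposition itself --- is untouched; but the quantitative Type 2 root $\pi^{*}$ solving Skeptic's indifference, the induced $q_{2}$ in (\ref{eco:eq:condq0eq2}), and the Type 1/Type 2 parameter boundary should be computed from your $V_{S}$ rather than from $\phi$. Beyond this, only the usual boundary degeneracies (zero parameters, the exactly-flat segment of $q_{2}\mapsto\nu(q_{2})$ where $\pi^{*}(q_{2})=0$, and weak versus strict inequalities at $q_{2}\in\{0,1\}$) remain to be oriented, which you correctly flag and which the paper glosses over as well.
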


The parameters $q_{1},q_{2},p$ and $\pi^{*}$ are known explicitly
and their values are provided in Appendix \ref{eco:proof:prop1}.
The proof of the proposition can be found in the same Appendix. For
illustrations on how the nature of the equilibrium (type 1, 2, or
3) depends on the parameters, see Section \ref{subsec:results}. 

\subsection{Extracting Relevant Information from the Protocol's Outcome}

One key appeal of our model is that it allows us to compute various
measures of protocol reliability, in particular the likelihood that
type I or type II errors (in the statistics sense) occur. This is
crucial because the outcome of the protocol's validation process for
a claim (accepted/validated or rejected/invalidated) does not say,
in isolation, what credence the agents' community should ha\textcolor{black}{ve
in the cl}aim. Section \ref{subsec:results} discusses further these
issues.

\subsubsection{Notation}

We shall need the following notation: 
\begin{itemize}
\item $\mathbf{A}$ (resp. $\mathbf{A}^{c}$) is the event `The claim is
accepted' (resp. rejected, i.e. not accepted). 
\item $\mathbf{A}_{2}$ (resp. $\mathbf{A}_{1}$) is the event `The claim
is accepted at level $2$', i.e. no question was asked (resp. level
$1$, i.e one question was asked). 
\item $\mathbf{Q}_{0}$ is the event `Skeptic challenges Claimer after the
reply of Claimer', i.e. Skeptic posts a question at level $0$. 
\item $\mathbf{R}$ is the event `Claimer replies to first challenge' (i.e.
posts a claim at level $1$). 
\end{itemize}

\subsubsection{Results}
\begin{prop}
\label{prop:prob-extraction} In a Type 1 equilibrium: 
\begin{itemize}
\item The probabilities that a claim of proof is accepted (resp. accepted
and valid) are 
\begin{eqnarray}
\mathbb{P}(\mathbf{A}) & = & \frac{1}{2}\left(1-\pi^{*}\right)\left(1+\pi^{*}+(1-\pi^{*})p(1-q_{1})\right)\\
\mathbb{P}(\mathbf{A},X=1) & = & \frac{1}{2}\left(1+\pi^{*}\right)\left(1-\pi^{*}\right).
\end{eqnarray}
\item The probabilities that a claim of proof is accepted given that it
is valid (resp. false) are 
\begin{eqnarray}
\mathbb{P}(\mathbf{A}\lvert X=1) & = & \left(1+\pi^{*}\right)\left(1-\pi^{*}\right)\\
\mathbb{P}(\mathbf{A}\lvert X=0) & = & \left(1-\pi^{*}\right)^{2}p(1-q_{1}).
\end{eqnarray}
\item The probabilities that a claim of proof is valid given that it is
accepted (resp. rejected) are 
\begin{eqnarray}
\mathbb{P}(X=1\lvert\mathbf{A}) & = & \frac{1+\pi^{*}}{1+\pi^{*}+\left(1-\pi^{*}\right)p(1-q_{1})}\\
\mathbb{P}(X=1\lvert\mathbf{A}^{c}) & = & \frac{\pi^{*2}}{\pi^{*2}+1-\left(1-\pi^{*}\right)^{2}p(1-q_{1})}.
\end{eqnarray}
\item The probabilities that a claim of proof is accepted at level 2 (resp.
1) given that it is accepted and valid are: 
\begin{eqnarray}
\mathbb{P}(\mathbf{A}_{2}\lvert\mathbf{A},X=1) & = & 0\\
\mathbb{P}(\mathbf{A}_{1}\lvert\mathbf{A},X=1) & = & 1-q_{1}.
\end{eqnarray}
Such expressions can also be derived in the case of Type 2 and Type
3 equilibria: see the proof. 
\end{itemize}
\end{prop}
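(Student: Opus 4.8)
The plan is to read off every required probability directly from the Type 1 equilibrium description (which I take as given from Proposition \ref{prop:game}) by tracing the game tree of Figure \ref{eco:fig:game_tree} branch by branch, and then to assemble the conditional probabilities via Bayes' rule. The only genuinely stochastic input is the joint law of the posting decision $\{P\ge\pi^*\}$ and the validatability indicator $X$, so I would begin there. Since $P$ is uniform on $[0,1]$ and $\mathbb{E}[X\mid P]=P$ gives $\mathbb{P}(X=1\mid P)=P$, integrating over $P$ yields
\[
\mathbb{P}(X=1,\,\text{post}) = \int_{\pi^*}^1 t\,dt = \tfrac{1}{2}(1-\pi^*)(1+\pi^*),\qquad \mathbb{P}(X=0,\,\text{post}) = \int_{\pi^*}^1 (1-t)\,dt = \tfrac{1}{2}(1-\pi^*)^2,
\]
while unconditionally $\mathbb{P}(X=1)=\mathbb{P}(X=0)=\tfrac12$. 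These four quantities are the building blocks for everything that follows.

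Next I would trace acceptance through the equilibrium. In a Type 1 equilibrium Skeptic always challenges the initial claim, so no claim is ever accepted at level $2$; this immediately gives $\mathbb{P}(\mathbf{A}_2\mid\mathbf{A},X=1)=0$. The key structural observation is that when $X=1$ the Claimer always replies, and the claim is then accepted regardless of Skeptic's level-$1$ decision (if he renounces, acceptance is at level $1$; if he challenges, the machine-level proof succeeds because $X=1$). Hence $\{X=1,\text{post}\}\subseteq\mathbf{A}$, and conditionally on this event acceptance occurs at level $1$ precisely when Skeptic does not re-challenge, an event of probability $1-q_1$; this yields $\mathbb{P}(\mathbf{A}_1\mid\mathbf{A},X=1)=1-q_1$. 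When $X=0$, by contrast, acceptance requires the Claimer to reply (probability $p$) and Skeptic to renounce the second challenge (probability $1-q_1$), so the acceptance probability conditional on $\{X=0,\text{post}\}$ is $p(1-q_1)$.

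Combining these with the joint law from the first step gives $\mathbb{P}(\mathbf{A},X=1)=\tfrac12(1-\pi^*)(1+\pi^*)$ and $\mathbb{P}(\mathbf{A},X=0)=\tfrac12(1-\pi^*)^2 p(1-q_1)$; their sum factors into the stated $\mathbb{P}(\mathbf{A})$, and dividing each by $\mathbb{P}(X=1)=\tfrac12$ (resp. $\mathbb{P}(X=0)=\tfrac12$) produces the two conditional acceptance rates. The posterior $\mathbb{P}(X=1\mid\mathbf{A})$ is then the ratio $\mathbb{P}(\mathbf{A},X=1)/\mathbb{P}(\mathbf{A})$, and for $\mathbb{P}(X=1\mid\mathbf{A}^c)$ I would use $\mathbb{P}(X=1,\mathbf{A}^c)=\mathbb{P}(X=1)-\mathbb{P}(X=1,\mathbf{A})=\tfrac12\pi^{*2}$ together with $\mathbb{P}(\mathbf{A}^c)=1-\mathbb{P}(\mathbf{A})$, which after simplification gives the denominator $\pi^{*2}+1-(1-\pi^*)^2 p(1-q_1)$. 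I expect no real difficulty beyond bookkeeping; the one point demanding care is that $\mathbf{A}\subseteq\{\text{post}\}$, so that the mass on $\{P<\pi^*\}$ is correctly allocated to $\mathbf{A}^c$ (split between the ``valid but never posted'' and ``invalid'' contributions), together with the consistent use of the fact that $X=1$ forces eventual acceptance. The Type 2 and Type 3 cases follow by the same tree-tracing, inserting the probability $q_2$ (resp. $0$) that Skeptic challenges the initial claim at the top node.
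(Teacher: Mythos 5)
Your proposal is correct and follows essentially the same route as the paper's proof: where the paper's Lemma computes $\mathbb{P}(P\geq\pi^{*}\lvert X=1)$ and $\mathbb{P}(P\geq\pi^{*}\lvert X=0)$ via Bayes' formula, you compute the equivalent joint probabilities by direct integration of $\mathbb{P}(X=1\lvert P)=P$, and both arguments then trace the Type 1 equilibrium play identically (Skeptic always challenges, $X=1$ forces reply and eventual acceptance, $X=0$ requires reply with probability $p$ and no re-challenge with probability $1-q_{1}$) before assembling the posteriors with Bayes' rule. Your bookkeeping, including $\mathbb{P}(X=1,\mathbf{A}^{c})=\frac{1}{2}\pi^{*2}$ and the level-conditional probabilities $0$ and $1-q_{1}$, matches the paper's computations exactly.
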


\subsection{Results\label{subsec:results}}

We now use our model solution to explore various trade-offs faced
by protocol designers. Obviously, the fact that we consider both a
stylized model and a simplified information structure does not allow
us to produce general and quantitative positive or normative statements
about SPRIG parameters. However, our sequential game with imperfect
information is rich enough to illustrate several important forces
that must be taken into account by designers, and that can be illustrated
through examples. As baseline parameters, we consider $B_{2}=10,B_{1}=B_{0}=40$
and $\beta_{1}=\sigma_{1}=\beta_{0}=5$, and let the stake $\sigma_{2}$
vary. Of course, one could evidence similar trade-offs by varying
another parameter, as well as the transitions between the different
equilibrium types. We focus on varying $\sigma_{2}$ merely for the
sake of brevity.

\subsubsection{Stakes and bounties, entry and reliability ratio}

The two following properties are desirable for the protocol: 
\begin{itemize}
\item (i) Have as many correct claims of proof as possible passing through
the protocol.
\item (ii) Have a (very) high probability that an accepted claim of proof
indeed corresponds to a proof (i.e. is correct). The left panel of
Figure \ref{eco:fig:fig1} evidences that the two objectives are,
in general, in conflict with each other. In this plot, the solid line
depicts the probability that a claim is produced and accepted by the
protocol, while the dashed line depicts the probability that a correct
claim is produced and accepted by the protocol. Define the reliability
ratio $RR$ as the ratio of the latter by the former (`dashed/solid').
\end{itemize}
$RR$ is close to the desirable 100\% as long as the equilibrium is
of Type 1, and quickly deteriorates as we enter the Type 2 equilibrium
region. While there is no ideal conciliation of Objectives (i) and
(ii) above, the left panel of Figure \ref{eco:fig:fig1} suggests
that a good way to resolve the trade-off is to select a stake $\sigma_{2}$
just barely sufficient to incentivize Skeptic to systematically challenge.
This maximizes the probability of having a claim going successfully
through the protocol among Type 1 equilibria. Of course, by reducing
$\sigma_{2}$ further, one could increase this probability further,
but the `price' to pay (the quick drop of $RR$) is likely to be prohibitive.

The right panel of Figure \ref{eco:fig:fig1} indicates that $\pi^{*}$,
the equilibrium entry threshold, increases with the bounty $\sigma_{2}$.
This is consistent with intuition: if she must pay a large amount
in case of a successful challenge, Claimer will only enter when she
is very confident about her claim of proof. Hence, increasing $\sigma_{2}$
reduces entry; but it also increases the likelihood that a claim of
proof is true conditional on entry. Thus, the impact of $\sigma_{2}$
on $\mathbb{P}(A)$ was a priori non-trivial. The left panel of Figure
\ref{eco:fig:fig1} indicates that there is a monotone decreasing
relationship between the two variables. In fact, this is always true,
as one can easily deduce from the formulas of Proposition \ref{prop:prob-extraction}.

\begin{figure}[H]
\begin{centering}
\includegraphics[width=220pt]{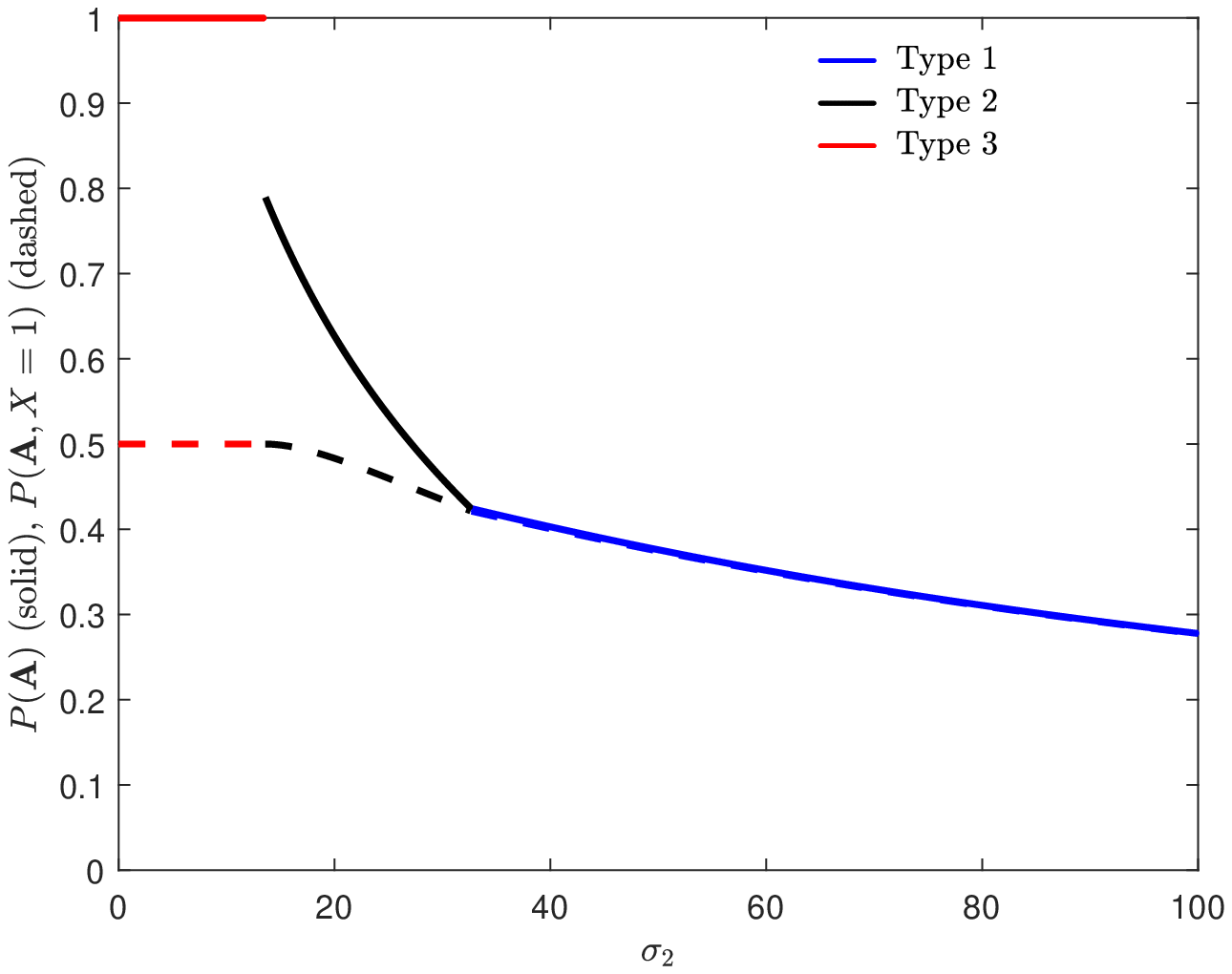} \hspace{20pt}\includegraphics[width=220pt]{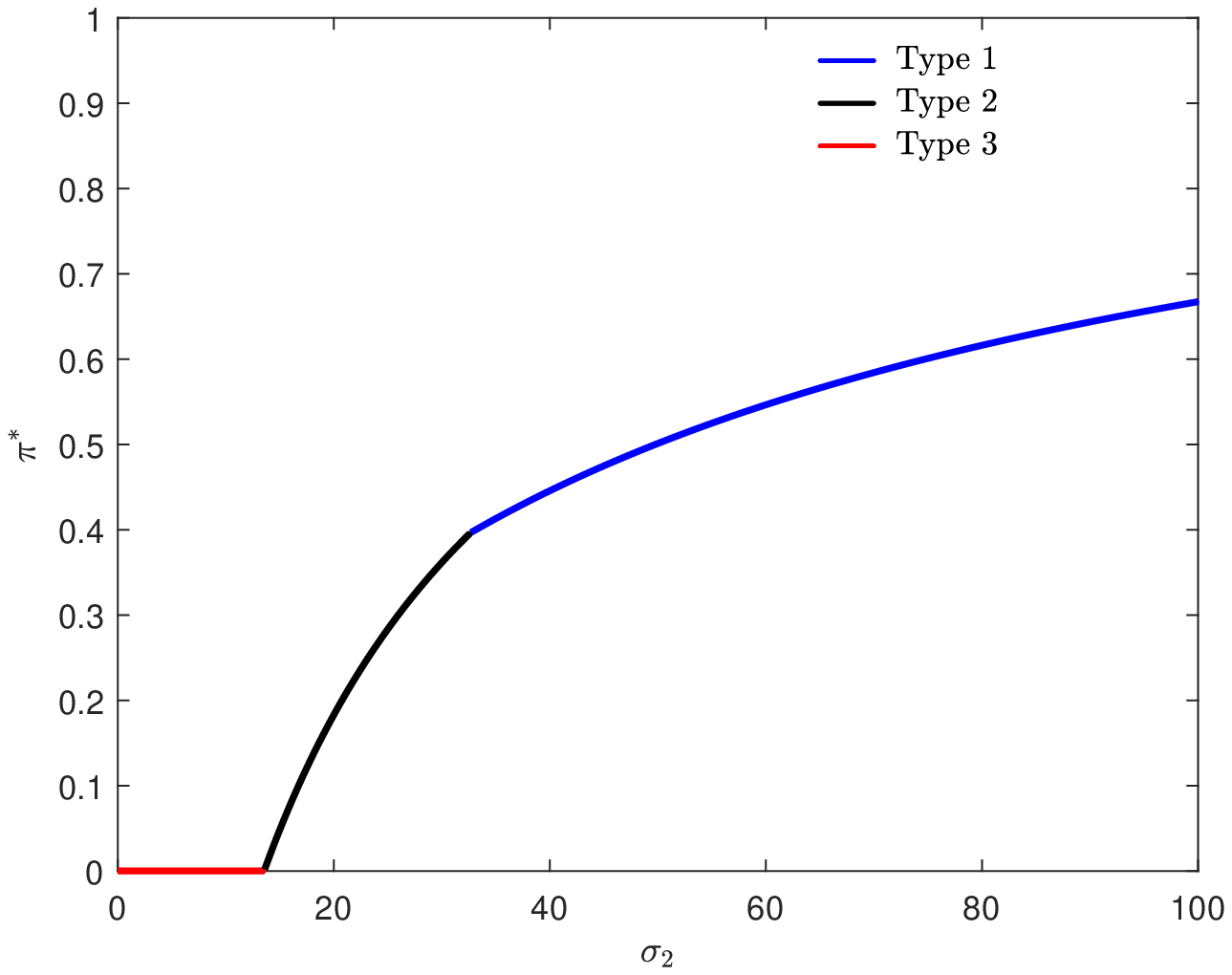}
\par\end{centering}
\centering{}\caption{Bounties, reliability and entry.}
\label{eco:fig:fig1}
\end{figure}

\subsubsection{Statistical Type I and Type II Errors\label{subsec:statistical-type-1-and-type-2-errors} }

In this section, we focus on the four quantities $\mathbb{P}(\mathbf{A}^{c}\lvert X=1)$,
$\mathbb{P}(\mathbf{A}\lvert X=0)$, $\mathbb{P}(X=0\lvert\mathbf{A})$
and $\mathbb{P}(X=1\lvert\mathbf{A}^{c})$. All are measures of the
likelihood that the protocol produces an undesirable outcome (at least
from a scientific standpoint, as a claimer would presumably have no
problem with having an incorrect claim accepted). The last two correspond
to the standard statistical notions of Type I and Type II errors,
respectively. The reliability ratio $RR$ introduced above is simply
the complement to the probability of a Type I error.

Being able to compute such measures is of paramount importance for
protocol users and for the mathematical community at large. Without
them, there is no clear link between the outcome of the validation
process and the credence that humans shou\textcolor{black}{ld give
to} a claim (or its negation). In particular, humans may want to lower
their confidence in claims accepted in some particular equilibrium
type and state. As an illustration, consider the right panel of Figure
\ref{fig:prob-false-positives}. If the equilibrium is of Type 3,
all proofs are accepted, but this is irrelevant from a scientific
standpoint, as the probability of Type I error is $\frac{1}{2}$.
If stakes and bounties are designed in such a way that challenging
is prohibitively expensive, one should not give too much credit to
a claim simply because it has passed through the protocol. Such a
design would be severely flawed. More generally, given a model that
allows one to predict the equilibrium type, one can and should observe
the blockchain in order to refine the statement `the claim has been
accepted' into `the claim has been accepted at level $d$ after history
$h$' and update the correctness probability accordingly.

While the right panel of Figure \ref{fig:prob-false-positives} illustrates
that there are some entirely flawed protocol designs (if it generates
a Type 3 equilibrium) it also highlights that there is no perfect
design: one cannot simultaneously decrease the likelihood of Type
I and Type II errors. Again, the juncture point between Type 1 and
Type 2 equilibria seems to be a good candidate: for instance, any
choice of a larger $\sigma_{2}$ would only very marginally decrease
the probability that an accepted claim is incorrect, but significantly
increase the probability that a rejected claim is correct.

Arguably, the aggregate costs of accepting invalid claims are much
larger than the costs of rejecting correct ones. Indeed, once accepted,
an invalid claim could be used repeatedly in subsequent research or
applications, so that the mistake propagates and its consequences
grow. Moreover, there might not be enough incentives or reasons to
challenge the claim again in the future. By contrast, a wrongly rejected
claimer could always rewrite her claim of proof, improve communication
and post again at a later stage, getting another chance to be accepted.

The left panel of Figure \ref{fig:prob-false-positives} tells a similar
story, with a perspective closer to the point of view of the claimer.
Reducing the risk of rejecting a correct claim of proof increases
the risk of accepting a wrong claim of proof. Once again, thinking
about the aggregate costs of both types of error should allow designers
to select their preferred parameters.

As can be seen, producing graphs such as those of Figure \ref{fig:prob-false-positives}
gives a lot of information about which parameter values are the most
effective. This will be precious for future (more realistic and quantitatively
accurate) model descriptions of the protocol.

\begin{figure}[H]
\begin{centering}
\includegraphics[width=220pt]{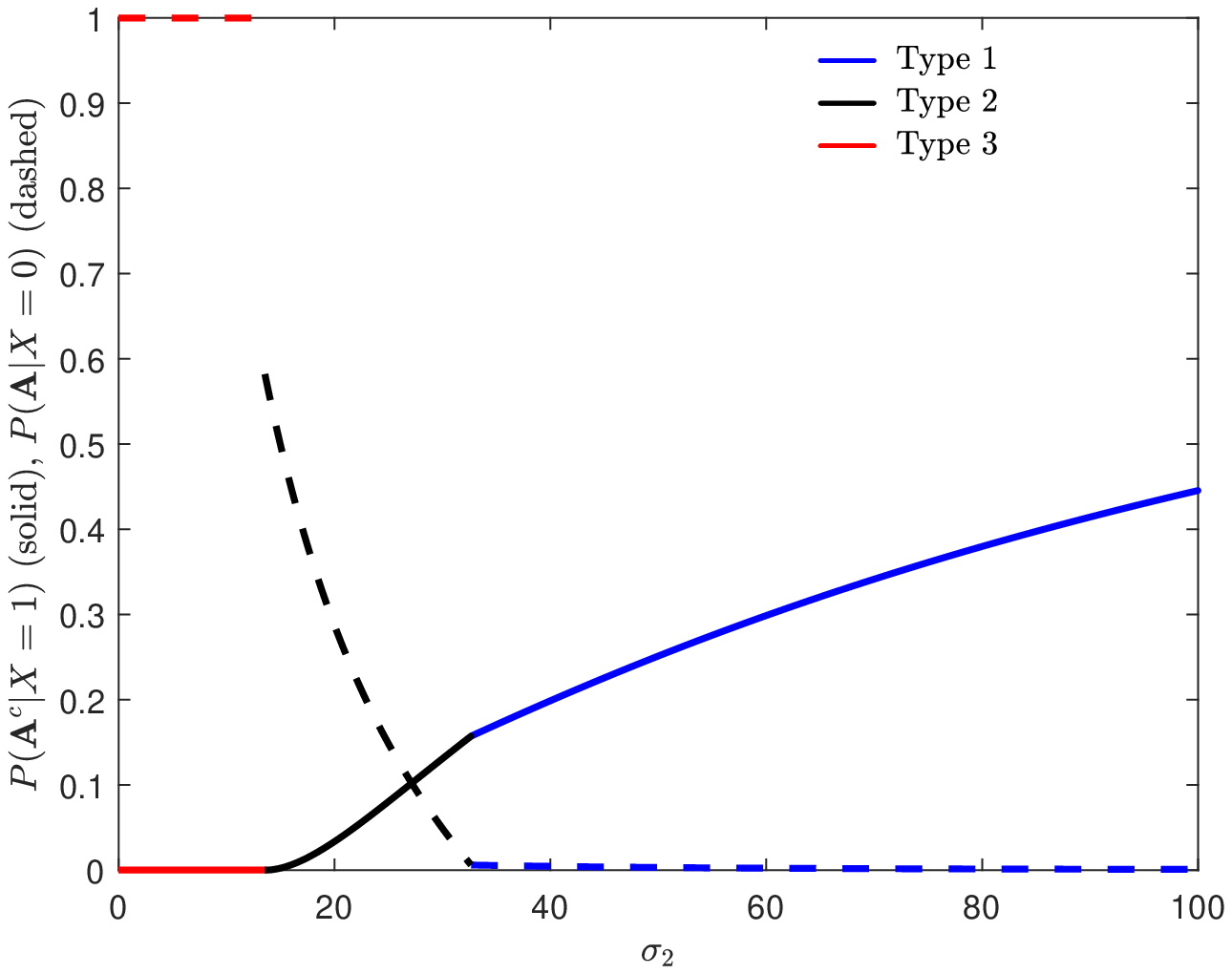} \hspace{20pt}\includegraphics[width=220pt]{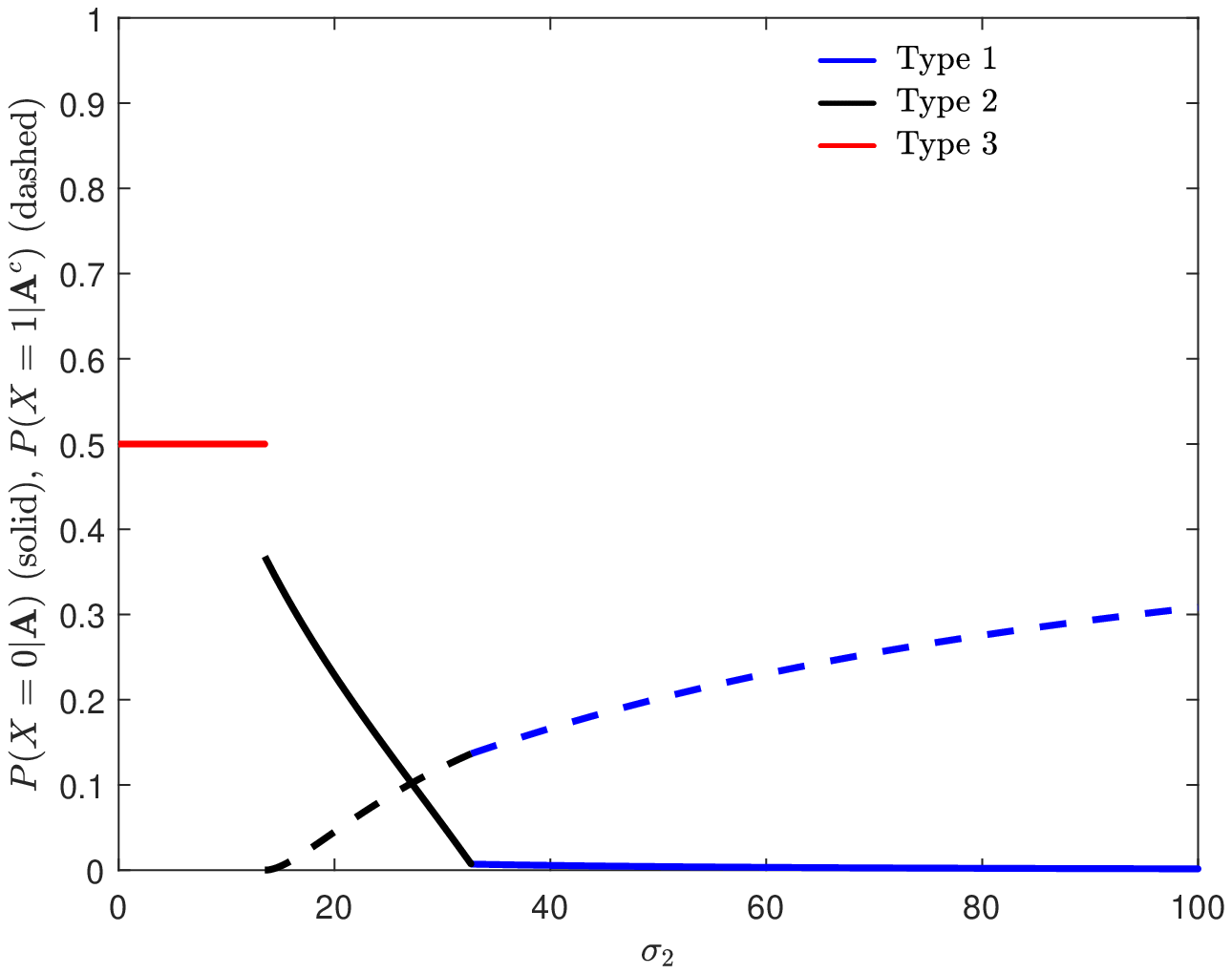}
\par\end{centering}
\centering{}\caption{Probability of false positives and related reliability measures}
\label{fig:prob-false-positives}
\end{figure}

\subsubsection{Terminations at Intermediate Level\label{subsec:terminations-at-intermediate-depth}}

Fixing the likelihood of statistical errors discussed above, a short
acceptance process (termination after a few steps) has advantages
and drawbacks. On the one hand, accepting correct\textcolor{black}{{}
claims of proof quickly saves significant time and intellectual energy
that can be invested in tackling other problems. On the} other hand,
longer acceptance processes can have positive externalities, as they
involve clarifying steps and new lemmas that could be useful in other
contexts. In the current discussion, we wish to focus on the former
point and consider that accepting a correct proof quickly is desirable---to
the extent of course that it does not harm its credibility too much,
see Section \ref{subsec:statistical-type-1-and-type-2-errors}. But
the latter requirement is key: in our model, claims accepted immediately
after posting have little scientific relevance. Indeed, we saw that
the reliability ratio quickly decreases away from 100\% as the likelihood
to terminate immediately increases away from 0. Hence, a good proxy
for `having proofs that terminate before final level without harming
reliability' is the probability of termination after exactly 1 step,
$\mathbb{P}(\mathbf{A}_{1}\lvert\mathbf{A},X=1)$. This quantity is
depicted in Figure \ref{fig:prob-correct-termination-at-intermediay-depth}.

Again, the behaviour of this quantity as a function of $\sigma_{2}$
is non-trivial. Indeed a larger stake (i) increases the incentives
to challenge (direct `greed' effect) but also (ii) decreases them
as the average quality of a proof is higher (this is evidenced by
the fact that $\pi^{*}$ is larger). As before, a good choice seems
to take the lowest $\sigma_{2}$ that implements a Type 1 equilibrium.

\begin{figure}[H]
\centering{}\includegraphics[width=220pt]{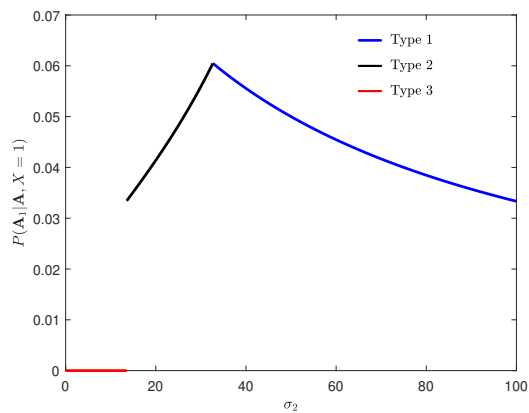} \caption{Probability of correct termination at intermediary level.}
\label{fig:prob-correct-termination-at-intermediay-depth}
\end{figure}

\subsection{\label{subsec:discussion-of-the-models-analysis}Discussion of the
Model's Analysis}

\label{eco:sec:discussion} Our stylized model captures several important
features. First, we take into account the likely information asymmetry
that exists between claimers and skeptics (at least at the time a
claim is posted). Then, the theory of signaling games allows us to
predict how the mere fact of posting a claim impacts the `market's
beliefs'. Second, our simplified game is dynamic. In particular, we
can understand the impact of stakes and bounties at further levels
on current decisions, and get estimates of the probability that the
protocol stops before the final, machine-language step. Third, our
model is rich enough to inform a Bayesian agent about the correctness
of a claim, using its status in the protocol (accepted vs rejected).

The model can be enriched in several directions. First, Skeptic could
be able to perform some work on his own: by paying some cost, he could
be able to access a private signal about the correctness of Claimer's
proof. We have studied this possibility in a one-period version of
the model. New insights appear, but the enlargement of Skeptic's strategy
set also implies the emergence of multiple equilibria. Those are potentially
interesting but render predictions difficult. Second, the information
structure could be enriched: there could be several skeptics (and
several subsequent claimers replying  to these skeptics), each potentially
endowed with their own information. A challenging and very interesting
question is to understand how information dynamically incorporates
in such a context. Third, our stylized model is not able to answer
questions such as whether we should expect SPRIG instances to generally
terminate at the top level or the machine level, or instead in between,
or regarding the structure of the tree that an initial claim or question
generates.

\section{\label{sec:applications-and-outlooks}Applications and Outlook}

In this section, we discuss how the SPRIG protocol provides a solution
to the challenges of mathematical derivation raised in Section \ref{sec:introduction},
which are centered around the communication of trustable, succinct,
and informative proofs in a system with agents with various levels
of information. 

\subsection{\label{subsec:theorem-verification}Theorem Verification}

The validation of a theorem's proof by authors can be done through
a protocol: they can place a stake (which may increase over time,
as discussed in \ref{subsec:time-varying-stakes-and-bounties}) and
set up a SPRIG instance for a given amount of time, incentivizing
anyone to find a gap in their proof. Compared to the classical publishing
model, many more agents are incentivized to be skeptical of the proof
(and no one is pressured to participate either), and their questions
can be assumed to be made in good faith (since there is nothing to
gain by asking trivial questions); also, the anonymity of the reviewers
is guaranteed (unlike the reviewing process, which consists in the
redaction of a report, and which relies on an editorial board, both
of which may leak information). The results of the validations can
thus be made transparent and convey information about the validation
of theorems. At the same time, as they provide their claim of proof,
the authors can also publish a paper written in an informal way, which
may help the community participate in the process more rationally. 

\subsection{\label{subsec:bounty-for-open-problem}Bounty for Open Problem}

The research on open questions in mathematics can be incentivized
by bounties, such as the celebrated Millennium Problems posted by
the Clay Institute. In cases such as that of the Millennium Problems,
an open two-sided question is at the root of the problem, as discussed
in Section \ref{subsec:open-questions-and-multi-question-bounties}. 

SPRIG allows one to outsource the validation of claims of proof (which
in principle relies on a committee), to disincentivize bogus claims
of proof (a stake must be put to propose a claim of proof), and to
limit conflicts of interest. 

Incentives for shorter answers can also be added, by creating extra
challenges, with tighter limits on proof length, or by using claim-of-proof-dependent
parameters (as in Section \ref{subsec:claim-of-proof-dependent-parameters}).

\subsection{\label{subsec:security-proof-certification}Security Proof Certification}

An organization may want to elicit trust in its system. For instance,
it may want to publish their source and incentivize the public to
find security flaws in any of $N$ subsystems. It may have a limited
number $K$ of bounties available for finding problems in any of the
$N$ subsystems. This may be done using the multi-question bounty
variant discussed in Section \ref{subsec:open-questions-and-multi-question-bounties},
and elicit a trust in the system (e.g. that there is no flaw in any
of the subsystems) that is as strong as if there were $N$ bounties,
while at the same time locking up and risking only $K$ bounties worth
of capital. 

\subsection{\label{subsec:automated-theorem-proving}Automated Theorem Proving}

A great deal of effort has been put in recent years into constructing
intelligent automated provers, relying on e.g. reinforcement learning
techniques or text prediction mechanisms, with encouraging successes
\cite{urban-jakubuv}. SPRIG can serve as a playground for the development
of such agents, allowing them to participate using some level of information
(in particular by first developing an ability to write low-level proofs
or to validate them), and learning by playing.

\subsection{\label{subsec:derivatives-markets}Derivatives Markets}

A promising feature of SPRIG is that its outcomes can then be used
as oracles for other smart contracts. In particular, other prediction
markets can run on such outcomes. 

For instance, agents can inject information by betting that a certain
question will or will not be answered before a certain time. Or they
could bet that conditionally on there being an unanswered question,
this question will challenge a specific step $S$ of the proof, thereby
indicating that $S$ might be the weak link.

Securities markets relying on SPRIG may also prove to be useful for
incentivizing different types of contributions for agents. For instance,
an agent able to provide good formalizable heuristics but not knowing
how to formalize them may participate in a market betting that a certain
question will be answered before a certain time, buy (for relatively
cheap) a security betting that it will be answered, and then publish
her heuristics; if it looks like the heuristics can be formalized
by some agent in time, the odds for betting that the question will
be answered will change, and she can net a profit by re-selling her
security or letting it mature. Thus, she can inject interesting information
into the market, i.e. information that changes the feasibility landscape
of proof construction by the community. 

\subsection{\label{subsec:beyond-mathematical-reasoning}Beyond Mathematical
Reasoning}

Beyond mathematics, many fields rely on rigorous formal reasoning
intertwined with external elements of reasoning. Adding support for
external sources for SPRIG appears promising for a number of applications:
\begin{itemize}
\item Support for importing empirical knowledge in the protocol: this would
allow it to submit and verify arguments pertaining to experimental
sciences.
\item Support for numerically-justified heuristic arguments or recognized
heuristics: this would allow for useful derivations in e.g. theoretical
physics.
\item Support for validated time-stamped predictions: this could help rational
discourse in disciplines based on forecasting. An economic model could
be presented and challenged similarly t\textcolor{black}{o claims
of proof} in SPRIG. In lieu of the machine-level terminal condition,
the final validation step would be given by the publication of official
numbers. A `claim' (i.e. a model) would then be validated if it has
correctly predicted a n-tuple of economic variables (e.g. interest
rate set by the Fed, GDP) up to a prespecified error margin.
\item Support for oracles with zero-knowledge proofs: this would allow for
auditable arguments in public debates in which certain sources must
be protected. 
\end{itemize}

\section{Conclusion}

In this paper, we introduced the Smart Proofs by Recursive Information
Gathering (SPRIG) protocol, which allows agents to propose and verify
succinct and informative proofs in a decentralized fashion. Claimers
and skeptics `debate' about statements and their proofs: consensus
arises from the skeptics being able to request details on steps that
they feel could be problematic and from the claimers being able to
provide details answering the skeptics' requests. Importantly, to
participate in the process, claimers and skeptics must attach a bounty/stake
to their moves: this gives the proper incentive for subsequent users
to verify those. As a result, agents with various types of information
can participate and inject their knowledge into the proof construction
and verification process; this allows one to strike a balance between
the `short collection of insightful statements' vs `list of all the
statements needed to establish perfect trust' tradeoff in mathematics
writing. 

In our claim of proof format, mathematical proofs can be viewed as
trees, in which claimers and skeptics can expand branches containing
the relevant level of detail for the agents in the community: branches
only grow in places where there is uncertainty, until either that
uncertainty is cleared or a specific problem is isolated. This resulting
subtree thus serves as a proof that is useful to the community, as
it makes the consensus-building process transparent and can help agents
build their own credence in the validity of the proof. 

Our analysis of SPRIG and its robustness is based on game-theoretic
considerations that take into account the various incentives of the
agents, address possible attacks, and leading up to a detailed equilibrium
analysis of a simplified protocol. While the complete SPRIG protocol
is very complex to study analytically, our results give a clear insight
into a number of qualitative aspects of its strategic features. 

We also present a number of variants and applications of SPRIG, allowing
it to be useful in numerous contexts, and demonstrating its versatility. 

\section*{Acknowledgements}

The authors would like to thank Tarun Chitra for enlightening and
inspiring explanations about blockchains and many other topics, Thibaut
Horel for numerous useful suggestions about the present manuscript,
as well as Juhan Aru, Dmitry Chelkak, Fedor Doval, Julien Fageot,
Patrick Gabriel, Max Hongler, Kalle Kytölä, Reda Messikh, Justin Neumann,
Christophe Nussbaumer, Daniele Ongari, Victor Panaretos, Stanislav
Smirnov, Fredrik Viklund, Jérémie Wenger, and Matthieu Wyart for interesting
conversations. We also thank Kevin Buzzard for a very much appreciated
feedback on an earlier version of this manuscript. 

C.H. acknowledges support from the Blavatnik Family Foundation and
the Latsis Foundation.

\section*{E-mail Addresses}

Sylvain Carré: \texttt{sylvain@sprig.ch}

Franck Gabriel: \texttt{franck@sprig.ch}

Clément Hongler:\texttt{ clement@sprig.ch}

Gustavo Lacerda: \texttt{gustavo@sprig.ch}

Gloria Capano: \texttt{gloria@sprig.ch}

\newpage{}

\newpage{}

\specialsection{\label{sec:appendix-claim-of-proof-examples}Appendix: Claim of Proof
Examples}

In this appendix, we give a number of examples of proofs written in
the claim of proof structure described in Section \ref{subsec:structured-proofs}
(we naturally only give subtrees of the entire trees). We denote by
$\mathbf{A}_{0}$ the standard background assumptions (including basic
axioms and whatever statement is taken for granted). 

\subsection{\label{subsec:a-simple-proof}A simple proof}

A simple proof concerns the existence of an infinite numer of primes.

In this case, we have $\gamma=\left\{ \text{there exists an infinite number of primes}\right\} $
and the root of the claim of proof is
\begin{itemize}
\item $\mathbf{S}_{*}:\mathbf{A}_{*}\implies\mathbf{C}_{*}$, with $\mathbf{A}_{*}=\mathbf{A}_{0}$
and $\mathbf{C}_{*}=\gamma$. 
\end{itemize}
The nodes at distance $1$ from the root are:
\begin{itemize}
\item $\mathbf{S}_{1}:\mathbf{A}_{1}\implies\mathbf{C}_{1}$, where $\mathbf{A}_{1}=\mathbf{A}_{0}$
and $\mathbf{C}_{1}$ corresponds to `For any $N\geq2$, $N!+1$ is
not divisible by any $k\in\mathbb{N}$ with $2\leq k\leq N$';
\item $\mathbf{S}_{2}:\mathbf{A}_{2}\implies\mathbf{C}_{2}$, where $\mathbf{A}_{2}=\mathbf{A}_{0}\cup\left\{ \mathbf{C}_{1}\right\} $
and $\mathbf{C}_{2}$ corresponds to `For any $N\geq2$, there exists
a prime number $p>N$'.
\item $\mathbf{S}_{3}:\mathbf{A}_{3}\implies\mathbf{C}_{3}$, where $\mathbf{A}_{3}=\mathbf{A}_{0}\cup\left\{ \mathbf{C}_{2}\right\} $
and $\mathbf{C}_{3}=\mathbf{C}_{*}$.
\end{itemize}
If we expand the proof of $\mathbf{S}_{2}$ (at distance $2$ from
the root), we find: 
\begin{itemize}
\item $\mathbf{S}_{2,1}:\mathbf{A}_{2,1}\implies\mathbf{C}_{2,1}$, where
$\mathbf{A}_{2,1}=\mathbf{A}_{2}$ and $\mathbf{C}_{2,1}$ corresponds
to `For any $N\geq2$, any prime factor of $N!+1$ is larger than
$N$'.
\item $\mathbf{S}_{2,2}:\mathbf{A}_{2,2}\implies\mathbf{C}_{2,2}$, where
$\mathbf{A}_{2,2}=\mathbf{A}_{2}\cup\left\{ \mathbf{C}_{2,1}\right\} $
and $\mathbf{C}_{2,2}=\mathbf{C}_{2}$. 
\end{itemize}

\subsection{\label{subsec:a-proof-by-contradiction}A proof by contradiction}

Proofs by contradiction can be formulated naturally in our framework.
A classical proof by contradiction is that of the fundamental theorem
of algebra $\alpha\implies\gamma$, where
\begin{itemize}
\item $\alpha$ corresponds to `$P$ is a complex polynomial of degree $\geq1$'.
\item $\gamma$ corresponds to `there exists $z\in\mathbb{C}$ such that
$P\left(z\right)=0$'.
\end{itemize}
In this case, the root of the claim of proof is
\begin{itemize}
\item $\mathbf{S}_{*}:\mathbf{A}_{*}\implies\mathbf{C}_{*}$, where $\mathbf{A}_{*}=\mathbf{A}_{0}\cup\left\{ \alpha\right\} $,
and $\mathbf{C}_{*}=\gamma$. 
\end{itemize}
The nodes at distance $1$ from the root are:
\begin{itemize}
\item $\mathbf{S}_{1}:\mathbf{A}_{1}\implies\mathbf{C}_{1}$, where $\mathbf{A}_{1}=\mathbf{A}_{*}$
and $\mathbf{C}_{1}$ corresponds to `There exists $M,R>0$ such that
$\left|P\left(z\right)\right|\geq M$ for all $\left|z\right|\geq R$';
\item $\mathbf{S}_{2}:\mathbf{A}_{2}\implies\mathbf{C}_{2}$, where $\mathbf{A}_{2}=\mathbf{A}_{*}$
and $\mathbf{C}_{2}$ corresponds to `$\left|P\right|$ cannot have
a nonzero minimum on $\mathbb{C}$'; 
\item $\mathbf{S}_{3}:\mathbf{A}_{3}\implies\mathbf{C}_{3}$, where $\mathbf{A}_{3}=\mathbf{A}_{*}\cup\left\{ \mathbf{C}_{1},\mathbf{C}_{2}\right\} $
and $\mathbf{C}_{3}=\mathbf{C}_{*}$.
\end{itemize}
If we go further into the details of the proof of $\mathbf{S}_{2}:\mathbf{A}_{2}\implies\mathbf{C}_{2}$
(the heart of the proof by contradiction), we have (at distance $2$
from the root):
\begin{itemize}
\item $\mathbf{S}_{2,1}:\mathbf{A}_{2,1}\implies\mathbf{C}_{2,1}$ where
$\mathbf{A}_{2,1}=\mathbf{A}_{2}$ and $\mathbf{C}_{2,1}$ corresponds
to `If $P\left(z\right)\neq0$, then there exists $z'\in\mathbb{C}$
such that $\left|P\left(z'\right)\right|<\left|P\left(z\right)\right|$'; 
\item $\mathbf{S}_{2,2}:\mathbf{A}_{2,2}\implies\mathbf{C}_{2,2}$, where
$\mathbf{A}_{2,2}=\mathbf{A}_{2,1}\cup\left\{ \mathbf{C}_{2,1}\right\} $
and $\mathbf{C}_{2,2}$ corresponds to `If $\left|P\right|$ has a
nonzero minimum on $\mathbb{C}$, then we have a contradiction';
\item $\mathbf{S}_{2,3}:\mathbf{A}_{2,3}\implies\mathbf{C}_{2,3}$, where
$\mathbf{A}_{2,3}=\mathbf{A}_{2}\cup\left\{ \mathbf{C}_{2,2}\right\} $
and $\mathbf{C}_{2,3}=\mathbf{C}_{2}$.
\end{itemize}
\begin{figure}
\begin{center}
\resizebox{0.9\textwidth}{!}
{\begin{tikzpicture}[nodes={draw, rectangle}, ->]   \node [align=left,label=left:$T$] (a) at (0,0) {     \hspace{28mm} $\alpha \implies \gamma$\\     $A_* = \alpha$ : $P$ is a complex polynomial of degree $\ge$ 1\\     $C_* = \gamma$ : $\exists z \in \mathbb{C}$ s.t. $P(z) = 0$};   \node [align=left,label=left:$S_1$] (b) at (-6,-4) {     $A_1 = A_*$\\     $C_1$ : $\exists M,R > 0$ s.t.\\     \hspace{5mm} $\left|P(z)\right| \ge M$ for all $\left| z \right| \ge R$   };   \node [align=left,label=left:$S_2$] (c) at (-0,-4) {     $A_2 = A_*$\\     $C_2$ : $\left|P\right|$ does not have\\     \hspace{5mm} a nonzero minimum on $\mathbb{C}$.   };     \node [align=left,label=left:$S_3$] (d) at (6,-4) {       $A_3 = \{A_*, C_1, C_2 \}$\\       $C_3 = C_*$     };     \node [align=left,label=left:$S_{2,1}$] (e) at (-6,-8) {       $A_{2,1} = A_*$\\       $C_{2,1}$ : If $\left|P(z) \ne 0\right|$\\       \hspace{2mm} $\exists z'$ s.t. $\left|P(z')\right| < \left|P(z)\right|$\\            };     \node [align=left,label=left:$S_{2,2}$] (f) at (-0,-8) {       $A_{2,2} = \{A_2, C_{2,1}\}$\\       $C_{2,2}$ : if $\left|P\right|$ has a\\       \hspace{2mm} non-zero minimum,\\       \hspace{2mm} we have a contradiction.     };     \node [align=left,label=left:$S_{2,3}$] (g) at (6,-8) {       $A_{2,3} = \{A_2,C_{2,2}\}$\\       $C_{2,3} = C_2$ : $\left|P\right|$ does not have\\       \hspace{2mm} a non-zero minimum\\       \hspace{2mm} on $\mathbb{C}$.     };     \draw[->] (a) to (b);     \draw[->] (a) to (c);     \draw[->] (a) to (d);     \draw [->] (d.west) to [out=120,in=30] (b.north east);     \draw [->] (d.west) to [out=120,in=30] (c.east);     \draw[->] (c) to (e);     \draw[->] (c) to (f);     \draw[->] (c) to (g);     \draw [->] (g.west) to [out=120,in=30] (f.east);     \draw [->] (f.west) to [out=120,in=30] (e.east); \end{tikzpicture}}\end{center}

\caption{Proof by contradiction of the Fundamental Theorem of Algebra. Straight
arrows denote importation of assumptions, while curved arrows denote
importation of conclusions.}
\end{figure}
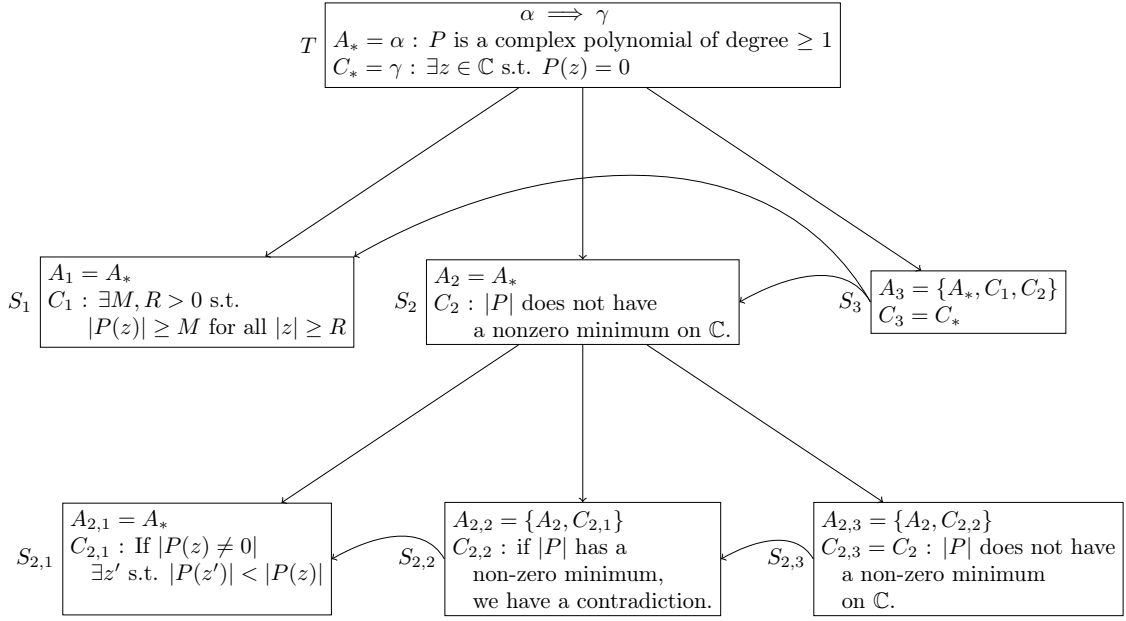

Ultimately, in the above format, proofs by contradictions must explicitly
carry their `wrong assumption' (i.e. the negation of the conclusion)
in the conclusion part of the statements: if we wish to assume the
negation $\neg\mathbf{C}_{*}$ of the conclusion $\mathbf{C}_{*}$
to arrive at a contradiction, this will involve substatements $\mathbf{S}:\mathbf{A}\implies\mathbf{C}$,
where $\mathbf{C}$ will be of the form `if $\neg\mathbf{C}_{*}$
then ...'. While this makes the proofs by contradiction heavier in
notation, this makes individual statements easier to verify: if the
proof is correct, the conclusion of each statement is correct (and
not contingent on an assumption which itself is wrong).

\subsection{\label{subsec:inverse-function-theorem-proof}Inverse Function Theorem
Proof}

A richer example of proof is that of the inverse function theorem. 

In this case, we have:
\begin{itemize}
\item $\alpha$: Let $U\subset\mathbb{R}^{n}$ be an open set and let $f:U\to\mathbb{R}^{n}$
be a function that is $\mathcal{C}^{1}$ with derivative $x\mapsto Df|_{x}$.
Let $x_{*}\in U$ be a point such that $Df|_{x_{*}}$ is an invertible
matrix. 
\item $\gamma$: There exists an open neighborhood $V\subset U$ of $x_{*}$
and an open neighborhood $W$ of $f\left(x_{*}\right)$ such that
$f|_{V}:V\to W$ is a bijection from $V$ to $W$, and an inverse
$\left(f|_{V}\right)^{-1}:W\to U$ that is differentiable at $f\left(x_{*}\right)$
with derivative $\left(Df|_{x_{*}}\right)^{-1}$.
\end{itemize}
In this case, the root of the claim of proof is:
\begin{itemize}
\item $\mathbf{S}_{*}:\mathbf{A}_{*}\implies\mathbf{C}_{*}$, where $\mathbf{A}_{*}=\mathbf{A}_{0}\cup\left\{ \alpha\right\} $
and $\mathbf{C}_{*}=\gamma$. 
\end{itemize}
Informally, we first argue that without loss of generality, we may
assume that $\text{ }x_{*}=0,f\left(x_{*}\right)=0,Df\big|_{x_{*}}=\mathrm{Id}_{n}$.
In this case, the nodes at distance $1$ from the root are: 
\begin{itemize}
\item \textbf{$\mathbf{S}_{1}:\mathbf{A}_{1}\implies\mathbf{C}_{1}$, }where\textbf{
$\mathbf{A}_{1}=\mathbf{A}_{*}$ }and\textbf{ $\mathbf{C}_{1}=\left(\alpha_{1}\implies\gamma\right)$}
with 
\[
\alpha_{1}=\left\{ x_{*}=0,f\left(x_{*}\right)=0,Df\big|_{x_{*}}=\mathrm{Id}_{n}\right\} .
\]
\item $\mathbf{S}_{2}:\mathbf{A}_{2}\implies\mathbf{C}_{2}$, where $\mathbf{A}_{2}:\mathbf{A}_{1}\cup\left\{ \mathbf{C}_{1}\right\} $
and $\mathbf{C}_{2}=\mathbf{C}_{*}$. 
\end{itemize}
If we go further into the details of why $\mathbf{S}_{1}$ holds true,
we find (at distance $2$ from the root):
\begin{itemize}
\item $\mathbf{S}_{1,1}:\mathbf{A}_{1,1}\implies\mathbf{C}_{1,1}$, where
$\mathbf{A}_{1,1}=\mathbf{A}_{1}$, and, denoting by $\|\cdot\|_{\mathcal{M}_{n}}$
the operator norm on $n\times n$ matrices, $\mathbf{C}_{1,1}=\left(\alpha_{1}\implies\gamma_{1,1}\right),$
with $\gamma_{1,1}$ 
\[
\gamma_{1,1}=\left\{ \exists r>0\text{ with \ensuremath{\|}Df\ensuremath{\big|_{x}}-\ensuremath{\mathrm{Id}_{n\times n}\|_{\mathcal{M}_{n}}\leq\frac{1}{2}} \ensuremath{\forall x}\ensuremath{\in}B\ensuremath{\left(0,r\right)}}\right\} .
\]
\item $\mathbf{S}_{1,2}:\mathbf{A}_{1,2}\implies\mathbf{C}_{1,2}$, where
$\mathbf{A}_{1,2}=\mathbf{A}_{1}\cup\left\{ \mathbf{C}_{1,1}\right\} $
and $\mathbf{C}_{1,2}=\left(\alpha_{1}\implies\gamma_{1,2}\right)$,
with
\[
\gamma_{1,2}=\left\{ \exists r>0\text{ such that }\forall y\in\mathbb{R}^{n},\text{ the function }x\mapsto x+y-f\left(x\right)\text{ is }\frac{1}{2}-\mathrm{Lipschitz\text{ on }}B\left(0,r\right)\right\} ,
\]
\item $\mathbf{S}_{1,3}:\mathbf{A}_{1,3}\implies\mathbf{C}_{1,3}$, where
$\mathbf{A}_{1,3}=\mathbf{A}_{1}\cup\left\{ \mathbf{C}_{1,2}\right\} $
and $\mathbf{C}_{1,3}=\left(\alpha_{1}\implies\gamma_{1,3}\right)$,
with
\[
\gamma_{1,3}=\left\{ \exists r>0:\forall y\in B\left(0,r/2\right):\exists!x\in B\left(0,r\right)\text{ such that }f\left(x\right)=y\right\} .
\]
\item $\mathbf{S}_{1,4}:\mathbf{A}_{1,4}\implies\mathbf{C}_{1,4}$, where
$\mathbf{A}_{1,4}=\mathbf{A}_{1}\cup\left\{ \mathbf{C}_{1,3}\right\} $
and $\mathbf{C}_{1,4}=\left(\alpha_{1}\implies\gamma_{1,4}\right)$,
with 
\[
\gamma_{1,4}=\left\{ \exists U,V\text{ open neighborhood of }0\text{ such that }f\text{ is a bijection }U\to V\right\} .
\]
\item $\mathbf{S}_{1,5}:\mathbf{A}_{1,5}\implies\mathbf{C}_{1,5}$, where
$\mathbf{A}_{1,5}=\mathbf{A}_{1}\cup\left\{ \mathbf{C}_{1,4}\right\} $
and $\mathbf{C}_{1,5}=\left(\alpha_{1}\implies\gamma_{1,5}\right)$,
with 
\[
\gamma_{1,5}=\gamma_{1,4}\cap\left\{ \exists f^{-1}:V\to U,\text{ }f^{-1}\text{ is the inverse of }f\text{ and }f^{-1}\text{ is differentiable at }0\text{ with differential }\mathrm{Id}_{n\times n}\right\} .
\]
 
\item $\mathbf{S}_{1,6}:\mathbf{A}_{1,6}\implies\mathbf{C}_{1,6}$, where
$\mathbf{A}_{1,6}=\mathbf{A}_{1}\cup\left\{ \mathbf{C}_{1,5}\right\} $
and $\mathbf{C}_{1,6}=\mathbf{C}_{1}$.
\end{itemize}
If we go further into the details of why $\mathbf{S}_{1,5}$ holds
true, we find (at distance $3$ from the root):
\begin{itemize}
\item $\mathbf{S}_{1,5,1}:\mathbf{A}_{1,5,1}\implies\mathbf{C}_{1,5,1}$,
where $\mathbf{A}_{1,5,1}=\mathbf{A}_{1,5}$ and $\mathbf{C}_{1,5,1}=\left(\alpha_{1}\implies\gamma_{1,5,1}\right)$,
with 
\[
\gamma_{1,5,1}=\gamma_{1,4}\cap\left\{ \text{if \ensuremath{h_{n}} is a seq. in \ensuremath{V\setminus\left\{ 0\right\} } with \ensuremath{h_{n}\to0}, we have }\|h_{n}\|/\|f\left(h_{n}\right)\|\to1\right\} .
\]
\item $\mathbf{S}_{1,5,2}:\mathbf{A}_{1,5,2}\implies\mathbf{C}_{1,5,2}$,
where $\mathbf{A}_{1,5,2}=\mathbf{A}_{1,5}\cup\left\{ \mathbf{C}_{1,5,1}\right\} $
and $\mathbf{C}_{1,5,2}=\left(\alpha_{1}\implies\gamma_{1,5,2}\right)$,
with 
\[
\gamma_{1,5,2}=\gamma_{1,5,1}\cap\left\{ \text{if \ensuremath{h_{n}} is a seq. in \ensuremath{V\setminus\left\{ 0\right\} } with \ensuremath{h_{n}\to0}, we have }\|f\left(h_{n}\right)-h_{n}\|/\|f\left(h_{n}\right)\|\to0\right\} .
\]
\item $\mathbf{S}_{1,5,3}:\mathbf{A}_{1,5,3}\implies\mathbf{C}_{1,5,3}$,
where $\mathbf{A}_{1,5,3}=\mathbf{A}_{1,5}\cup\left\{ \mathbf{C}_{1,5,2}\right\} $
and $\mathbf{C}_{1,5,3}=\left(\alpha_{1}\implies\gamma_{1,5,3}\right)$,
with 
\begin{align*}
\gamma_{1,5,3}= & \gamma_{1,5,2}\\
 & \cap\left\{ \exists f^{-1}:V\to U\text{ }f^{-1}\text{ is the inverse of }f_{\mid U}\right\} \\
 & \cap\left\{ \text{if }\text{\ensuremath{k_{n}} is a seq. in \ensuremath{V\setminus\left\{ 0\right\} } with \ensuremath{k_{n}\to0} we have \ensuremath{\|k_{n}-f{}^{-1}\left(k_{n}\right)\|/\|k_{n}\|\to0}}\right\} .
\end{align*}
\item $\mathbf{S}_{1,5,4}:\mathbf{A}_{1,5,4}\implies\mathbf{C}_{1,5,4}$,
where $\mathbf{A}_{1,5,4}=\mathbf{A}_{1,5}\cup\left\{ \mathbf{C}_{1,5,3}\right\} $
and $\mathbf{C}_{1,5,4}=\mathbf{C}_{1,5}$. 
\end{itemize}
\begin{rem}
As the above example reveals, the context of a proof needs to be explicitly
carried from statement to statement; a good concrete implementation
of the claim of proof format should facilitate this operation in the
writing of proofs.
\end{rem}

\newpage{}

\specialsection{\label{sec:appendix-game-theoretic-analysis}Appendix: Game-Theoretic
Analysis}

In this appendix, we give the proofs of the statements of Section
\ref{sec:a-simplified-equilibrium-analysis}.

\subsection{Proof of Proposition \ref{prop:game}}

\label{eco:proof:prop1} First, note that the expected payoff of Claimer
is increasing in $P$. Hence, for any given anticipated actions of
Skeptic, if initially posting at some $P$ is optimal, then it is
also the case for any $P'>P$. Hence, the entry decision of the claimer
must be of the threshold form given in the Proposition.

Let $h_{e}$ and $h_{1}$ be the histories $\text{(Post, Challenge)}$
and $\text{(Post, Challenge, Reply)}$ respectively, and $\pi_{e}=\mathbb{P}_{h_{e}}(X=1)$,
$\pi_{1}=\mathbb{P}_{h_{1}}(X=1)$ Skeptic's beliefs for these histories.
Note that $\pi_{e}=\frac{1}{2}\left(1+\pi^{*}\right)$.

\subsubsection{Subgame equilibria at $h_{e}$}

\label{eco:sec:subgame} Let us check when (Reply, No Challenge) is
an equilibrium of the subgame at $h_{e}$. In this scenario, Claimer
always replies so observing Reply has no informational content: $\pi_{e}=\pi_{1}$.
For No Challenge to be the best response, we need:

\begin{eqnarray}
-\beta_{1} & \geq & \pi_{1}(-\beta_{1}-\beta_{0})+(1-\pi_{1})(\sigma_{2}+\sigma_{1})\nonumber \\
\text{i.e.}\ \pi_{1} & \geq & \pi_{1}^{*}\equiv\frac{\sigma_{2}+\sigma_{1}+\beta_{1}}{\sigma_{2}+\sigma_{1}+\beta_{1}+\beta_{0}}.\label{eq:eco:eq:condpi1s}
\end{eqnarray}
Because Reply is trivially the best response to No Challenge, we have
constructed an equilibrium of the subgame at $h_{e}$ as soon as $\pi_{1}=\pi_{e}=\frac{1}{2}\left(1+\pi^{*}\right)$
satisfies (\ref{eq:eco:eq:condpi1s}).

Now check when (Reply if $X=1$, Reply if $X=0$ w.p $p$, Challenge
w.p. $q_{1}$) is an equilibrium of the subgame at $h_{e}$. In this
scenario, Bayes' rule indicates that 
\begin{equation}
\pi_{1}=\frac{\pi_{e}}{\pi_{e}+(1-\pi_{e})p}.\label{eco:eq:pi1p}
\end{equation}
For Skeptic to be indifferent between challenging or not, we must
have equality in (\ref{eq:eco:eq:condpi1s}). Using (\ref{eco:eq:pi1p}),
we see that this implies: 
\begin{equation}
p=p(\pi_{e})\equiv\frac{\pi_{e}(1-\pi_{1}^{*})}{\pi_{1}^{*}(1-\pi_{e})}.\label{eco:eq:p}
\end{equation}

Since $p<1$, $\pi_{e}<\pi_{1}^{*}$. When $X=1$, it is trivially
optimal for  Claimer to reply. When $X=0$, she must be indifferent.
Not replying gives payoff $-\sigma_{2}$, while replying gives the
expected payoff $(1-q_{1})(B_{1}+\beta_{1})-q_{1}(\sigma_{2}+\sigma_{1})$.
This pins down the equilibrium value of $q_{1}$: 
\begin{equation}
q_{1}=\frac{B_{1}+\sigma_{2}+\beta_{1}}{B_{1}+\sigma_{2}+\beta_{1}+\sigma_{1}}.\label{eco:eq:q1}
\end{equation}
This completes the description of the equilibria of the subgame at
$h_{e}$. Indeed,  Skeptic cannot be expected to challenge with certainty,
for the best response of Claimer would be to never reply when $X=0$,
which in turn would make systematic challenge suboptimal.

Hence, we have characterized equilibrium expected profits at $h_{e}$: 
\begin{itemize}
\item[-] if $\pi_{e}\geq\pi_{1}^{*}$: $(B_{1}+\beta_{1},-\beta_{1})$ 
\item[-] if $\pi_{e}<\pi_{1}^{*}$: 
\begin{itemize}
\item[-] $X=1$: $(q_{1}(B_{0}+\beta_{1}+\beta_{0})+(1-q_{1})(B_{1}+\beta_{1}),-q_{1}(\beta_{1}+\beta_{0})-(1-q_{1})\beta_{1})$ 
\item[-] $X=0$, $(-\sigma_{2},(1-p)\sigma_{2}+p(-\beta_{1}))$. 
\end{itemize}
\end{itemize}

\subsubsection{Type 1 Equilibria}

For such an equilibrium to exist, we must have 
\begin{eqnarray}
\pi_{e} & \equiv & \frac{1}{2}\left(1+\pi^{*}\right)<\pi_{1}^{*}\label{eco:eq:condpie}\\
0 & < & \phi(p_{e})\equiv\pi_{e}(-q_{1}(\beta_{1}+\beta_{0})-(1-q_{1})\beta_{1})+(1-\pi_{e})((1-p(\pi_{e}))\sigma_{2}-p(\pi_{e})\beta_{1}).\label{eco:eq:condtype1}
\end{eqnarray}

These conditions, obtained from the results of Section \ref{eco:sec:subgame},
ensure that Skeptic has a positive continuation value after Claimer
posts $C$. Indeed, his expected payoff at node $h_{e}$ is positive.
They are sufficient to guarantee an equilibrium of the Type 1 exists,
as soon as Claimer is indeed willing to post if and only if $P\geq\pi^{*}$.
That is, we have an indifference condition at $P=\pi^{*}$, where
the expected profit of posting, $\pi^{*}(q_{1}(B_{0}+\beta_{1}+\beta_{0})+(1-q_{1})(B_{1}+\beta_{1}))-(1-\pi^{*})\sigma_{2}$,
must equate 0, the profit of not posting. Hence 
\begin{equation}
\pi^{*}=\frac{\sigma_{2}}{q_{1}(B_{0}+\beta_{1}+\beta_{0})+(1-q_{1})(B_{1}+\beta_{1})+\sigma_{2}}.\label{eco:eq:condpistareq1}
\end{equation}

The ``$1-\pi_{e}$'' term in the denominator of $p(\pi_{e})$ cancels
out with the ``$1-\pi_{e}$'' term in (\ref{eco:eq:condtype1}),
so that the function $\phi$ is in fact linear. Moreover $\phi(\pi_{1}^{*})<0$
as the only positive term of $\phi$, $(1-p(\pi_{e}))\sigma_{2}$,
vanishes. In particular, if $\phi(0)<0$, a Type 1 equilibrium cannot
exist. The properties of $\phi$ will also be important to characterize
Type 2 equilibria.

A Type 1 equilibrium exists if and only if conditions (\ref{eco:eq:condpie}),
(\ref{eco:eq:condtype1}) and (\ref{eco:eq:condpistareq1}) are simultaneously
satisfied.

\subsubsection{Type 2 Equilibria}

For such an equilibrium to exist, Skeptic must be indifferent between
challenging the initial claim or not. Hence, we must have 
\begin{eqnarray}
\pi_{e} & \equiv & \frac{1}{2}\left(1+\pi^{*}\right)<\pi_{1}^{*}\label{eco:eq:condpieeq2}\\
0 & = & \phi(\pi_{e}).\label{eco:eq:skepticindiff}
\end{eqnarray}
If (\ref{eco:eq:condpieeq2}) is not satisfied, Skeptic makes a negative
profit by continuing because he cannot challenge back upon reply of
Claimer. Hence, he cannot be indifferent between challenging the initial
claim or not. 
Equation (\ref{eco:eq:skepticindiff}) writes down explicitly the
payoff of replying when (\ref{eco:eq:condpieeq2}) holds. (\ref{eco:eq:skepticindiff})
has a valid root if and only if 
\begin{equation}
\phi(0)\geq0.\label{eco:eq:condphi0eq2}
\end{equation}

Claimer should also be indifferent between posting $C$ and not posting
when $P=\pi^{*}$. That is, her expected payoff of posting, $(1-q_{2})B_{2}+q_{2}(\pi^{*}(q_{1}(B_{0}+\beta_{1}+\beta_{0})+(1-q_{1})(B_{1}+\beta_{1}))-(1-\pi^{*})\sigma_{2})$,
should be 0. This gives: 
\begin{equation}
q_{2}=\frac{B_{2}}{B_{2}-\pi^{*}(q_{1}(B_{0}+\beta_{1}+\beta_{0})+(1-q_{1})(B_{1}+\beta_{1}))+(1-\pi^{*})\sigma_{2}}.\label{eco:eq:condq0eq2}
\end{equation}

A Type 2 equilibrium exists if and only if conditions (\ref{eco:eq:condpieeq2}),
(\ref{eco:eq:condphi0eq2}) and (\ref{eco:eq:condq0eq2}) are simultaneously
satisfied, with $0<q_{2}<1$.

\subsubsection{Type 3 Equilibria}

From the previous analysis, it is now clear that if $\frac{1}{2}=\pi_{e}(\pi^{*}=0)\geq\pi_{1}^{*}$
or $\frac{1}{2}<\pi_{1}^{*}$ but $\phi(0)<0$ then we have a type
3 equilibrium: Claimer always enters the game and Skeptic never challenges.

\subsubsection{Existence and Uniqueness}

If $\frac{1}{2}\geq\pi_{1}^{*}$, a Type 3 equilibrium exists and
no equilibrium of Type 1 or Type 2 can exist. From now on, assume
$\frac{1}{2}<\pi_{1}^{*}$, and successively (i) $\phi(0)<0$ and
(ii) $\phi(0)\geq0$.

Case (i): we know that a Type 3 equilibrium exists and we have seen
that no equilibrium of Type 1 or Type 2 can exist. (That $\phi(0)<0$
indicates that Skeptic does not want to challenge even under the worst
possible belief about the correctness of $C$. Hence, for any belief
$\pi^{*}$ about the posting threshold, Skeptic would also find it
optimal not to challenge.)

Case (ii): we know that no equilibrium of Type 3 exists. Assume a
Type 2 equilibrium exists, characterized by, say, $\pi_{T2}^{*}$
and $q_{2,T2}$. Recall that we have 
\begin{equation}
0=(1-q_{2,T2})\underbrace{B_{2}}_{>0}+q_{2,T2}\underbrace{(\pi_{T2}^{*}(q_{1}(B_{0}+\beta_{1}+\beta_{0})+(1-q_{1})(B_{1}+\beta_{1}))-(1-\pi_{T2}^{*})\sigma_{2})}_{\text{payoff of Claimer if she is challenged}}
\end{equation}
with $0<q_{2,T2}<1$. This implies that Claimer expects a negative
profit conditional on being challenged at $\pi_{T2}^{*}$. In particular,
if a Type 1 equilibrium were to exist, it would need to feature $\pi^{*}=\pi_{T1}^{*}>\pi_{T2}^{*}$.
But $\phi$ decreases (the incentives of Skeptic to challenge decrease
with the probability that Claimer is right), so $\phi\left(\pi_{T1}^{*}\right)<\phi\left(\pi_{T2}^{*}\right)=0$
and Skeptic has no incentive to challenge, so that one cannot construct
a Type 1 equilibrium.

At this stage, we have seen that the different types of equilibria
are mutually exclusive. To show that there is always one, remark that
if Type 2 and Type 3 equilibria do not exist, then $\frac{1}{2}<\pi_{1}^{*}$,
$\phi(0)\geq0$ but there is no value of $q_{2}\in(0,1)$ such that
(\ref{eco:eq:condq0eq2}) holds. This means that at the unique root
$\pi$ of $\phi$ over $[0,\pi_{1}^{*}]$, for all $q_{2}\in[0,1]$,
the expected payoff of posting is non-negative. In particular, this
holds at $q_{2}=1$: at $\pi$ (meaning: when $P=\pi$ and under the
belief that Claimer posts if and only if $P\geq\pi$), Claimer is
willing to post even conditional on Skeptic always challenging, and
Skeptic is indifferent between challenging or not. As the candidate
$\pi^{*}$ decreases away from $\pi$, the incentives to Challenge
increase, and the expected payoff of Claimer decrease at $\pi^{*}$.
Hence, if we define $\pi^{*}$ as the infimum of the $\pi$ such that
Claimer is willing to post even conditional on  Skeptic always challenging
(a bounded, non-empty set from what we saw above), we have at $\pi^{*}$
that  Claimer is indifferent between posting or not, and that Skeptic
will always challenge: we have constructed a Type 1 equilibrium.

\subsection{Proof of Proposition \ref{prop:prob-extraction}}

We will need the following:
\begin{lem}
\label{eco:lemma:temp} The probabilities that Claimer enters the
game conditional on the claim of proof being correct (resp. incorrect)
are 
\begin{eqnarray}
\mathbb{P}(P\geq\pi^{*}\lvert X=1) & = & \left(1+\pi^{*}\right)\left(1-\pi^{*}\right)\\
\mathbb{P}(P\geq\pi^{*}\lvert X=0) & = & \left(1-\pi^{*}\right)^{2}.
\end{eqnarray}
\end{lem}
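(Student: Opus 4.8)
The plan is to reduce everything to the joint law of $(P,X)$, which is completely pinned down by the hypotheses of the model setup. The key observation is that $X$ is $\{0,1\}$-valued, so its conditional law given $P$ is Bernoulli, and a Bernoulli variable is determined by its mean. Hence the assumption $\mathbb{E}[X\mid P]=P$ is equivalent to $\mathbb{P}(X=1\mid P)=P$ and $\mathbb{P}(X=0\mid P)=1-P$ almost surely. Together with $P\sim\mathrm{Unif}[0,1]$ this fixes the joint distribution and therefore every conditional probability we need; in particular the auxiliary construction $X=\mathbf{1}_{\{U\le P\}}$ with $U$ an independent copy of $P$ is merely one convenient realization of this law, and we need not invoke it. Recall also that in the Type 1 equilibrium of Proposition \ref{prop:game} the event ``Claimer enters the game'' is exactly $\{P\ge\pi^{*}\}$, so the two quantities in the statement are $\mathbb{P}(P\ge\pi^{*}\mid X=1)$ and $\mathbb{P}(P\ge\pi^{*}\mid X=0)$.

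First I would record the two marginals by conditioning on $P$ and integrating against its uniform density on $[0,1]$:
\[
\mathbb{P}(X=1)=\int_0^1 \mathbb{P}(X=1\mid P=p)\,dp=\int_0^1 p\,dp=\frac{1}{2},
\]
and consequently $\mathbb{P}(X=0)=\frac{1}{2}$. Next I would compute the two joint probabilities over the entry region $\{P\ge\pi^{*}\}$ in the same manner:
\[
\mathbb{P}(P\ge\pi^{*},\,X=1)=\int_{\pi^{*}}^1 p\,dp=\frac{1}{2}\bigl(1-(\pi^{*})^{2}\bigr)=\frac{1}{2}\left(1-\pi^{*}\right)\left(1+\pi^{*}\right),
\]
\[
\mathbb{P}(P\ge\pi^{*},\,X=0)=\int_{\pi^{*}}^1 (1-p)\,dp=\frac{1}{2}\left(1-\pi^{*}\right)^{2}.
\]

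Finally, dividing each joint probability by the corresponding marginal $\frac{1}{2}$ gives $\mathbb{P}(P\ge\pi^{*}\mid X=1)=\left(1-\pi^{*}\right)\left(1+\pi^{*}\right)$ and $\mathbb{P}(P\ge\pi^{*}\mid X=0)=\left(1-\pi^{*}\right)^{2}$, as claimed. I do not expect any genuine obstacle: the only conceptual step is the reduction of $\mathbb{E}[X\mid P]=P$ to $\mathbb{P}(X=1\mid P)=P$ via the Bernoulli structure of $X$, after which the argument is a pair of elementary integrals. The one point worth stating explicitly for rigor is that $\pi^{*}$ is a deterministic constant (a function of $\Theta$ only, per Proposition \ref{prop:game}), so that $\{P\ge\pi^{*}\}$ is a genuine event in the sigma-algebra generated by $P$ and the conditioning above is well defined.
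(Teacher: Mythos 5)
Your proof is correct and follows essentially the same route as the paper: both arguments come down to Bayes' rule combined with $\mathbb{E}[X\mid P]=P$ and the uniform law of $P$, your integrals $\int_{\pi^*}^{1}p\,dp$ and $\int_{\pi^*}^{1}(1-p)\,dp$ being exactly the quantities the paper obtains via the tower property as $\mathbb{E}[P\mid P\ge\pi^*]\,\mathbb{P}(P\ge\pi^*)$ and its complement. Your explicit Bernoulli reduction $\mathbb{P}(X=1\mid P)=P$ and the worked-out $X=0$ case (which the paper dismisses as ``similar arguments'') are harmless elaborations, not a different method.
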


\begin{proof}
From Bayes' formula, 
\begin{equation}
\mathbb{P}(P\geq\pi^{*}\lvert X=1)=\frac{\mathbb{E}[X\lvert P\geq\pi^{*}]\mathbb{P}(P\geq\pi^{*})}{\mathbb{P}(X=1)}.
\end{equation}
Since $\mathbb{E}[X\lvert P]=P$ and $P$ is uniformly distributed,
\begin{equation}
\mathbb{E}[X\lvert P\geq\pi^{*}]=\mathbb{E}\left[\mathbb{E}[X\lvert P]\lvert P\geq\pi^{*}\right]=\mathbb{E}[P\lvert P\geq\pi^{*}]=\frac{1+\pi^{*}}{2},
\end{equation}
and $\mathbb{P}(P\geq\pi^{*})=1-\pi^{*}$, $\mathbb{P}(X=1)=\frac{1}{2}$,
which yields the first equality. The second one is obtained using
similar arguments. 
\end{proof}
We are now in a position to prove the results relative to Type 1 equilibria.
We first compute the probabilities that a claim of proof is accepted
given that it is correct/incorrect. Since a correct claim of proof
is accepted if and only if Claimer posts, $\mathbb{P}(\mathbf{A}\lvert X=1)=\mathbb{P}(P\geq\pi^{*}\lvert X=1)$,
the value of which is given in Lemma \ref{eco:lemma:temp}. An incorrect
claim of proof is accepted if and only if Claimer posts, then replies
and Skeptic does not challenge at the last step. Hence this has probability
$\mathbb{P}(\mathbf{A}\lvert X=0)=\mathbb{P}(P\geq\pi^{*},\mathbf{R},\mathbf{Q_{0}}^{c}\lvert X=0)$.
Using the fact that $\mathbb{P}(P\geq\pi^{*},\mathbf{R},\mathbf{Q_{0}}^{c}\lvert X=0)=\mathbb{P}(P\geq\pi^{*}\lvert X=0)\mathbb{P}(\mathbf{R})\mathbb{P}(\mathbf{Q_{0}}^{c})$
and using Lemma \ref{eco:lemma:temp}, we obtain the formula for $\mathbb{P}(\mathbf{A}\lvert X=0)$.

Note that the probability that a proof is true is $\frac{1}{2}$.
Hence 
\begin{eqnarray}
\mathbb{P}(\mathbf{A},X=1) & = & \frac{1}{2}\mathbb{P}(\mathbf{A}\lvert X=1)\\
\mathbb{P}(\mathbf{A}) & = & \frac{1}{2}\left(\mathbb{P}(\mathbf{A}\lvert X=1)+\mathbb{P}(\mathbf{A}\lvert X=0)\right).
\end{eqnarray}
This yields the result for the probability that a claim of proof is
accepted and true as well as accepted.

The probabilities that a proof is true given that it is accepted/rejected
are obtained by applying 
\begin{eqnarray}
\mathbb{P}(X=1\lvert\mathbf{A}) & = & \frac{\mathbb{P}(\mathbf{A},X=1)}{\mathbb{P}(\mathbf{A})}\\
\mathbb{P}(X=1\lvert\mathbf{A}^{c}) & = & \frac{\left(1-\mathbb{P}(\mathbf{A}\lvert X=1)\right)\mathbb{P}(X=1)}{1-\mathbb{P}(\mathbf{A})}.
\end{eqnarray}

Finally, in a Type 1 equilibrium, Skeptic always challenges at the
first step, so that $\mathbb{P}(\mathbf{A}_{2}\lvert\mathbf{A},X=1)=0$.
Moreover, if $X=1$, the proof finishes after the Reply of Claimer
if and only if Skeptic renounces to challenge. This occurs with probability
$1-q_{1}$, hence $\mathbb{P}(\mathbf{A}_{1}\lvert\mathbf{A},X=1)=1-q_{1}$.

Using similar computations, one can obtain these event probabilities
in the case of a Type 2 equilibrium. Specifically, in a Type 2 equilibrium: 
\begin{itemize}
\item The probabilities that a claim of proof is accepted (resp. accepted
and true) are 
\begin{eqnarray}
\mathbb{P}(\mathbf{A}) & = & \frac{1}{2}\left(1+\pi^{*}\right)\left(1-\pi^{*}\right)+\frac{1}{2}\left(1-\pi^{*}\right)^{2}\left(1-q_{2}+q_{2}p(1-q_{1})\right)\\
\mathbb{P}(\mathbf{A},X=1) & = & \frac{1}{2}\left(1+\pi^{*}\right)\left(1-\pi^{*}\right).
\end{eqnarray}
\item The probabilities that a claim of proof is accepted given that it
is true (resp. false) are 
\begin{eqnarray}
\mathbb{P}(\mathbf{A}\lvert X=1) & = & \left(1+\pi^{*}\right)\left(1-\pi^{*}\right)\\
\mathbb{P}(\mathbf{A}\lvert X=0) & = & \left(1-\pi^{*}\right)^{2}(1-q_{2}+q_{2}p(1-q_{1})).
\end{eqnarray}
\item The probabilities that a claim of proof is true given that it is accepted
(resp. rejected) are 
\begin{eqnarray}
\mathbb{P}(X=1\lvert\mathbf{A}) & = & \frac{1+\pi^{*}}{1+\pi^{*}+\left(1-\pi^{*}\right)(1-q_{2}+q_{2}p(1-q_{1}))}\\
\mathbb{P}(X=1\lvert\mathbf{A}^{c}) & = & \frac{\pi^{*2}}{\pi^{*2}+1-\left(1-\pi^{*}\right)^{2}(1-q_{2}+q_{2}p(1-q_{1}))}.
\end{eqnarray}
\item The probabilities that a claim of proof is accepted at level 0 (resp.
1) given that it is accepted and true are: 
\begin{eqnarray}
\mathbb{P}(\mathbf{A}_{2}\lvert\mathbf{A},X=1) & = & 1-q_{2}\\
\mathbb{P}(\mathbf{A}_{1}\lvert\mathbf{A},X=1) & = & q_{2}(1-q_{1}).
\end{eqnarray}
The computations of the probabilities for an equilibrium of Type 3
are trivial and omitted.
\end{itemize}


\begin{thebibliography}{Arbitrum18}
\bibitem[AiZi10]{aigner-ziegler}M. Aigner, G.M. Ziegler. Proofs from
THE BOOK, 4th Edition, Springer, 2010. 

\bibitem[Aker70]{akerlof}G. Akerlof. The Market for `Lemons': Quality
Uncertainty and the Market Mechanism, Quarterly Journal of Economics,
84(3):488--500, 1970.

\bibitem[AnCh20]{angeris-chitra}G. Angeris, T. Chitra. Improved Price
Oracles: Constant Function Market Makers, Proceedings of the 2nd ACM
Conference on Advances in Financial Technologies, 2020.

\bibitem[AKCNC20]{angeris-kao-chiang-noyes-chitra}G. Angeris, H.-T.
Kao, R. Chiang, C. Noyes, T. Chitra. An Analysis of Uniswap Markets,
CES 2020. 

\bibitem[Anon94]{anonymous}Anonymous. The QED Manifesto. In Alan
Bundy, editor, CADE, Lecture Notes in Computer Science, 814:238--251,
Springer, 1994. 

\bibitem[ApHa77]{appel-haken}K. Appel, W. Haken. Every Planar Map
is Four Colorable. I. Discharging, Illinois Journal of Mathematics,
21(3):429--490, 1977.

\bibitem[AHK77]{appel-haken-koch}K. Appel, W. Haken, J. Koch. Every
Planar Map is Four Colorable. II. Reducibility, Illinois Journal of
Mathematics, 21(3):491--567, 1977.

\bibitem[Arbitrum18]{Arbitrum18}Arbitrum: scalable, private smart
contracts, SEC'18: Proceedings of the 27th USENIX Conference on Security
Symposium, 1353--1370, 2018. 

\bibitem[Augur]{augur} \href{https://www.augur.net/}{Augur}.

\bibitem[BCDPS13]{blocki-et-al}J. Blocki, N. Christin, A. Datta,
A.D. Procaccia, A. Sinha. Audit Games, arXiv:1303.0356, 2013.

\bibitem[Brei17]{breitman}A. Breitman. \href{https://www.youtube.com/watch?v=SbaVNkyGT3M}{The Politics of Algorithms},
TEDxSanFrancisco, 2017.

\bibitem[Buss98]{buss}S. Buss. An Introduction to Proof Theory, Chapt.
1 in Handbook of Proof Theory, Elsevier Science, 1998. 

\bibitem[Bute13]{buterin}V. Buterin. A Next Generation Smart Contract
and Decentralized Application Platform, 2013. 

\bibitem[Cast20]{castelvecchi}D. Castelvecchi. Mathematical Proof
that Rocker Number Theory Will be Published, News in Nature 580:177,
2020.

\bibitem[ChKr87]{cho-kreps}I.-K Cho and D. Kreps. Signaling Games
and Stable Equilibria, Quarterly Journal of Economics, 102(2):179--221,
1987.

\bibitem[Chur40]{church}A. Church. A Formulation of the Simple Theory
of Types. J. Symb. Log, 5(2):56--68, 1940. 

\bibitem[CrLi14]{croteau}M. Croteau and E. Litranab. Proof of Stake:
Definite. An Implementation of Constant Staking Rewards to Promote
Increased Network Activity, 2014. 

\bibitem[DeFi37]{de-finetti}B. de Finetti. Foresight: Its Logical
Laws, Its Subjective Sources. Studies in Subjective Probability, edited
by Henry E. Kyburg and Howard E.K. Smokler. Huntington, New York:
Roger E. Kreiger Publishing Co, 1937. 

\bibitem[Duff12]{duffie}D. Duffie. Dark Markets: Asset Pricing and
Information Transmission in Over-the-Counter Markets, Princeton University
Press, 2012. 

\bibitem[EbHe18]{eberhardt-heiss}J. Eberhardt, J. Heiss. Off-chaining
Models and Approaches to Off-chain Computations, SERIAL'18 Proceedings
of the 2nd Workshop on Scalable and Resilient Infrastructures for
Distributed LedgersDecember 7--12, 2018.

\bibitem[Fama70]{fama}E. Fama. Efficient Capital Markets: A Review
of Theory and Empirical Work, Journal of Finance, 25(2): 383--417,
1970.

\bibitem[FuTi91]{fudenberg-tirole}D. Fudenberg, J. Tirole. Perfect
Bayesian equilibrium and sequential equilibrium, Journal of Economic
Theory, 53(2):236-260, 1991.

\bibitem[GBCST16]{garrabrant-benson-tilsen-critch-soares-taylor-logical-induction}S.
Garrabrant, T. Benson-Tilsen, A Critch, N. Soares, J. Taylor. \href{https://intelligence.org/files/LogicalInduction.pdf}{Logical Induction},
2016.

\bibitem[Gont08]{gonthier}G. Gonthier. Formal Proof---The Four-Color
Theorem, Notices of the American Mathematical Society, 55(11):1382--1393,
2008.

\bibitem[GKN10]{grabowski-kornilowicz-naumowicz}A. Grabowski, A.
Kornilowicz, A. Naumowicz. Mizar in a Nutshell, Journal of Formalized
Reasoning, 3(2):153-245, 2010.

\bibitem[Hal+17]{hales}T. Hales, M. Adams, G. Bauer, D.T. Dang, T.L.
Hoang, C. Kaliszyk, V. Magron, S. McLaughlin, T.T. Nguyen, T.Q. Nguyen,
T. Nipkow, S. Obua, J. Pleso, J. Rute, A. Solovyev, A.H.T. Ta, T.N.Tran,
D.T. Trieu, J. Urban, K.K. Vu, R. Zumkeller. A Formal Proof of the
Kepler Conjecture, Forum of Mathematics, Pi. 5, 2017.

\bibitem[Hans03a]{hanson-i}R. Hanson. Combinatorial Information Market
Design, Information Systems Frontiers, 5:107--119, 2003.

\bibitem[Hans03b]{hanson-ii}R. Hanson. Logarithmic Market Scoring
Rules for Modular Combinatorial Information Aggregation, The Journal
of Prediction Markets 1(1), 2003. 

\bibitem[Ilya02]{ilyashenko}Y. Ilyashenko. Centennial History of
Hilbert\textquoteright s 16th Problem, Bulletin of the American Mathematical
Society, 39(3):301--354.

\bibitem[ICA18]{irving-christiano-amodei}G. Irving, P. Christiano,
D. Amodei. AI Safety via Debate, arXiv:1805.00899, 2018.

\bibitem[Korn60]{korner}S. Korner. Philosophy of Mathematics, Hutchinson,
1960.

\bibitem[Lamp95]{lamport-i}L. Lamport. How to Write a Proof, American
Mathematical Monthly, 102(7):600-608, 1995.

\bibitem[Lamp12]{lamport-ii}L. Lamport. How to Write a 21st Century
Proof, Journal of Fixed Point Theory and Applications volume, 11(3):2012.

\bibitem[Lipt10]{lipton}R. Lipton. The P=NP Question and Gödel\textquoteright s
Lost Letter, Springer US, 2010.

\bibitem[Muth61]{muth}J. Muth. Rational Expectations and the Theory
of Price Movements, Econometrica, 29(3):315--335, 1961. 

\bibitem[Naka08]{nakamoto}S. Nakamoto. Bitcoin: a Peer-to-Peer Electronic
Cash System, 2008.

\bibitem[Nash50]{nash}J. Nash. Non-Cooperative Games, Doctoral thesis,
Princeton University, 1950.

\bibitem[NBEMG16]{naranayan-bonneau-felten-miller-goldfeder}A. Narayanan,
J. Bonneau, E. Felten, A. Miller, S. Goldfeder. Bitcoin and Cryptocurrency
Technologies, Princeton University Press, 2016. 

\bibitem[NaGr06]{nasar-gruber}S. Nasar and D. Gruber. Manifold Destiny:
A Legendary Problem and the Battle over Who Solved It. The New Yorker,
August 21, 2006.

\bibitem[OFSA14]{obua-fleuriot-scott-aspinell}S. Obua, J. Fleuriot,
P. Scott, and D. Aspinall. ProofPeer: Collaborative Theorem Proving,
arXiv:1404.6186, 2014.

\bibitem[OptRollUp]{OptRollUp}Optimimistic Rollups, \href{https://docs.ethhub.io/ethereum-roadmap/layer-2-scaling/optimistic_rollups/}{https://docs.ethhub.io/ethereum-roadmap/layer-2-scaling/optimistic\_rollups/}. 

\bibitem[NPW02]{nipkow-paulson-wenzel}T. Nipkow, L. Paulson, M. Wenzel.
Isabelle/HOL -- A Proof Assistant for Higher-Order Logic, Springer,
2002. 

\bibitem[Sava71]{savage}L. T. Savage. Elicitation of Personal Probabilities
and Expectations, Journal of the American Statistical Association
66(336):783---801, 1971.

\bibitem[Solo64]{solomonoff}R. Solomonoff. A Formal Theory of Inductive
Inference. Part I., Information and Control 7(1):1--22, 1964.

\bibitem[Solo01]{solomon}R. Solomon. A Brief History of the Classification
of Finite Simple Groups, Bulletin of the American Mathematical Society,
38(3):315---352, 2001. 

\bibitem[Spen3]{spence}M. Spence. Job Market Signaling, Quarterly
Journal of Economics, 87(3):355--374, 1973.

\bibitem[Su18]{su}B. Su. MathCoin: A Blockchain Proposal that Helps
Verifying Mathematical Theorems in Public, Cryptology ePrint Archive,
Report 2018/271, 2018. 

\bibitem[Szpiro08]{szpiro}G. Szpiro. Poincaré's Prize: The Hundred-Year
Quest to Solve One of Math's Greatest Puzzles, Plume, 2008. 

\bibitem[Thur94]{thurston}W. P. Thurston. On Proof and Progress in
Mathematics, Bulletin of the American Mathematical Society, 30(2):161---177,
1994. 

\bibitem[Town79]{townsend}R. Townsend. Optimal contracts and competitive
markets with costly state verification, Journal of Economic Theory,
21(2):265--293, 1979.

\bibitem[TrueBit19]{TrueBit19} J. Teutsch, C. Reitwießner. A scalable
verification solution for blockchains, \href{https://arxiv.org/pdf/1908.04756.pdf}{arXiv:1908.04756},
2019. 

\bibitem[UrJa20]{urban-jakubuv}J. Urban, J. Jakub\r{u}v. First Neural
Conjecturing Datasets and Experiments, International Conference on
Intelligent Computer Mathematics, 315-323, 2020. 

\bibitem[VNMo44]{von-neumann-morgensten}J. von Neumann and O. Morgenstern.
Theory of Games and Economic Behavior, Princeton University Press,
1944.

\bibitem[WBKU19]{wang-brown-kaliszyk-urban}Q. Wang, C. Brown, C.
Kaliszyk, J. Urban. Exploration of Neural Machine Translation in Autoformalization
of Mathematics in Mizar, 9th ACM SIGPLAN International Conference
on Certified Programs and Proofs, arXiv:1912.02636.

\bibitem[Watk18]{watkins}M. Watkins proposed \href{http://empslocal.ex.ac.uk/people/staff/mrwatkin/zeta/RHproofs.htm}{(dis)proofs of the Riemann Hypothesis.}

\bibitem[Wenz02]{wenzel}M. Wenzel. Isabelle/Isar -- a versatile
environment for human-readable formal proof documents. PhD thesis,
Institut for Informatik, Technische Universität München, 2002. 

\bibitem[Whit16]{white}B. White. Qeditas: A Formal Library as a Bitcoin
Spin-Off, 2016.

\bibitem[Wied03]{wiedijk}F. Wiedijk. Formal Proof Sketches. In Wan
Fokkink and Jaco van de Pol, editors, 7th Dutch Proof Tools Days,
Program + Proceedings, 2003. 

\bibitem[Wied12]{wiedijk-ii}F. Wiedijk. A Synthesis of the Procedural
and Declarative Styles of Interactive Theorem Proving, Logical Methods
in Computer Science Vol 8(1:30):1--26, 2012. 

\bibitem[Wigd19]{widgerson}A. Widgerson. Mathematics and Computation:
A Theory Revolutionizing Technology and Science, Princeton University
Press, 2019.

\bibitem[Wiki21]{wikipedia-jacobian}Wikipedia. Jacobian conjecture.
Page Version Id 999091328. 
\end{thebibliography}
\end{document}